\newcommand{\RNum}[1]{\uppercase\expandafter{\romannumeral #1\relax}}
\newtheorem{prop}{Proposition}
\newtheorem{corollary}{Corollary}
\theoremstyle{thmstyleone}%
\newtheorem{theorem}{Theorem}
\theoremstyle{thmstyletwo}%
\newtheorem{example}{Example}%
\theoremstyle{thmstylethree}%
\begin{document}

\title[Article Title]{Dislocations and Fibrations: The Topological Structure of Knotted Smectic Defects}

\author[1]{\fnm{Paul G.} \sur{Severino}}

\author[1,2]{\fnm{Randall D.} \sur{Kamien}}

\author*[3,4]{\fnm{Benjamin} \sur{Bode}}\email{benjamin.bode@upm.es}

\affil[1]{\orgdiv{Department of Physics and Astronomy}, \orgname{University of Pennsylvania}, \orgaddress{\street{209 South 33rd St.}, \city{Philadelphia}, \state{Pennsylvania}, \postcode{19104}, \country{USA}}}

\affil[2]{\orgdiv{Department of Mathematics}, \orgname{University of Pennsylvania}, \orgaddress{\street{209 South 33rd St.}, \city{Philadelphia}, \state{Pennsylvania}, \postcode{19104}, \country{USA}}}

\affil[3]{\orgdiv{Institute of Mathematical Sciences (ICMAT)}, \orgname{Spanish National Research Council (CSIC)}, \orgaddress{\street{Calle Nicolás Cabrera 13-15, Campus Cantoblanco UAM}, \postcode{28049} \city{Madrid}, \country{Spain}}}

\affil[4]{\orgdiv{
Departamento de Matem\'atica Aplicada a la Ingenier\'ia Industrial}, \orgname{ETSIDI, Universidad Polit\'ecnica de Madrid}, \orgaddress{\street{Rda. de Valencia 3, 28012}, \postcode{28012} \city{Madrid}, \country{Spain}}}

\setstcolor{red}

\abstract{In this work, we investigate the topological properties of knotted defects in smectic liquid crystals.  Our story begins with screw dislocations, whose radial surface structure can be smoothly accommodated on $S^3$ for fibred knots by using the corresponding knot fibration.  To understand how a smectic texture may take on a screw dislocation in the shape of a knot without a fibration, we study first knotted edge defects.  Unlike screw defects, knotted edge dislocations force singular points in the system for any non-trivial knot.  We provide a lower bound on the number of such point defects required for a given edge dislocation knot and draw an analogy between the point defect structure of knotted edge dislocations and that of focal conic domains.  By showing that edge dislocations, too, are sensitive to knot fibredness, we reinterpret the so-called Morse-Novikov points required for non-fibred screw dislocation knots as analogous smectic defects.  Our methods are then applied to $+1/2$ and negative-charge disclinations in the smectic phase, furthering the analogy between knotted smectic defects and focal conic domains and uncovering an intricate relationship between point and line defects in smectic liquid crystals.  The connection between smectic defects and knot theory not only unravels the uniquely topological knotting of smectic defects but also provides a mathematical and experimental playground for modern questions in knot and Morse-Novikov theory.}

\maketitle

\tableofcontents

\section{Introduction}\label{sec1}

Line defects in three-dimensional materials, owing to their topological origin, must extend to the boundary of the sample, end on another defect, or close on themselves.  As such, closed-loop defects are often low energy excitations in an otherwise defect-free system.  In crystalline phases, for example, shear can proliferate dislocation loops via the Frank-Read mechanism\cite{Frank_Read}.  Dislocation loops also play an important role in smectic liquid crystals, moderating the transition to the nematic phase in models for the smectic-A to nematic transition\cite{Helfrich_Dislocation_Loop_Melting,Nelson_Toner_Smectic_Melting,Smectic_Melting_Experiment_Moreau_2006}.  Furthermore, elliptical disclination loops are, quite literally, at the core of focal conic domains in smectic and cholesteric liquid crystals.  The topological requirement that a lone defect return to itself does not specify, however, the path in which it should do so.  In other words, topological defects can, in principle, form knots and links.  Recently it has been recognized that knotted defect lines -- for example in nematic liquid crystals\cite{Copar_Colloids_Knotted_Defects,Machon_Knots_Review}, superfluid defects\cite{BEC_Knots,Kauffman_Superfluids}, Skyrmion models\cite{Skyrme_Knots}, and optical\cite{Optical_Knots} or fluid\cite{Moffatt_Fluid_Knots,Irvine_Fluid_Knots} vortices -- display novel mechanical and topological properties with respect to their unknotted counterparts.  Knotted defects are particularly accessible in nematic liquid crystals, where knotted disclination lines can be experimentally generated and probed using colloidal dimers\cite{Copar_Colloids_Knotted_Defects,copar_stability_nematic_colloids} or nonorientable surfaces with homeotropic anchoring\cite{Machin_Nonorientable_Boundary_Conditions}.  Despite the prevalence of knots in other ordered media, knotted dislocations have not been observed in smectic liquid crystals or crystalline phases at large.  With this in mind, we investigate the ability for defects in smectic liquid crystals -- systems with one-dimensional periodicity -- to knot.

Not only are knotted dislocations of interest as a fundamental excitation available to smectics and crystals, but knotted smectic defects are also knots in the pure mathematical sense.  A common tool for describing the extra topological character of knotted defects is knot theory\cite{Rolfsen_Knots_And_Links}; for example, knotted nematic disclinations carry an extra $\mathbb{Z}_n$ topological invariant, where $n$ is the Alexander polynomial of the knot evaluated at -1\cite{Machon_Alexander_Knots_1,Machon_Alexander_Knots_2,Machon_Knots_Review}.  In mathematical knot theory two knots are considered equivalent (\emph{i.e.}, of the same knot type) if they are smoothly isotopic, that is, one can be smoothly deformed into the other without any cutting, gluing, or passing the knot through itself. On the other hand, nematic disclinations and fluid vortices can, in principle, smoothly change their knot types over time. 
 Topological defects in ordered media are classified using homotopy theory: this approach assigns -- according to the symmetries of the medium, which determine the ground state manifold (GSM) for the system -- line defects in three dimensions to elements of $\pi_1(\text{GSM})$.  Two line defects in classes $\alpha,\beta\in\pi_1(\text{GSM})$ are topologically forbidden from passing through each other if and only if the commutator $\alpha\beta\alpha^{-1}\beta^{-1}$ is nontrivial\cite{Mermin_RevMod_Defects}.  A knotted defect, necessarily comprised of a single defect type, can therefore continuously undo its crossings and untie itself.  While knotted defects and vortices may experience an energetic cost to rewire their crossings, topology does not stabilize the knots as required in knot theory.  However, defects in smectic liquid crystals and crystals, due to the breaking of translational symmetry, succumb to extra topological restrictions beyond homotopy theory\cite{poenaru,Mermin_RevMod_Defects,chen_goldstone}.  For example, edge and screw defects, despite belonging to the same element of the fundamental group, can be topologically linked\cite{Mosna_Kamien_Linked_Dislocations}.  Further, the motion of edge dislocations via glide requires local melting of the crystalline order for the dislocation to pass through atomic or smectic layers\cite{Hocking_Kamien_Peierls_Nabarro}.  These arguments provide a topological barrier for smooth defect crossing of even a single defect type, allowing knotted defects deep in the smectic phase to be particularly amenable to the tools of knot theory.  

Using tools from knot and Morse-Novikov theory, we study the topological properties of knotted defects in the smectic phase.  We begin with knotted screw dislocations, whose purely-radial, twisting defect structure is achieved smoothly on $S^3$ using the mathematics of knot fibrations.  However, not all knots can be fibred in $S^3$.  This has important implications for screw dislocations, as one could imagine forcing the local defect structure of a screw dislocation along a knot that is not fibred.  By definition, the smectic layers around such a dislocation cannot smoothly fill the rest of space.  What types of singular structures would emerge in a smectic texture containing a screw dislocation in the shape of a non-fibred knot, and how would they be classified as smectic defects?  To attack this question, we turn first to edge dislocations.  Surprisingly, we find that edge dislocations cannot be tied into \emph{any} non-trivial knot\footnote{By non-trivial, we mean knots which are not isotopic to a simple planar loop, \emph{i.e.}, the unknot.} without creating extra defects in the system.  By studying smectic configurations with knotted edge dislocations as foliations containing Seifert surfaces, we prove, under generic assumptions about smectic layers, that a knotted edge dislocation necessitates a minimum of $4g$ point defects, where $g$ is the minimum-possible genus of a Seifert surface for the given knot.  The topological origin of these extra point defects is straightforward: Morse critical points are required to transform smectic layers from flat sheets at the boundaries to Seifert surfaces of necessarily non-zero genus.  These Morse-type point defects are common in smectic liquid crystals, as they lie at the core of focal conic domains.  We thus draw an analogy between the pairs of Morse-index 1 and 2 point defects required to knot an edge dislocation and the point defect structure of focal conic domains.  While the lower bound of $4g$ point defects holds for any knot, edge dislocations, too, are sensitive to fibrations -- only when the knot is fibred can the minimum number of point defects be achieved.  We explicitly provide a construction for smectic configurations with knotted edge dislocations and their associated Morse-type point defects for all knots, fibred as well as non-fibred.  The requirement of point defects and sensitivity to fibredness for knotted edge dislocations can be used to reinterpret the singular structures that emerge in screw dislocations in the shape of non-fibred knots.  Namely, we identify the so-called Morse-Novikov singularities required for singular fibrations of non-fibred knots as the exact same focal conic-like point defects required for knotted edge dislocations. To further this analogy, we prove similar results about knotted smectic disclinations.  For example, knotted $+1/2$ disclinations -- a topological generalization of focal conic domains to knotted defect cores -- require $2g+1$ point defects, where $2g$ point defects are attributed to the defect knottedness and $1$ point defect is the original focal conic domain defect.  The topological properties of knotted smectic defects uncovered in this work are unique amongst knots in other ordered media and represent the remarkable complexity of the smectic phase.  Further, our results bring about a deep connection between smectic defects and knot theory that provides both motivation for open mathematical questions and a potential platform to directly study Morse-Novikov points and knot fibrations experimentally.

The paper is organized as follows.  In Section \ref{sec:background} we review the basics of smectic liquid crystals, smectic defects, and focal conic domains.  We also provide background on the relevant knot theory, namely fibred knots and open book decompositions.  In Section \ref{sec:screw} we describe the construction of screw dislocations using knot fibrations on $S^3$, discuss how knotted screw dislocations may be realized in $\mathbb{R}^3$, and analyze the screw-like geometry of knot fibrations.  With the goal of describing smectic configurations containing screw dislocations in the shape of non-fibred knots, we switch gears in Section \ref{sec:edge_dislocations} and study knotted edge dislocations.  To study smectic configurations with knotted edge dislocations, we first introduce Morse points and how smectic phase fields may be studied as Morse functions on $S^3$.  By constructing the local character of edge dislocations from a pair Seifert surfaces, we then prove that a minimum of $4g$ point defects are required to knot an edge dislocation (Theorem \ref{thmedge}).  In Section \ref{sec:fibrednessANDdislocations} we unravel the relationship between knotted smectic defects and fibredness by showing that edge dislocations can only achieve the lower bound of $4g$ point defects when the knot is fibred (Theorem \ref{thm:fib_edge}).  Here we give explicit constructions for smectic configurations containing a knotted edge dislocation and its associated Morse critical points.  The common language of Heegaard splittings and cobordisms allows us to interpret the Morse-Novikov point defects required for non-fibred screw dislocations as the same focal conic-like point defects.  In Section \ref{sec:+12} we provide similar constructions for knotted smectic disclinations, characterizing the number of point defects required to knot $+1/2$ and negatively-charged disclinations (Theorems \ref{thmdisc} and \ref{thmnegdisc}).  Determining the number of point defects required for each knotted defect requires formulating the question in purely mathematical terms.  As such, certain results are presented in the form of proofs.  Explanations and interpretations are provided for each.

 \section{Background}\label{sec:background}

\subsection{Smectic Dislocations and Disclinations}\label{subsec:smectic_defects}

Smectic liquid crystals are phases that not only inherit the orientational order of their rod-like constituents but also form evenly spaced layers of codimension one.  In three dimensions these are surfaces wherein the nematic director $\boldsymbol{n}$ is parallel to the layer normal $\boldsymbol{N}$.  The completely flat, evenly spaced layers of the ground state are described by a density wave,

\begin{align}
    \rho(\boldsymbol{x})=\rho_0+\delta\rho \cos(\Phi),~~~~\Phi=\boldsymbol{k}\cdot\boldsymbol{x}+\phi_0,
\end{align}

where $\phi_0$ is the phase at the origin and $\boldsymbol{k}$ describes the orientation of the layer normal with respect to a fixed axis.  The layers sit on density maxima $\Phi=2\pi\mathbb{Z}$, but the level sets $\Phi\in\mathbb{R}$ also define `virtual layers' -- \emph{i.e.} the rest of the density wave -- that foliate $\mathbb{R}^3$ with flat sheets.

\begin{figure}
    \centering
    \includegraphics[scale=.475]{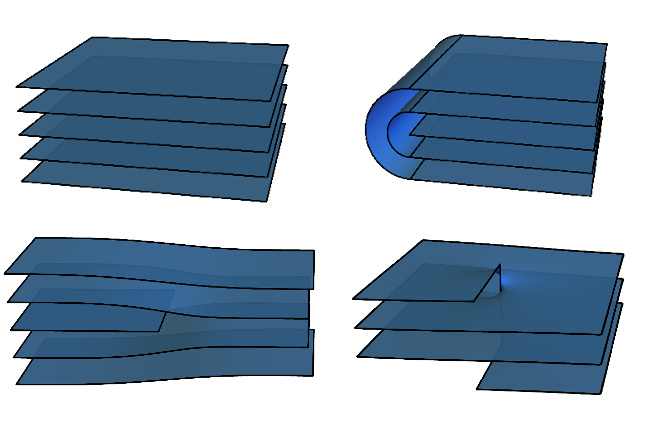}
    \caption{\label{Smectic_Examples}{Examples of smectic configurations.  Top left: smectic ground state of evenly spaced, flat layers.  Top right: $+1/2$ disclination line.  Bottom left: edge dislocation.  Bottom right: screw dislocation.  Only the density maxima (blue) of the smectic foliation are shown.}}
    \label{fig:Smectic_Examples}
\end{figure}

Three-dimensional smectics enjoy two types of topological line defects: disclinations and dislocations.  Disclinations are singular lines in the nematic director and layer normal.  The director field $\boldsymbol{n}$ is obtained from the phase field $\Phi$ away from defects via $\boldsymbol{n}=\boldsymbol{N}=p(\nabla\Phi)$, where $p:\mathbb{R}^3\backslash\{d\}\to\mathbb{RP}^2$ is the projection map $p(x,y,z)=[x:y:z]$ and $d$ is the defect set.  Disclinations in the smectic phase can be characterized by the number of $2\pi$ rotations of the director around the singular line.  For example, the disclination in Figure \ref{fig:Smectic_Examples} (top right) has `charge' $+1/2$.  Note that the definition of the director field in the smectic phase tells us immediately that $\boldsymbol{n}$ is only well-defined where the field $\Phi$ is well-defined and $\nabla\Phi$ is non-zero.  Singular lines in $\Phi$, dislocations, are thus singular in $\boldsymbol{n}$ as well.  In three dimensions dislocations come in two types, edge and screw, traditionally defined by the whether paths encircling the defect cause perpendicular (edge) or parallel (screw) displacement relative to the defect line\cite{Volterra1907,chaikin_and_lubensky}.  The smectic ground state, $+1/2$ disclination, edge dislocation, and screw dislocation are shown in Figure \ref{fig:Smectic_Examples}.

While, at the level of the phase and director fields, smectic disclinations and disclinations are singular lines around which $\Phi$ or $\nabla\Phi$ winds, not every phase field results in level sets that have the layered structure required of a smectic liquid crystal.  For example, disclinations in two-dimensional smectics cannot have charge $q>1$, as such winding in the director field is incompatible with smectic layers\cite{poenaru}.  Such considerations -- which are not captured by the traditional homotopy description of topological defects -- are relevant to dislocations as well.  Consider the simplest structure yielding a $2\pi$ winding in the phase field: a family of $2\pi$ surfaces meeting radially along the line, depicted on the left side of Figure \ref{fig:ScrewRadialSurfaces}.  Such a phase field, despite having the desired $2\pi$ winding of $\Phi$, does not constitute a smectic configuration, as it has no layered structure: while the layers meet along the defect by design, at infinity the surfaces splay apart.  One can attribute this configuration's shortcomings to the winding in $\nabla\Phi$, as this radial $+1$ disclination structure is both antithetical to smectic layering and persists to the boundary of the system.  By forcing this structure to instead have the appropriate boundary conditions -- namely that $\nabla\Phi$ approaches a constant at infinity -- in one of two different ways, one recovers edge and screw dislocations.  By reviewing this construction, we provide a definition for screw vs. edge dislocations and, in so doing, briefly recapitulate the results of \cite{Severino_Kamien}.

\begin{figure}
    \centering
    \includegraphics[scale=.85]{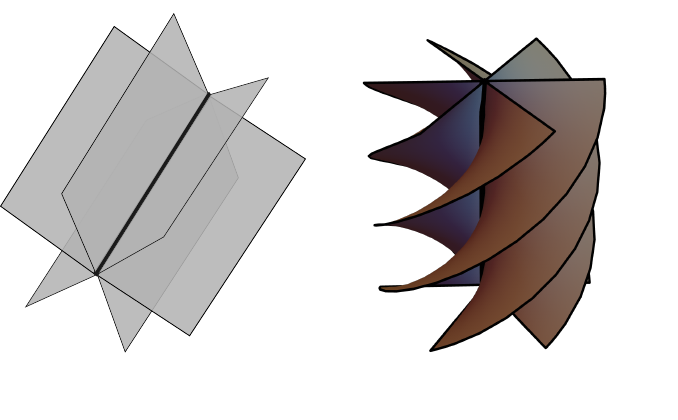}
    \caption{Example of a radial family of surfaces meeting along a line (left) and a screw dislocation (right).  In both cases a $2\pi$ family of surfaces are bounded by a singular line, leading to $2\pi$ winding in $\Phi$.  However, the screw dislocation's helicoidal geometry, characterized by the twisting of the surfaces around the dislocation, provides a natural layered structure far from the defect core.}
    \label{fig:ScrewRadialSurfaces}
\end{figure}

One way to modify the aforementioned radial family of surfaces into a smectic configuration is to simply add a second, hyperbolic $-1$ disclination parallel to the radial defect.  This cancels directly the $+1$ disclination associated with winding in $\Phi$, allowing boundary conditions ($\nabla\Phi=\hat{z}$) to be satisfied.  The presence of the hyperbolic defect restores the layered structure of the smectic, and, in the resultant smectic configuration, paths encircling the pair of disclinations cause the $2\pi$ winding in $\Phi$ and perpendicular displacement expected for an edge dislocation.  The charge-neutral disclination dipole at the core of an edge dislocation is captured by the standard phase field $\Phi_{\text{edge}}=z-\arctan\left(z/x\right)$.  On the other hand, a $2\pi$ winding in a \emph{nematic} director field is topologically trivial, owing to $\pi_1(\mathbb{R}P^2)=\mathbb{Z}_2$\cite{escapeinto,Mermin_RevMod_Defects}.  As such, one could, rather than cancel the disclination charge directly, attempt to relax the director field to $\nabla\Phi=\hat{z}$ at the boundary in a manner consistent with smectic order.  Smectic layers of negative Gaussian curvature allow the locally-radial $+1$ winding of $\nabla\Phi$ to be relaxed to constant $\nabla\Phi$ at infinity\cite{Severino_Kamien}.  Endowing the radial family of surfaces with requisite Gaussian curvature by twisting each surface along the defect line, shown in Figure \ref{fig:ScrewRadialSurfaces}, leads precisely to the helicoidal level sets of $\Phi_{\text{screw}}=z-\arctan(y/x)$ and the vertical displacement of the screw dislocation.  Note that the local defect structure of edge and screw dislocations are distinct -- the core of an edge dislocation contains both a radial and hyperbolic disclination in parallel whereas the screw dislocation maintains a purely-radial disclination core.  In the case of straight-line dislocations, this topological difference matches the different geometric properties of edge and screw dislocations\cite{Volterra1907,Severino_Kamien}.  For knotted dislocations it will be essential to rely on the topological definition of screw and edge defects: screw dislocations are defined as defects with purely-radial surface structure along the defect line whereas edge dislocations are defined by their disclination-charge-neutral defect core.  The difference in core topology between edge and screw dislocations leads to dramatically different global structure when the defects are knotted.

\begin{figure}
    \centering
    \includegraphics[scale=.53]{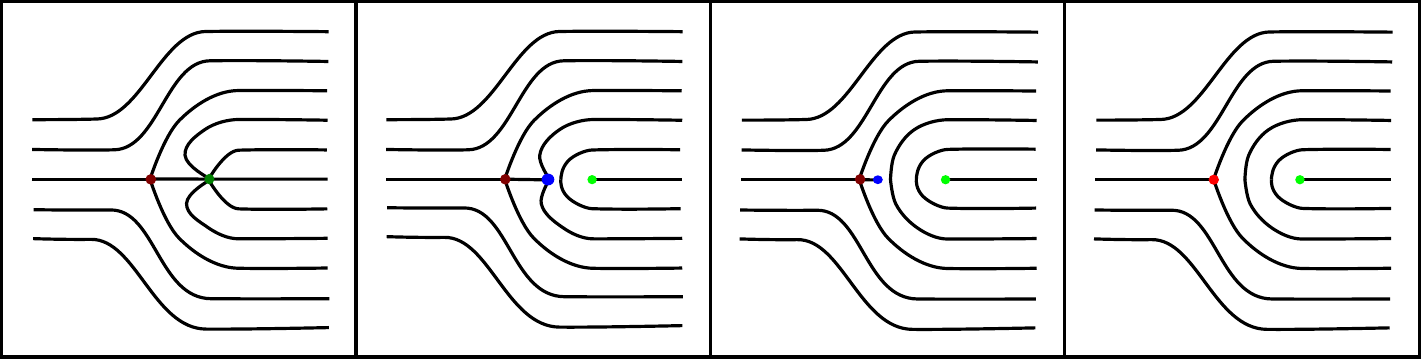}
    \caption{Transforming, with only a local operation, an edge dislocation with a $\pm1$ disclination core (left) to an edge dislocation with a pair of $\pm1/2$ disclinations (right).  This process involves extending the extending the melted region (blue) around the $+1$ disclination (dark green) along the layer ($\Phi_0$) connecting the disclinations.  As the melted region is moved toward the $-1$ disclination (dark red), the surfaces surrounding the the $+1$ disclination can be glued together, turning the defect into $-1/2$ disclination (light green).  Once all surfaces entering the blue defect have been glued, the surface $\Phi_0$ can be completely removed, transforming the $-1$ disclination into a $-1/2$ disclination (light red). The operation to the surface geometry occurs within a neighborhood of the disclination pair; far from the dislocation core, the surfaces remain unchanged.}
    \label{fig:Edge_Open_Book}
\end{figure}

\

An important energetic and topological consideration special to smectics is that the up-down symmetry of their constituents leads to the existence of half-charged disclinations.  In the construction of edge and screw dislocations above, not only have the smectic layers been orientable, but the foliation of surfaces is co-orientable -- also known as transverly oriented\cite{FoliationsCamacho,FoliationsGilbert}.  However, around half-charged disclinations, the nematic director winds by an odd multiple of $\pi$.  For example, in the straight-line $+1/2$ disclination of Figure \ref{fig:Smectic_Examples}, all smectic layers are orientable but closed paths around the disclination cause the surface normal and nematic director to flip from $\boldsymbol{n}$ to $-\boldsymbol{n}$ when returning to the same location on a surface, causing the foliation at large to be non-co-orientable.  In real systems, the evenly-spaced nature of $+1/2$ disclinations energetically favors dislocation cores of half-charged disclinations: the edge dislocation with a $\pm1/2$ disclination pair and the screw dislocation with a pair of $+1/2$ disclinations.  While much of the following mathematical analysis makes use of oriented surfaces and co-orientable foliations, a dislocation core with non-co-orientable defects can always be inserted with only a local surgery to smectic layers around a co-oriented dislocation core.  This is well studied for screw dislocations where, in a Burgers-displacement sized neighborhood of the defect, a helically winding pair of $+1/2$ disclinations replaces the radial defect at the helicoidal core\cite{screwone,screwtwo,Disclination_Pairs}.  Such a process is outlined for edge dislocations in Figure \ref{fig:Edge_Open_Book}.  Here the melted core of the $+1$ disclination is extended along the surface, defined by the level set $\Phi_0=0$, connecting the $+1$ and $-1$ disclinations.  As the melted region is extended, the surfaces $\Phi=\pm \epsilon$ can be glued together. While this introduces a non-co-orientability to the foliation of surfaces, the up-down symmetry of the director field allows for these surfaces to be glued without mismatch of the surface normals.  Continuing this process until the surfaces $\Phi=\delta$ and $\Phi=2\pi-\delta$ are glued restores the smectic order locally, leaving behind a $\pm1/2$ disclination pair.  The difference between edge and screw dislocations can be stated in terms of the topologically distinct pair of half-charged disclinations at their core: an edge dislocation is comprised of a $\pm1/2$ disclination pair in parallel whereas a screw dislocation is comprised only of a pair of $+1/2$ disclinations, maintaining $+1$ disclination charge.  These constructions, which will be generalized to knotted defects, enable us to study the topological properties of dislocations comprised of co-orientable (integer-charged disclinations) and non-co-orientable (half-integer-charged disclinations) defect cores interchangeably.

Now that the basics of screw and edge dislocations have been introduced, one may ask what shapes these line defects can enjoy in three dimensions.  Since the topological character of a dislocation line, characterized by the surfaces defined by the level set of the smectic phase field, is invariant under diffeomorphisms of space, a line defect may take a distorted path through the sample.  But, without the presence of other defects, the only other option is that the dislocation returns to itself.  The simplest case is an unknotted edge dislocation loop, which can be constructed from a pair of charge-neutral dislocations in two dimensions.  Rotating such a configuration around the vertical axis results in an axial configuration defined by

\begin{align}
    \Phi=z-\arctan\left(\frac{z}{d/2-\sqrt{x^2+y^2}}\right),\label{eqn_edge}
\end{align}

where the initial point defects were separated by a distance $d$.  As in the straight-line edge dislocation, the same topological considerations apply to this edge dislocation loop: it is comprised of a $\pm1$ pair of disclinations (which in turn may split into $\pm1/2$ disclinations with only a local change to the surfaces) that sit parallel to each other.  However, when considering the possibility of knotted dislocations, it becomes apparent that the extra rotational symmetry of an unknotted edge dislocation greatly simplifies its description.\footnote{For example, to turn a pair of points in two dimensions into a knot or link requires rotating the configuration out of the plane by at least $2\pi$, since, by the F\'ary-Milnor theorem, knots must have total curvature $\kappa\ge 4\pi$\cite{RevModPhys_Geometry}.  But applying this extra rotation to two point dislocations causes overlaps in the surfaces as they are swept out in three dimensions, spoiling the description.}  Extending the construction of dislocations to include knots requires a different approach, and is better formulated utilizing the mathematics of knot theory.

\subsection{Focal Conic Domains}\label{sub:FCD}

Focal conic domains\cite{FocalConicsKlemanLavrentovich} are, perhaps, the most distinctive and important texture in smectic liquid crystals.  In a focal conic domain, the smectic layers take on the geometry of the Dupin cyclides\cite{Dupin,GrandjeanFriedel}.  Dupin cyclides are a special class of canal surfaces, defined as the surfaces swept out by a sphere of varying radius as its center moves along some path.  The centers of the spheres that sweep out the Dupin cyclides trace out a perpendicular ellipse and hyperbola that pass through each other's foci.  This particular choice of ellipse and hyperbola, allowing the Dupin cyclides to be envelopes of spheres on two sides, endows the surfaces with geometric properties desirable of layered systems.  Namely, smectic layers in the shape of Dupin cyclides have perfect layer spacing, at the cost of having layers with non-zero mean curvature and singular lines in the director field.  Focal conic domains, which often form with only the negative Gaussian curvature sections of the Dupin cyclides, serve a variety of roles in smectic liquid crystals, from mediating boundary conditions to forming tilt grain boundaries.  The characteristic confocal ellipse and hyperbola, seen in the microscope due to the rapid change in the dialectric tensor, make focal conic domains readily identifiable in experiment.

\begin{figure}
    \centering
    \includegraphics[height=5cm]{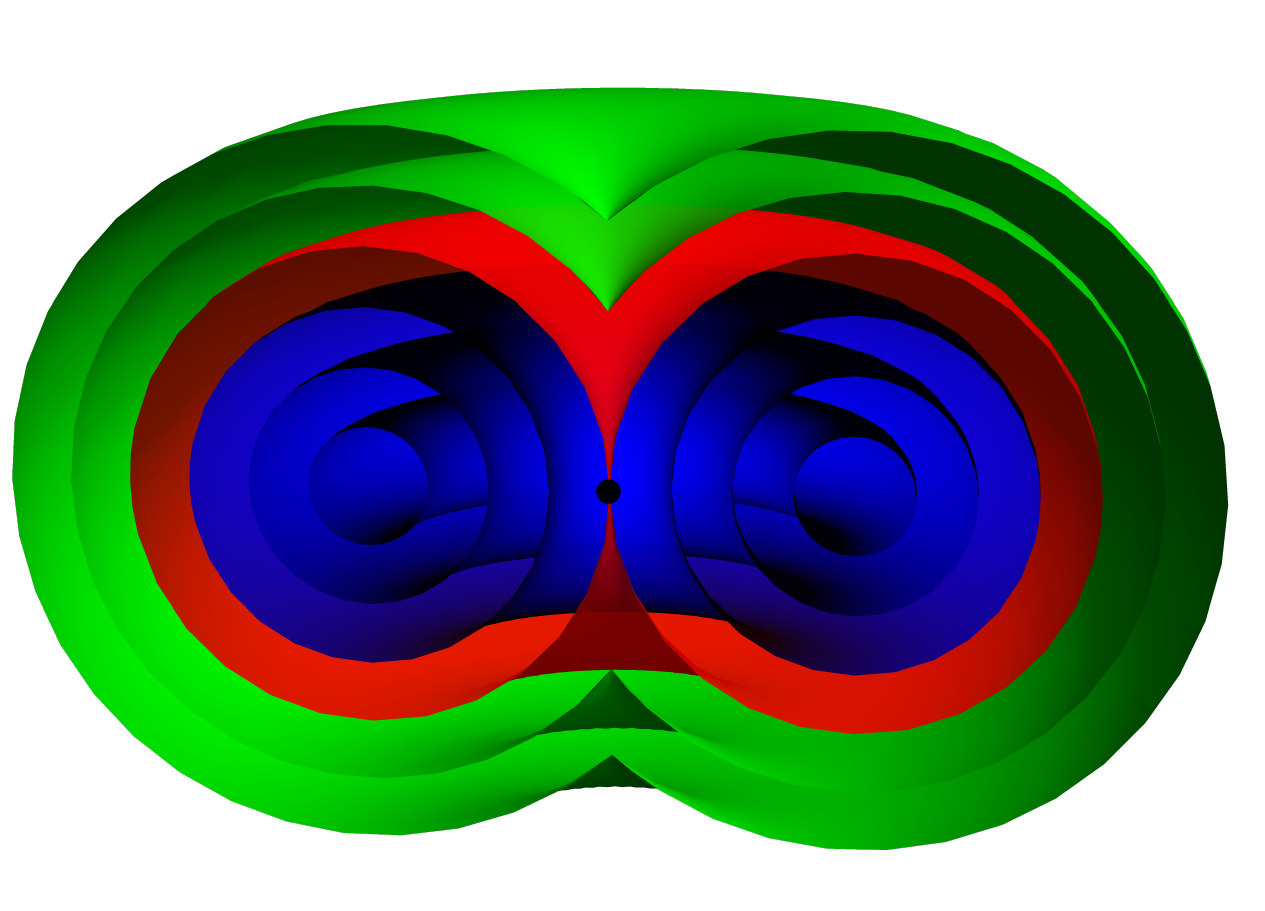}
    \includegraphics[height=5cm]{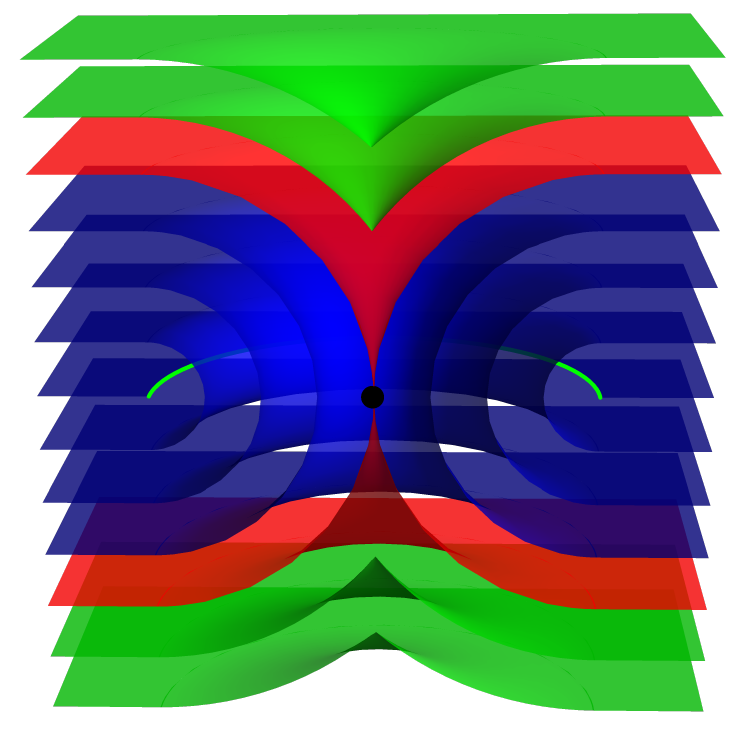}
    \caption{Left: Cutaway view of the symmetric Dupin cyclides.  The blue surfaces are concentric tori, growing until the red surface (horned torus).  This surface has a singular point (black), after which the surfaces become spheres (green) each containing two non-topological curvature singularities.  Right: Negative Gaussian curvature section of the symmetric Dupin cyclides connected to flat layers, yielding a smectic configuration satisfying generic boundary conditions -- a toric focal conic domain.  The $+1/2$ disclination loop is shown as the bright green line, and the singular point of the horned torus is the central point defect.}
    \label{fig:cyclides}
\end{figure}

Focal conic domains are non-topological structures:  for one, the ubiquity of focal conic domains is served by the fact that they can be inserted into the topologically-trivial smectic ground state or distorted smectic layers\cite{SethnaKleman}.  After all, the equal-spacing condition leading to the geometry of the Dupin cyclides is an energetic, not topological condition.  Smectic configurations will allow, to a measurable effect, some dilation to compensate for curvature energy in accordance with the non-linear smectic energy functional\cite{Lacaze_Kamien_OilyStreaks}.  Nonetheless, focal conic domains have interesting topological structure as it relates to point and line defects.  Consider first the geometry of the symmetric Dupin cyclides, or toric focal conic domains, shown in Figure \ref{fig:cyclides}.  In this configuration, the smectic layers start as concentric tori surrounding a $+1$ disclination.  Eventually, the evenly-spaced tori reach a point where one layer -- the horned torus -- pinches off at the center.  After this point defect is borne, the smectic layers are topological spheres with two curvature singularities.  The line of curvature singularities is a non-topological defect, as the surfaces, at the cost of even spacing, can smoothly remove the discontinuity.  The $+1$ disclination loop and central point defect are, on the other hand, topological defects.  In toric focal conic domains, the line and point defects work in tandem: the $+1$ disclination forces the smectic layers to be tori, and the point defect decreases the genus of the layers so that, in accordance with the geometry of the cyclides, the layers can become topological spheres.

\begin{figure}
    \centering
    \includegraphics[height=10cm]{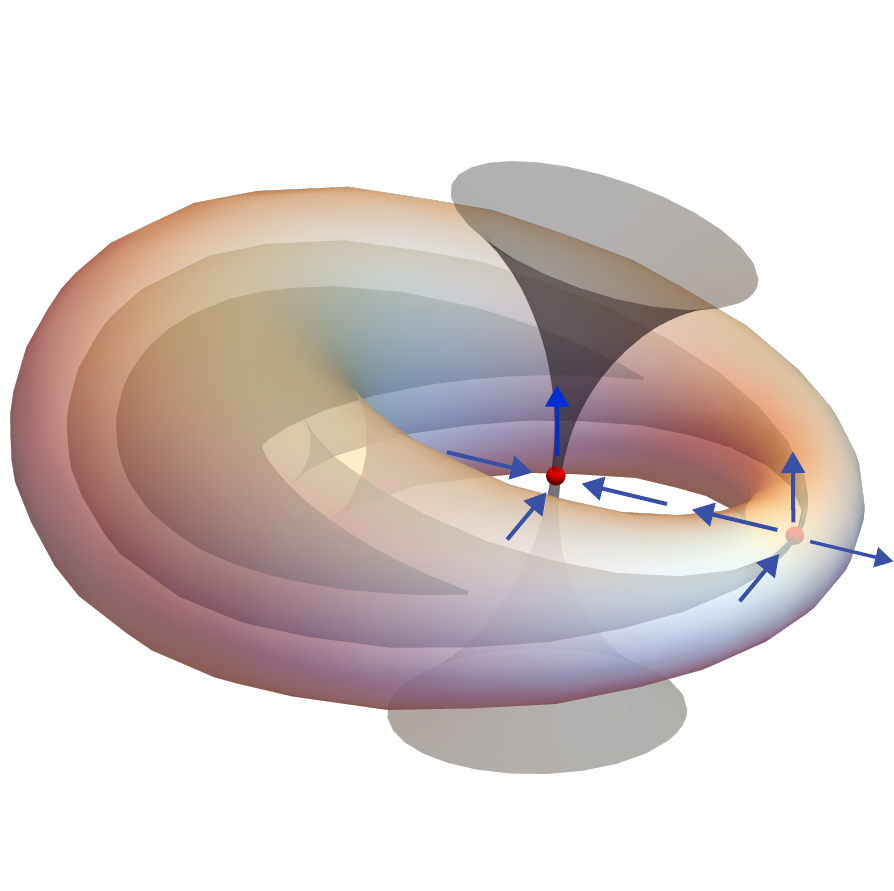}
    \caption{Point defect structure of general Dupin cyclides.  The inner surfaces are `croissant'-like topological spheres, passing through the right-most critical point to become toroidal.  The surfaces transform back to spheres following the central point defect.  Arrows indicate the orientation of the layer normal and nematic director, showing the opposite nematic point charge of the two point defects.}
    \label{fig:tilted_cyclides}
\end{figure}

By looking at the more general, tilted Dupin cyclides, the topological structure of focal conic domains reveals itself.  In this configuration, the disclination line disappears and instead the ellipse is, like the hyperbola, a non-topological singular line in curvature.  Instead, the general Dupin cyclides contain only topological point defects.  In addition to the point defect at the focus of the ellipse, a new point defect has opened up on the remains of the disclination.  Across this singular point, ``croissant-shaped" smectic layers meet at a point, after which the layers become topological tori.  As defects in the nematic director field\cite{Kamien_RevModPhys_Defects}, these two point defects have opposite charge and could be annihilated.  Note that there is also a third point defect, locally of the form of concentric spheres, from which the ``croissant layers" grow.  This point charge remains far away from the focal conic domain core\footnote{Taking only the negative Gaussian curvature section removes this global charge, leading to flat smectic layers far from the focal conic domain.}.  Further, the two conical point defects of the Dupin cyclides have opposite charge as defects in the smectic phase.  The topology of individual smectic point defects can be classified according to Morse theory\cite{aspects_topology_smectics}.  While the relationship between Morse functions and phase fields with respect to (knotted) line defects will be formally explored in Section \ref{sub:MorseAndProjection}, it is sufficient here to note that Morse critical points in three dimensions play an important role: they have the ability to increase or decrease the genus of the level sets of the phase field.  This is the case for the ``croissant" and ``horned torus" point defects in the Dupin cyclides that, respectively, increase and decrease the genus of smectic layers.  One sees that, because the layer structure of Dupin cyclides forces smectic layers that are topological spheres, then tori, then spheres again, focal conic domains require two Morse-type point defects of neighboring Morse index (index 1 and 2).  The opposite charge of these defects leads to a `topologically trivial' texture far from the core.  However, the presence of these point defects, forced upon the smectic by the Dupin cyclides, reflects a topological structure at the core of focal conic domains.  Our analysis shows that this very same topological structure is required to knot smectic defects.  Specifically, the topology of knotted defects forces the smectic configuration to contain surfaces of varying genus, thereby requiring Morse-type point defects of exactly the same form -- \emph{i.e.}, index 1 and 2 critical points -- as a focal conic domain.

\subsection{Seifert Surfaces and Knot Fibrations}

In order to build knotted smectic defects and understand their topological properties, one requires ingredients from knot theory.  The basic building blocks for knotted smectic defects are Seifert surfaces.  A Seifert surface is a compact, oriented surface whose boundary is a knot or link\footnote{Links are more general as they refer to a collection of knots.  Our results apply to both knots and links, but we will remark upon differences between the two as needed.}.  Seifert surfaces exist for all knots in $\mathbb{R}^3$, and the topological character of such surfaces is often non-trivial.  Namely, a Seifert surface can have genus zero only if it bounds the unknot: if a genus-zero Seifert surface existed for a given knot, then the surface could, by definition, be smoothly deformed into a planar disk, which bounds the unknot. 
 It follows that Seifert surfaces of non-trivial knots must have non-zero genus, and the minimum possible genus, $g$, for a Seifert surface of a given knot is a knot invariant.  For example, the minimum possible genus of a Seifert surface for the trefoil knot is $g=1$.  

As will be the case for screw dislocations, one may desire not just one Seifert surface but a family of Seifert surface of the same knot.  In particular, screw dislocations call for a helicoid-like structure along a knot -- $2\pi$ family of Seifert surfaces $F_\theta$ that meet radially along the knot.  Such structure is provided by a knot fibration, formally defined as a map $f:S^3\backslash K\to S^1$ from the complement of the knot to the circle such that each of its level sets $f^{-1}(\theta)$, $\theta\in S^1$, yields the interior of a Seifert surface $F_\theta$.  The union of the fibres $F_\theta$ fills the space around the knot, just as the $2\pi$ family of helicoids around a straight-line screw dislocation fill space around the defect line.  Furthermore, all of the fibres are topologically equivalent, \emph{i.e.}, of the same genus.  While fibrations are defined on compact spaces, one can project the Seifert surfaces from $S^3$ to $\mathbb{R}^3$ to study the fibration (with one exceptional surface passing through the point at infinity).  The use of $S^3$ is natural for knot fibrations, as we can identify $S^3$ with the Euclidean 3-sphere of unit radius in $\mathbb{R}^4\cong\mathbb{C}^2$ so that we may use a pair of complex variables $(u,v)$ with $|u|^2+|v|^2=1$ as coordinates on $S^3$.  For example, an explicit way to build a knot fibration is with a Milnor map $f:\mathbb{C}^2\to\mathbb{C}$ whose zeros on $S^3\subset \mathbb{C}^2$ form a knot $K$.  We may then associate, via $f(u,v)/|f(u,v)|=e^{i\theta}$, a point on the circle to every pair of complex numbers $(u,v)$ corresponding to a point on $S^3\backslash K$.  The level sets of the circle-valued map $f(u,v)/|f(u,v)|$ are a $2\pi$ family of surfaces with common boundary $K$.  The simplest choice of $f(u,v)$,

\begin{align}\label{eq:torusknot}
   f(u,v)=u^p+v^q,
\end{align}

yields zeroes that define a $(p,q)$-torus knot or link: $p=1,q=0$ yields an unknot, $p=2,q=2$ a Hopf link, and $p=3,q=2$ is trefoil knot. The example above originates from the study of isolated singularities of complex plane curves. The knots and links that can be obtained as the links of such isolated singularities are a comparatively small set of fibred links, called the algebraic links. It follows from basic approximation arguments that fibrations given by the argument $f/|f|$ of a real polynomial map $f:\mathbb{R}^4\to\mathbb{R}^2$ exist for every fibred link. However, constructing such polynomial fibrations for a given fibred link is very challenging. By constructing isolated singularities of real polynomial maps, explicit expressions for Milnor fibrations have been found for certain families of links that go beyond the set of algebraic links\cite{BodeLemniscateFibrations,Perron}.  

The surface geometry of the Seifert surfaces of a knot fibration near the knot is that of the pages meeting along the binding of an open book.  As such, knot fibrations are also known as open book decompositions with the corresponding fibred knot or link referred to as the binding of the open book.  Note, however, that not all knots have an open book decomposition in $S^3$: not all knots $K$ admit a fibration map $f:S^3\backslash K\rightarrow S^1$.  A test for whether a knot is fibred is to compute the knot group, $G\equiv\pi_1(S^3-K)$, and its commutator subgroup $[G,G]$.  A knot is fibred if and only if the commutator subgroup is finitely generated\footnote{This also translates to the Alexander polynomial of the knot: if the Alexander polynomial is not monic, then the knot is not fibred.\cite{Rolfsen_Knots_And_Links}}\cite{Neuwirth_Thesis,Rolfsen_Knots_And_Links,Stallings_Comm_Subgroup}.  Fibredness of links is also detected by Link Floer homology\cite{Ghiggini, Ni}, a link invariant that categorifies the Alexander polynomial.  Examples of knots without fibrations, in the notation of Rolfsen\cite{Rolfsen_Knots_And_Links}, are the $5_2$ knot, the $6_1$ knot (also known as the Stevedore knot), and a pair of split unknots.  The topological origin and consequences of the failure of a knot to be fibred on $S^3$ will be explored through the lens of smectic defects, as the global, topological structure of smectic configurations with knotted defects is sensitive to whether the knot in which a defect is tied is a fibred knot.

\section{Knotted Screw Dislocations}\label{sec:screw}

\subsection{Fibred Knots}

We begin our analysis of knotted smectic defects with screw dislocations.  As discussed in Section \ref{subsec:smectic_defects}, screw dislocations are singular lines in $\Phi$ and $\nabla\Phi$ defined by a fully-radial defect core.  Thus, in order to construct a screw dislocation line in the shape of a knot, one requires a family of surfaces that meet radially along the knot.  This topological structure is provided by knot fibrations: given a map $f:S^3\backslash K\rightarrow S^1$, one can define smectic layers as the fibre surfaces $f^{-1}(\theta)$.  Since the fibres are a $2\pi$ family of Seifert surfaces that meet radially along their common, knotted boundary, the local defect structure is that of a screw dislocation.  Further, knot fibrations ensure that the smectic layers away from the knotted screw dislocation are smooth and space-filling so that there are no other singular points or lines in $S^3$.  Note that the phase field of a straight-line dislocation, $\Phi_{\text{screw}}=z-\arctan(y/x)$, is only defined modulo $2\pi$: the level sets $\Phi=0$ and $\Phi=2\pi n$ for $n\in\mathbb{Z}$ are identical due to the translational symmetry $z\sim z+2\pi n$ of the screw dislocation.  Hence, the phase field $\Phi_{\text{screw}}$ can be viewed as a circle-valued map $\Phi_{\text{screw}}:\mathbb{R}^3\backslash\{x=0,y=0\}\rightarrow S^1$, making knot fibrations $\Phi:S^3\backslash K\rightarrow S^1$ a natural generalization for the phase fields of knotted screw defects.

While knot fibrations provide the appropriate topological structure for the phase field on $S^3$, one must project to $\mathbb{R}^3$ in order construct smectic configurations with knotted screw dislocations.  The standard projection from $S^3$ to $\mathbb{R}^3$, both for physical models involving knots\cite{Optical_Knots} and general knot visualization, is stereographic projection.  However, recall that the fibre structure ensures that $S^3$ -- and under stereographic projection $\mathbb{R}^3$ as well -- is filled by the $2\pi$ family of fibre surfaces.  This alone is not problematic, as the straight-line screw dislocation also fills $\mathbb{R}^3$ with only a $2\pi$ family of helicoidal surfaces.  But, stereographic projection sends a point on $S^3$ to infinity in $\mathbb{R}^3$, resulting in a point charge in the system.  As a consequence, $\nabla\Phi$ is not constant at the boundary, and there is no natural layered structure far from the knot.  Boundary conditions played an important role in understanding straight-line dislocations as well: in the family of radial surfaces meeting along a line shown in Figure \ref{fig:ScrewRadialSurfaces}, the failure to satisfy boundary conditions was attributed to the disclination charge in the system.  The triviality of $+1$ disclination charge in a nematic director field enabled the configuration to be twisted into helicoids, naturally satisfying boundary conditions and forming layers far from the defect core.  However, in the case of knot fibrations under stereographic projection, the non-trivial nature of point charge in nematic and smectic liquid crystals means there is no natural way to make such a configuration conform to boundary conditions without introducing other defects into the system.

In this paper, we present two methods of projecting a knot fibration in order to produce a smectic configuration with a knotted screw dislocation in $\mathbb{R}^3$.  The first method, described below, utilizes a hyperbolic disclination loop to screen the would-be charge from stereographic projection, yielding a smectic configuration either on $\mathbb{R}^3$ or within a droplet (a topological three-ball).  The second projection method produces, at the cost of two point defects, a screw dislocation in $\mathbb{R}^3$ wherein all the surfaces of the knot fibration reach the boundary with constant $\nabla\Phi$, similar to the straight-line screw dislocation.  While the latter method aligns with that of knotted edge dislocation and disclinations described in Sections \ref{sec:edge_dislocations} and \ref{sec:+12}, it requires the introduction of Morse critical points as they relate to smectic liquid crystals and a particular projection that involves cutting between such points on $S^3$.  As such, we will introduce this second projection method in the context of edge dislocations and apply it to screw dislocations only after they are revisited for non-fibred knots in Section \ref{sec:fibrednessANDdislocations}.  Although appearing somewhat different, these two projection methods arrive at the same defect topology in $\mathbb{R}^3$.

Here we describe our first projection method, which can be used to visualize the smectic layers defined by a knot fibration in $S^3$.  Rather than projecting $S^3$ to the entire Euclidean space $\mathbb{R}^3$, we will map to a three-dimensional ball. We first explain this projection in two dimensions, from $S^2$ to the disk, and then describe the three-dimensional analog. Stereographic projection identifies $S^2$ with the plane $\mathbb{R}^2$, mapping a chosen point on the two-sphere, for example the north pole, to the point at infinity. Since $\mathbb{R}^2$ is diffeomorphic to the interior of the disk via a simple rescaling, we can identify the two-sphere with a disk, where the entire boundary circle is identified with the north pole. This identification results from poking a hole into the two-sphere and unwrapping the punctured sphere as indicated in Figure~\ref{fig:Projection_Example}.
Projecting a phase field with domain $S^2$ that is regular on the north pole to the disk in this way results in a phase field on the disk that is constant on the boundary circle. The level set of the field on $S^2$ that contains the north pole corresponds to the following level set on the disk: the union of the boundary circle and a curve that intersects the boundary circle in exactly two points.  From the perspective of a smectic texture, the boundary identification as a point on $S^2$ is not problematic, as the curve and the bounding circle can be identified as one single smectic layer with two curvature defects where a line intersects the circle.  We can use these two point disclinations to fill in the rest of the smectic layers around the disk by connecting one, flat layer from infinity to each disclination.  These two layers and the disk separate $\mathbb{R}^2$ into two subspaces which are each homeomorphic to $\mathbb{R}^2$.  This allows the rest of space to be foliated by lines that, chosen to be evenly spaced and aligned at infinity, satisfy boundary conditions.  This projection method is shown in Figure \ref{fig:Projection_Example}, where the phase field on $S^2$ are longitudinal lines.  The resulting configuration contains two pairs of $\pm1$ disclinations: a charge-neutral dislocation pair.

\begin{figure}
    \centering
    \includegraphics[scale=.3]{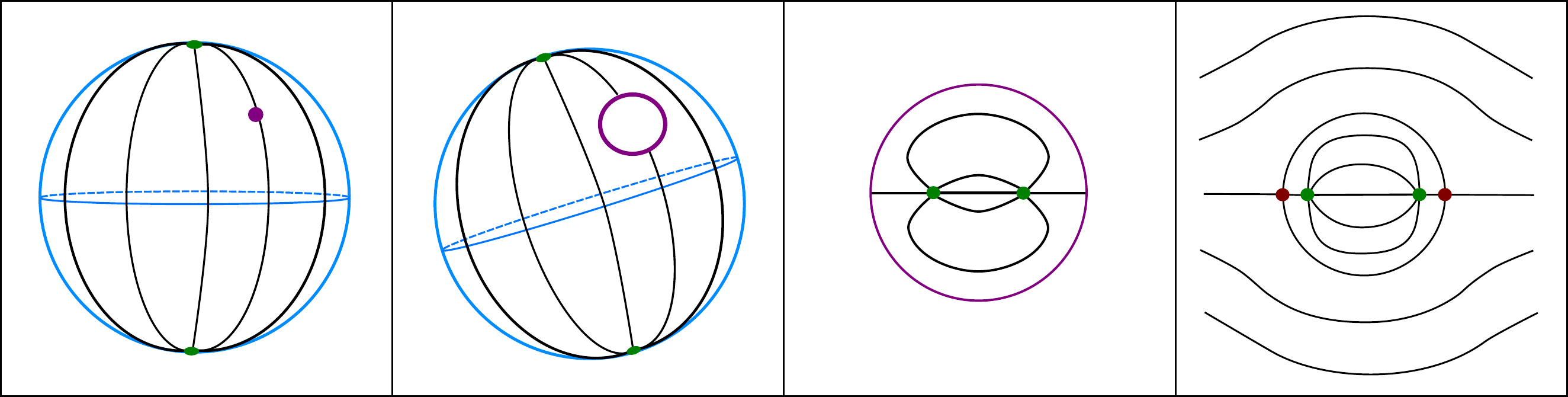}
    \caption{\label{Projection_Example}{Example of our projection method transforming transforming a phase field on $S^2$ to a smectic configuration on $\mathbb{R}^2$ satisfying the boundary condition that $\nabla\Phi=\hat{z}$.  Start with lines of longitude on the two-sphere (first panel) with two singular points (green) and a projection point (purple).  Poking and unwrapping from the projection point (intermediate stage shown in second panel) results in a phase field on the disk (third panel) with the bounding circle (purple) identified as the projection point on $S^2$.  The bounding circle is then turned into one smectic layer (black in final panel) with two curvature singularities (red), and layers are attached to each new defect.  Layers can then be filled in above and below in a manner that satisfies boundary conditions.  This process results in two pairs of $\pm1$ disclinations -- the topology of a charge-neutral pair of edge dislocations (final panel).}}
    \label{fig:Projection_Example}
\end{figure}

This process can be extended to project the phase field from a knot fibration in $S^3$ to a smectic configuration in $\mathbb{R}^3$.  Since $S^3$ is diffeomorphic to  the 3-ball with its boundary two-sphere identified to a point, there is a corresponding projection map that identifies fields on $S^3$ with fields on the 3-ball that are constant on the boundary sphere.  Explicitly, one can use complex coordinates $(u,v)$ on $S^3$ ($|u|^2+|v|^2=1$) parameterized by three hypersherical angles,

\begin{align}  
    u&=\cos(a) + i \sin(a)\cos(b), \\v&=\sin(a)\sin(b)\cos(c)+i \sin(a) \sin(b) \sin(c),
\end{align}  

where $a,b\in[0,2\pi]\in[0,2\pi]$ and $c\in[0,\pi]$.  The three-sphere can then be `unwrapped' along one of the hyperspherical angles $b$; namely, the projection to $\mathbb{R}^3$ is realized by mapping the hyperspherical angles to the usual spherical coordinates, $(b,a,c)\rightarrow (r,\theta,\varphi)$.  Without loss of generality we may assume that $\Phi_0:=\Phi^{-1}(0)$ is the fibre that contains the north pole, the projection point. The image of $\Phi_0$ under the projection map will intersect the bounding two-sphere along a closed curve, and along this singular circle we can attach to $\Phi_0$ a flat layer form infinity in the $xy$-plane.  This flat layer and the bounding two-sphere split $\mathbb{R}^3$ into two components which are individually homeomorphic to $\mathbb{R}^3$, each of which can be foliated with flat smectic layers.  The result of this projection is simply a higher-dimensional analog of the two-sphere projection shown in the final panel Figure \ref{fig:Projection_Example}.  The knot fibration -- \emph{i.e.}, the surface structure of a screw dislocation -- sits within a three-ball and is surrounded by flat smectic layers and a hyperbolic singular line.  Note that, by requiring $\nabla\Phi$ to be a constant far from the knot, one may insert a knotted screw dislocation into a generic smectic texture; since the smectic must be, locally, in its ground state away from defects, one can excise a section of flat layers and replace them with a knotted screw dislocation under the above projection.  This is made possible by an unknotted\footnote{Note that any simple loop that lies on an embedded two-sphere must be the unknot.} hyperbolic disclination loop of charge $-1$, which screens the would-be point charge of stereographic projection from $S^3$.  Figure \ref{fig:Knot_Open_Book} shows the smectic layers around a trefoil-knotted screw dislocation under this projection.

\begin{figure}
    \centering
    \includegraphics[scale=.35]{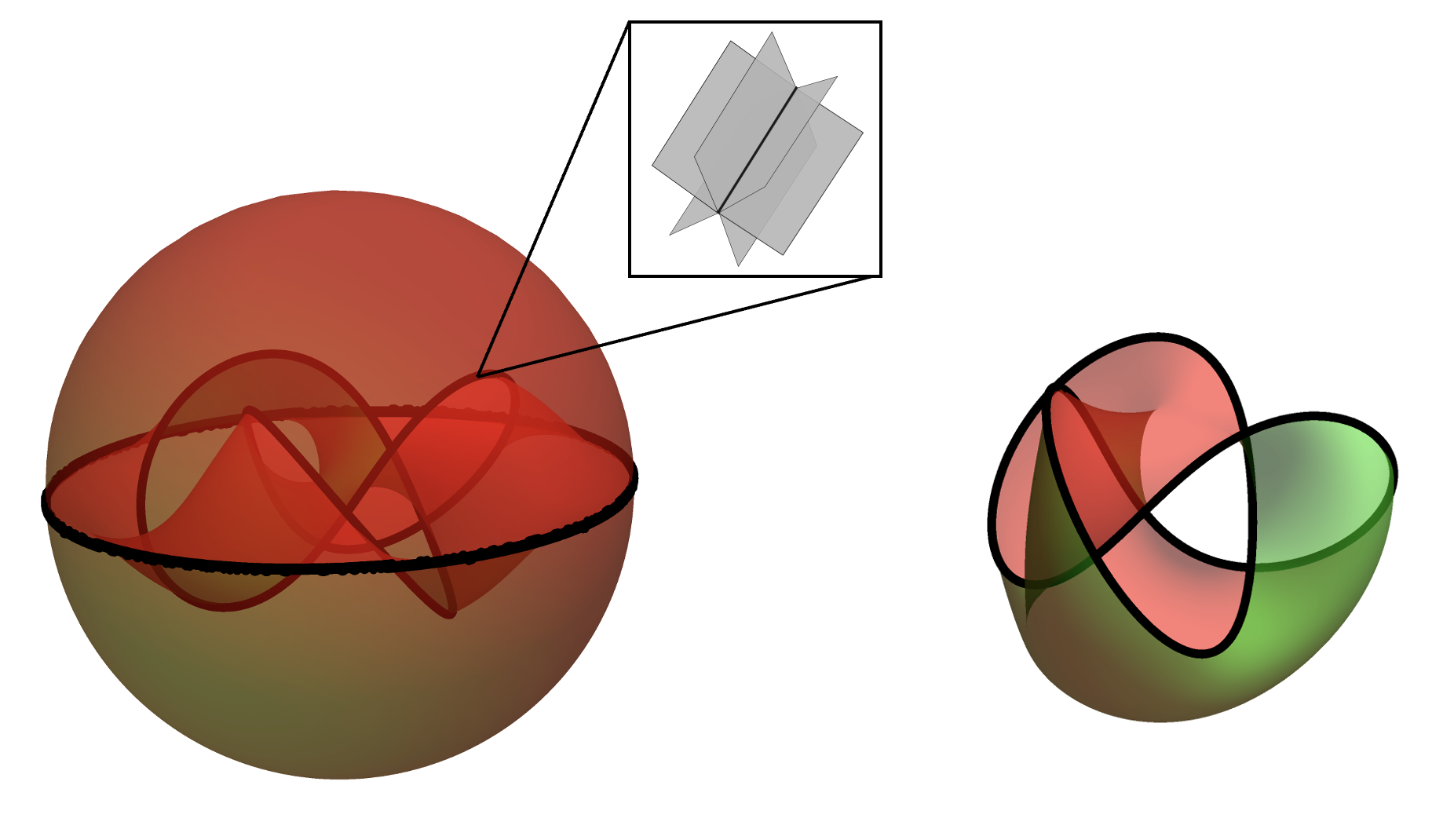}
    \caption{\label{Knot_Open_Book}{Structure of a trefoil-knotted screw dislocation.  Locally a $2\pi$ family of surfaces meet radially along the knot (inset).  One surface ($\Phi_0$) of the fibration (left) passes through the projection point on $S^3$, intersecting the bounding two-sphere along a circular disclination line.  Taking this surface and the sphere to be one smectic layer, we attach a sheet to the disclination line and foliate space above and below with flat layers identically to the lower dimensional analog in Figure \ref{fig:Projection_Example}.  All other surfaces bounding the knot -- for example, the surface defined by $\Phi=\pi$ (right) -- lie within the three-ball.  Green and red hues represent the two sides of the Seifert surfaces.}}
    \label{fig:Knot_Open_Book}
\end{figure}

While the disclination arising from this projection is shown as an equatorial loop in Figure \ref{Knot_Open_Book}, it can be shrunk to a point defect.  By a simple diffeomorphism in $\mathbb{R}^3$, we can move the surface $\Phi_0$ (the Seifert surface that intersects the projection point) inside the three-ball and the flat layer attached outside the three-ball jointly upward.  This moves the $-1$ disclination upward on the bounding two-sphere, eventually shrinking the disclination loop to a point at the north pole.  The local surface structure of this point defect is shown in Figure \ref{Fig:Screw_Projection_LoopPoint}.  After shrinking the disclination loop to a point, we can subsequently split the point defect into two (Morse-type) point defects connected along a layer.  This can be used as a springboard for our second projection method, outlined for edge dislocations on $S^3$ in Section \ref{sec:edge_dislocations} and applied to screw dislocations in Section \ref{sec:fibrednessANDdislocations}.  By explicitly using point defects on $S^3$, the second projection also realizes a screw dislocation in $\mathbb{R}^3$ with two point defects, demonstrating why the surface structure of two Morse-type point defects is always required.

\begin{figure}
    \centering
    \includegraphics[scale=.55]{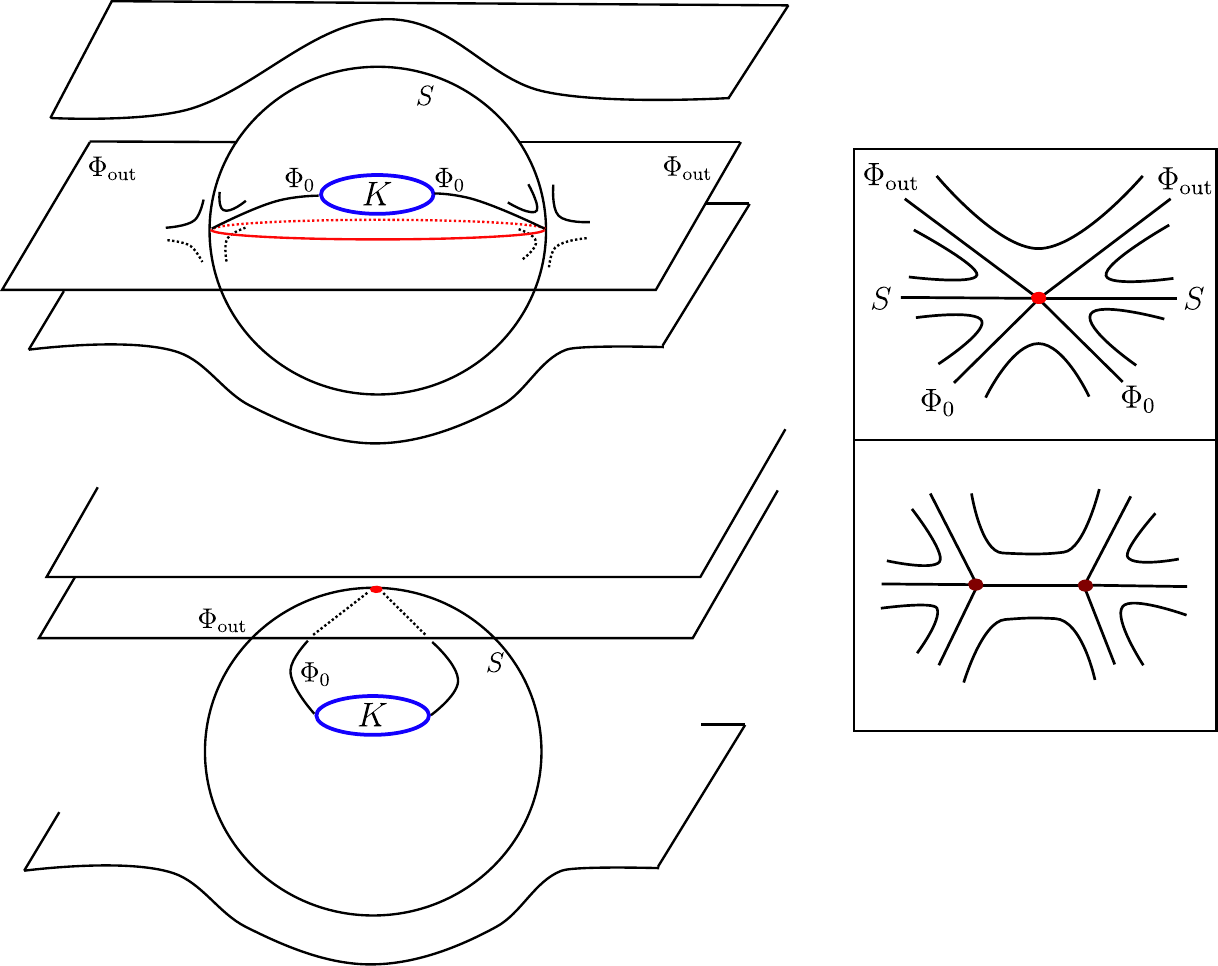}
    \caption{\label{Fig:Screw_Projection_LoopPoint}{The projection method described in this section produces a knotted screw dislocation within a three-ball surrounded by a hyperbolic disclination, shown on the top left.  By shrinking the disclination loop to a point at the north pole of the bounding two-sphere $S$, this defect can be made into a point (bottom left).  A cross sectional view of this point defect is shown on the top inset with the original surfaces labeled.  The topology of this point defect, which will be elucidated in Section \ref{sub:notfibredSCREW}, can be visualized by splitting the defect into two Morse-type critical points connected along a layer, shown in the bottom inset.}}
    \label{Fig:Screw_Projection_LoopPoint}
\end{figure}

\subsection{Screw Geometry}

While knot fibrations ensure that knotted defects are, topologically, screw dislocations, it is nonetheless worthwhile to connect back to the geometric properties of screw defects.  Consider the traditional description\cite{Volterra1907,chaikin_and_lubensky} of smectic and crystalline defects, which classifies screw dislocations according to a displacement parallel to the defect line when traversing a path around the dislocation.  In this Volterra scheme, measuring the direction of the displacement requires reference to a global ground state\cite{Severino_Kamien}.  While the topological and geometric classifications of screw dislocations match for a straight-line defect, the failure of knotted screw dislocations to be a simple perturbation about the ground state complicates the geometric description of dislocations.  In order to generalize the notion of `displacement' in a way compatible with the local coordinates around the dislocation rather than the global coordinates of boundary conditions, one can first consider surrounding a straight-line screw dislocation with a cylindrical tube.  Within this tubular neighborhood of the defect, the smectic layers are twisting bands bounding the defect line $l$ and a curve $\alpha$ along which the helicoidal layers intersect the cylindrical tube.  The curve $\alpha$ captures the vertical displacement of the screw dislocation: following $\alpha$ one full turn around the defect results in a displacement parallel to the defect line by an integer multiple of the layer spacing.  As such, an alternative means to capture the displacement of a straight-line screw dislocation is via the non-zero twist of $\alpha$ around the defect line $l$, $\text{Tw}(\alpha,l)\neq0$.  Since this definition of displacement is local -- making no reference to boundary conditions or a ground state -- it generalizes naturally to knotted screw dislocations.

This analysis of the straight-line screw dislocation can be repeated to study the geometry of knotted screw dislocations.  Start by taking a tubular neighborhood -- a knotted sheath -- around the knotted defect.  Each smectic layer, individually Seifert surfaces defined by the knot fibration, intersects a tubular neighborhood of the knot along a closed curve $\alpha$ that is, itself, knotted.  This construction is known as the Seifert framing of a knot\cite{Meilhan}.  In analogy to the non-zero twist around a straight line screw dislocation, the quantity $\text{Tw}(\alpha,K)$ captures the screw-like geometry of the knotted defect.  Within a tubular neighborhood of the knot, the Seifert surfaces twist about the knot at each crossing, and the total, signed number of such twists is characterized by $\text{Tw}(\alpha,K)$.  Given a particular embedding of the knot and a knot fibration, this twist allows us to interpret the surface structure within a tubular neighborhood of the defect as helicoid-like bands that twist about the defect line.  As with any pair of curves, the twist can then be extracted from C\u{a}lug\u{a}reanu's theorem, $\text{Lk}(\alpha,K)=\text{Tw}(\alpha,K)+\text{Wr}(K)$.  However, the Seifert framing of a knot\footnote{For links, the sum of the linking numbers is zero, meaning that individual components may have non-zero linking number with their parallel copies.} has the special property that the intersection of any Seifert surface with a tubular neighborhood of a knot has linking number zero, $\text{Lk}(\alpha,\Tilde{K})=0$.  For knotted screw dislocations, the twist is then uniquely determined by the writhe of the knot, which can, in turn, be calculated directly from the given knot diagram by counting the signed crossings.  This can be used to determine, for a given knot and its embedding, an analog of the Burgers vector integrated along the total length of the knot.

While the Seifert framing provides intuition for the screw-like geometry of smectic defects constructed from knot fibrations, it does not define a topological invariant.  Writhe \emph{is not} a knot invariant: Reidemeister \RNum{1} moves, which add or remove an extraneous crossing from a knot diagram, increase or decrease the writhe.  In the Seifert framing of a knot, Reidemeister \RNum{1} moves thus subtract or add twist, meaning that the smectic layers within a tubular neighborhood of the defect can arbitrarily change the total number of twists they make along the knot.  Reidemeister moves do not change the defect structure along the knot -- the defect still retains its purely-radial surface structure everywhere, continuing to be a screw dislocation.  However, the geometric description of dislocations breaks down, as the twist and corresponding analog of Burgers displacement are allowed to vary.  Note, though, that even for a straight-line screw dislocation, the pitch of the helicoid and corresponding Burgers displacement is set by energetics and not topology.  While a helicoidal geometry is required by topological considerations\cite{Severino_Kamien}, the exact pitch is set by the preferred layer spacing of the smectic liquid crystal.  In this spirit, the Seifert framing answers the following question.  Suppose one starts with a straight-line screw dislocation of length $L$, where $L$ is an integer multiple of the layer spacing $a$.  If a neighborhood of the straight-line defect is to be tied into a knot while retaining the same number of helicoidal turns, the knot must have a particular writhe, $\text{Wr}(K)=-\text{L}/a$.  A knot fibration ensures that the layers, as Seifert surfaces for the knot, match up smoothly away from such a defect core.  The final conformation of the knotted defect (its writhe) and smectic layers (their twist around the knot) would then be determined by energetics.

Although the net twist around a knotted screw dislocation can be zero across the entire knot, the Seifert surfaces locally twist at crossings in the knot diagram.  There is a unique case in which there is not only zero net twist but zero local twist of Seifert surfaces around the defect: the planar unknot.  It seems here that the topological definition of screw dislocations in terms of purely-radial defect structures fails completely to match intuition from the geometry of straight-line screw dislocations.  However, in the special case of a zero-writhe unknot fibration, the projection from $S^3$ to $\mathbb{R}^3$ results in a unique configuration.  Recall that the projection method described above cancels the point charge induced from maps to $\mathbb{R}^3$ by a hyperbolic disclination loop of charge $-1$.  When the defect on $S^3$ is a planar unknot, the unknotted defect in $\mathbb{R}^3$ can be brought into a neighborhood of the the hyperbolic disclination from projection so that the two defects sit in parallel.  A radial and hyperbolic disclination of charge $-1$ and $+1$ sitting in parallel is an edge dislocation: the original defect from the knot fibration and the defect induced by projection to $\mathbb{R}^3$ pair up to form not a screw but an edge defect!  Note that nontrivial knots cannot be brought into parallel with the necessarily unknotted hyperbolic defect.  Only in the particular case of a planar unknot can the extra hyperbolic disclination play a role in the dislocation structure itself.

\subsection{Knots without Fibrations?}

Given a fibration $f:S^3\backslash K\rightarrow S^1$, one can construct a smectic screw dislocation in the shape of $K$ using our methods of projection.   However, not all knots are fibred in $S^3$.  How might these knots be realized in a smectic configuration as knotted screw dislocations?  One could imagine, for example, taking a tubular neighborhood of a straight-line screw dislocation and manually tying it into a knot that is not fibred.  By definition, the surfaces around the knot cannot be grown outwardly and smoothly connect to form a family of Seifert surfaces.  The smectic configuration surrounding a knotted screw dislocation for a non-fibred knot must, then, have singularities.  But, will the resultant defects be walls, lines, or points?  To what topological class of smectic defects do these singularities belong?

One particular family of non-fibred links are the so-called split links.  A link $L$ is split if there is a two-sphere $S$ disjoint from $L$ such that both connected components of $S^3\backslash S$ intersect $L$. Calling the corresponding links in the two components $K_1$ and $K_2$, we say that $L$ is the split union of $K_1$ and $K_2$. If $K_1$ and $K_2$ are themselves fibred links, we may use method to construct smectic configurations for $K_1$ and likewise for $K_2$ individually. Since these configurations satisfy the usual smectic boundary conditions, we can insert the two configurations separately in the smectic. The resulting smectic configuration has screw dislocations lines given by the split union of $K_1$ and $K_2$, \emph{i.e.}, $L$, and two split, unknotted hyperbolic disclination lines coming from our projection method applied to $K_1$ and $K_2$ separately.  For example, a pair of split, unlinked circles does not have a fibration on $S^3$.  In this case, our projection method applied to each split circle individually reproduces a smectic with multiple edge dislocation loops, as expected.  Given knots or links of only one connected component and no fibration on $S^3$, we have, in principle, multiple options for how to form a knotted screw dislocation of said type.  One method would be to add an extra unknotted component to the knot on $S^3$.  Stallings proved that any knot can be turned into a fibred link on $S^3$ by adding an unknot of linking number zero with respect to the original knot\cite{Stallings_Fibred_Knots_Links}.  As such, any knotted screw dislocation may be formed with the presence of another unknotted defect.

Alternatively, we could forge ahead and attempt to create a screw dislocation that is a knot $K$, without fibration on $S^3$, alone.  Technically, there is nothing about our description of screw dislocations that requires an open book decomposition, that is, a knot fibration. Every knot $K$ admits smooth maps $\Phi:S^3\backslash K\to S^1$  such that regular level sets form Seifert surfaces of $K$. The distinction between such general maps and knot fibrations is by definition, the existence of critical points. In other words, $\Phi$ is a fibration map if and only if there are no points $(u,v)\in S^3$ where the gradient $\nabla \Phi(u,v)$ vanishes completely. Here, the gradient simply refers to three linearly independent directional derivatives in $S^3$. The condition that regular level sets are Seifert surfaces of $K$ thus means that a level set of $\Phi$ that does not contain any critical points is a Seifert surface. This is equivalent to $\Phi$ lying in a certain homotopy class on the complement of the knot\cite{hirasawa2003constructions}.
We may then apply our method to the map $\Phi$ in order to make a smectic configuration in $\mathbb{R}^3$ even though $\Phi$ is not a fibration. The resulting smectic configuration has a screw dislocation line in the shape of $K$ and no other line defects. However, since such a circle-valued Morse function $\Phi$, has, in general, critical points, the smectic configuration also contains a certain number of point defects.

A relatively recent result\cite{weber} from the study of circle-valued Morse functions states that, for any knot, one can always achieve a map $\Phi:S^3\backslash K\rightarrow S^1$ by allowing a certain number of Morse critical points, the minimum number of such points required to achieve a singular fibration is known as the Morse-Novikov number, $MN(K)$, of the knot.  This has important consequences for knotted screw dislocations, as it implies that the defect set need only be zero-dimensional (points) rather than lines or planes.  Yet, the application of this result is not straightforward.  For one, while the Morse-Novikov number can be related to other knot invariants\cite{weber,pajitnov,hirasawa2003constructions}, it is, in general, difficult to compute\footnote{In \cite{BodeMikami} the third author outlined a construction of polynomial maps for any given link $L$ such that the (not necessarily minimal) number of critical points of the resulting phase field can be directly calculated from a braid representative of $L$.}.  Further, the topological character of the critical points -- why they are required for non-fibred knots and to what they correspond as smectic defects -- is not immediately obvious.  However, by switching gears and studying knotted edge dislocations, we gain insight into the role of point defects in the topology of knotted line defects.  For knotted edge dislocations, we will show that Morse critical points are already required for \emph{any} non-trivial knot, regardless of fibredness.  The presence of these point defects -- which we will relate back to the point defects of focal conic domains -- will help us reinterpret the origin of Morse-Novikov points and reveal the relationship between fibredness and smectic defects.

\section{Knotted Edge Dislocations}\label{sec:edge_dislocations}

\subsection{Morse Points and Smectic Phase Fields on $S^3$}\label{sub:MorseAndProjection}

In the following sections we study the topology of knotted edge dislocations using tools from Morse theory.  To this end, we must first introduce Morse functions as they relate to smectic liquid crystals.  Morse functions are defined as smooth, real-valued functions on some domain manifold that only have non-degenerate critical points.  For a standard reference that covers all the fundamental definitions and results we point the readers to \cite{milnor}. There is also an analogous notion for circle-valued Morse functions \cite{pajitnovbook}. The classification of smectic point defects as Morse critical points has important physical implications for the defect structure\cite{chen_goldstone} and has been reviewed in \cite{aspects_topology_smectics}.  Recall, for example, that the topological structure of focal conic domains is better understood using the Morse theory of their point defects, as discussed in Section \ref{sub:FCD}.  Importantly, though, a generic smectic configuration is not described by a (real or circle-valued) Morse function.  This is because smectic dislocations and disclinations are curves of critical points in $\Phi$ and $\nabla\Phi$, not isolated point singularities as required by Morse theory.  Furthermore, classical Morse theory applies to real-valued functions on compact manifolds whose level sets are compact surfaces without boundary.  Typical smectic phase fields, even without line defects, do not satisfy these requirements: the smectic layers are typically defined on all of $\mathbb{R}^3$ or a subset of $\mathbb{R}^3$ with boundary, and the smectic layers themselves are generically non-compact.  There are analogous theories for circle-valued Morse functions and singular foliations in more general 3-manifolds manifolds with or without boundary \cite{pajitnovbook,Morseboundary,vera1, vera2, Morsefoliation}.  However, we need not resort to the complications of such theories; instead, we require only the most basic results from real-valued Morse theory.  We enable this simplification by allowing smectic layers to live in $S^3$ with a projection that yields a smectic configuration in $\mathbb{R}^3$ with appropriate (constant $\nabla\Phi$) boundary conditions.  

As an example of how a smectic configuration can be treated as a real-valued Morse function on $S^3$, consider the smectic ground state.  In the ground state ($\Phi=z$) the level sets of $\Phi$ are flat and equidistant smectic layers.  Starting with the ground state configuration defined within a unit cube $X=[0,1]^3$ in $\mathbb{R}^3$, we glue a disk $D_t$ to $[0,1]^2\times \{z\}$ for every $z\in [0,1]$, resulting in $S^2\times [0,1]$.  This can be thought of as taking the flat layers of the ground state and `closing up' each flat surface at the boundary so that the smectic layers are, instead, spheres.  Now, glue two three-balls $B_1$ and $B_2$ along $S^2\times\{0\}$ and $S^2\times \{1\}$, respectively.  The resulting three-manifold is $S^3$, as $B_1':=B_1\cup_\varphi(S^2\times[0,1])$, where $\varphi$ is the gluing map, is another three-ball.  Thus $B_1'$ and $B_2$ yield the genus-zero Heegaard splitting of $S^3$: just as two disks glued along their boundary form a two-sphere, two there-balls glued together along their boundary make a three-sphere.  Figure \ref{Cube_S3} shows this embedding with $D_{1/2}$ the unique disk that passes through the point at infinity.

\begin{figure}
    \centering
    \includegraphics[scale=.3]{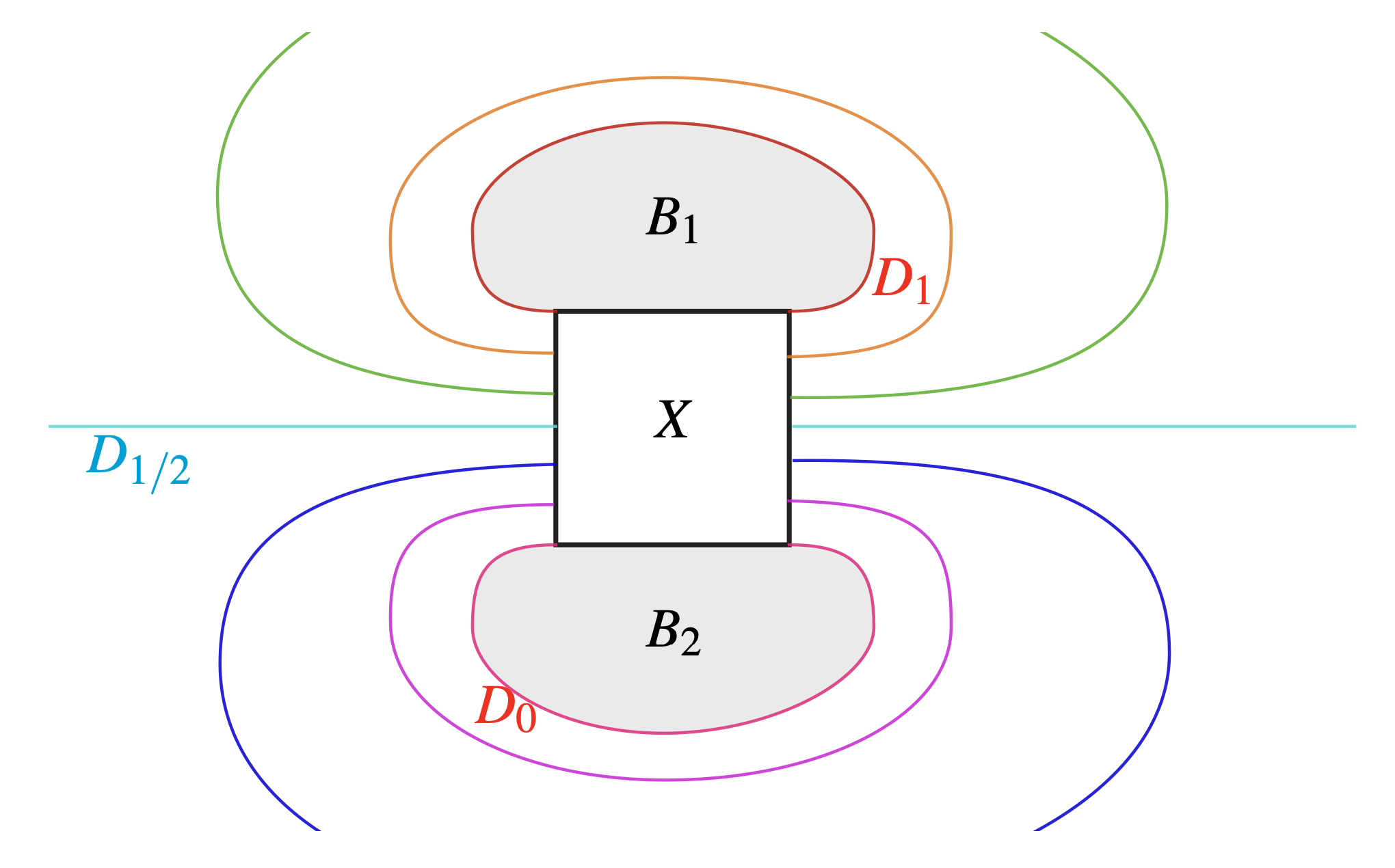}
    \caption{\label{Cube_S3}{Embedding a cube with smectic boundary conditions in the three-sphere, shown as a two-dimensional cross section.}}
    
\end{figure}

Given a smectic configuration within a unit cube $X$ containing only isolated (Morse) critical points and with constant $\nabla\Phi$ on the boundary, we can use this embedding of $X$ in $S^3$ to count the number of critical points based on the topology of the level sets.  This is because we can simply extend any map $f$ on $X$ with the desired boundary condition to a map $F:S^3\rightarrow S^1$ by using the embedding of $X$ in $S^3$ discussed above. The boundary conditions imply that $f$ is constant on $[0,1]^2\times\{t\}$ for all $t\in[0,1]$, allowing us to define $F$ to be constant on each added disk $D_t$.  On the three-balls $B_1$ and $B_2$ we define the level sets of $F$ to be concentric two-spheres and an isolated singular point at the center of each ball, located at $p_1\in B_1$ and $p_2\in B_2$.  Because the gluing operations are smooth, $F$ is smooth and is, again, a circle-valued Morse map whose critical points are the critical points of $f$ as well as $p_1$ and $p_2$.  Since $S^3$ is simply-connected, it follows that there is a map $\Tilde{F}:S^3\rightarrow\mathbb{R}$ such that $F=\sigma\circ\Tilde{F}$, where $\sigma:\mathbb{R}\rightarrow S^1$, $\sigma(x)=e^{\mathrm{i}x}$.  As $\sigma$ does not have any critical points, $\Tilde{F}$ is a real-valued Morse map, the critical points of which are exactly the critical points of $F$.  And, after a homotopy of $\Tilde{F}$ outside of $X$, we can assume that $p_1$ and $p_2$ are a global maximum and global minimum, respectively.  Since $p_1$ and $p_2$ are the only critical points of $\tilde{F}$ outside of $X$, it follows that the number of critical points of $f$ in $X$ is the number of critical points of $\tilde{F}$ in $S^3$ minus 2. We can thus determine the number of point defects in a smectic configuration, \emph{i.e.}, the number of critical points of $f$, by studying the number of critical points of the real-valued Morse function $\tilde{F}$. Determining the number of critical points of a Morse function from information on the topology of level sets is a key element of classical Morse theory, and it will allow us to simply count the number of point defects required to make a smectic configuration with a knotted edge dislocation.

\subsection{Local Edge Defect Structure}

Armed with the ability to study foliations of surfaces with smectic boundary conditions and only isolated singular points as real-valued Morse functions on $S^3$, we now turn our attention to edge dislocations.  Recall that edge dislocations are defined by a pair of parallel $\pm1/2$ disclinations at the defect core.  Due to the presence of these singular lines, smectic configurations with edge dislocations, strictly speaking, do not meet the requirements to be studied as Morse functions.  However, we will demonstrate that the smectic layers around knotted edge dislocations share the topology of a pair Seifert surfaces of the same knot.  This allows us to reduce the topological properties of knotted edge dislocations to foliations of $S^3$ involving a pair of Seifert surfaces -- \emph{i.e.}, a problem accessible to real-valued Morse theory.  To build intuition for how this procedure works and how a knotted edge dislocation may be constructed in the first place, consider the surface geometry of the smectic layers around a straight-line edge dislocation.  The presence of a hyperbolic disclination means that one level set of the density wave enters the $+1/2$ disclination and exits as two. Between the latter two surfaces lies the $-1/2$ disclination, and the rest of space is filled with parallel sheets that wrap around the $+1/2$ disclination.  This is illustrated in Figure~\ref{fig:Edge_Open_Book}.  In the same manner, constructing a smectic configuration with an edge dislocation in the shape of a knot requires two steps.  First, one must provide a set of smectic layers with the local structure of an edge dislocation -- in other words, a pair of $\pm1/2$ disclinations -- along a given knot.  Then, the space `above' and `below' the knotted dislocation must be filled with smectic layers.  

Using Seifert surfaces for a given knot, we can construct the local structure of an edge dislocation in the following manner.  Start with two Seifert surfaces, $\Sigma^1$ and $\Sigma^2$, sharing the same knotted boundary $K$.  Recalling that the dislocation core must be comprised of a parallel pair of $\pm1/2$ disclinations, take the surface $\Sigma^1$ to be the `input' layer to the hyperbolic $-1/2$ disclination.  We may then thicken $\Sigma^2$ to an interval worth of Seifert surfaces so that the surfaces $\Sigma^2_0:=\Sigma^2\times\{0\}$ and $\Sigma^2_{2\pi}:=\Sigma^2\times\{2\pi\}$ act as the two `output' layers for the $-1/2$ disclination.  The space between $\Sigma^2_{0}$ and $\Sigma^2_{2\pi}$ is filled by a 1-parameter family of equivalent Seifert surfaces $\Sigma^2_t$, $t\in[0,2\pi]$.  A brief examination of the density wave demonstrates that this surface geometry already contains the phase-field structure of a dislocation.  Take the value for the phase field on $\Sigma_1$ to be $\Phi=0$.  Choosing an orientation of $\Sigma_1$, passing in a neighborhood below the knot over to $\Sigma^2_0$ suggests that the value of the phase field on $\Sigma^2_0$ should be $\Phi=0$ as well.  To avoid singular points or lines, the path from $\Sigma^1$ to $\Sigma^2_0$ can be taken along the surface $\Tilde{\Sigma}^1\cup\Tilde{\Sigma}^2_0$ formed by gluing $\Sigma^1$ and $\Sigma^2_0$ along $K$ and pushing off below $\Sigma^1\cup\Sigma^2_0$.  From $\Sigma^2_0$, traversing the $2\pi$ family of Seifert surfaces $\Sigma^2_t$ leads to a value of $\Phi=2\pi$ for the surface $\Sigma^2_{2\pi}$.  However, passing in a neighborhood above the knot from $\Sigma_1$ to $\Sigma^2_{2\pi}$ (in an analogous way) suggests they too should share the same phase-field value.  As shown in Figure \ref{fig:Edge_Seifert}, this inclusion of $2\pi$ extra phase picked up by the closed circuit leads to the vertical displacement of an edge dislocation, and it occurs along the common knot bounding the Seifert surfaces.

Furthermore, this surface structure can be transformed into the usual pair of parallel $\pm1/2$ disclinations via only a local modification.  Given an orientation of the knot $K$, start by taking the surface $\Sigma^2_\pi$ and pushing its boundary a distance $d$ perpendicular to $K$ in the direction of $\Sigma_\pi$.  The boundary of the surface $\Sigma^2_\pi$ retains its knot type, but we have created a second singular line $\Tilde{K}$, of the same knot type as $K$, parallel to $K$.  Along the path between $K$ and $\Tilde{K}$, glue the surfaces $\Sigma^2_\delta$ and $\Sigma^2_{2\pi-\delta}$ together and smooth out their boundary.  As in the case of the straight line edge dislocation, the unoriented nature of the smectic layers resolves any tension in the alignment of the surface normals.  Repeating this for all $\delta$ until $\Sigma^2_{\pi\pm\epsilon}$ have been connected creates a $-1/2$ disclination at $K$ and a $+1/2$ disclination at $\Tilde{K}$.  The process of transforming a pair of Seifert surfaces into an edge dislocation is shown in Figure \ref{fig:Edge_Seifert}; this method is identical to that of Figure \ref{fig:Edge_Open_Book} extended to knotted edge dislocations.  Note, also, that this procedure is entirely local.  While the surfaces $\Sigma^1$ and $\Sigma^2_t$ are modified in a neighborhood of the knot, the smectic layers away from the dislocation and its disclination core are unchanged.

\begin{figure}
    \centering
    \includegraphics[scale=2.5]{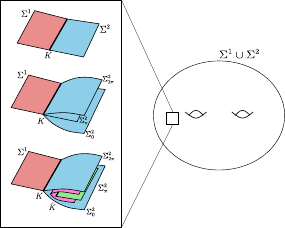}
    \caption{\label{Fig:EdgeKnot}{Constructing a knotted edge dislocation using two Seifert surfaces $\Sigma^1$ and $\Sigma^2$.  The inset demonstrates the local transformation of $\Sigma^1,\Sigma^2$ into a $\pm1/2$ disclination pair ($\Tilde{K}$ and $K$ respectively).  Note that $\Sigma^1\cup\Sigma^2$ is a surface without boundary of genus at least $2g$, where $g$ is the minimum genus of a Seifert surface for the knot.  Filling in the remaining smectic layers requires foliating the inside and outside of $\Sigma^1\cup\Sigma^2$, shown on the right, in $S^3$.}}
    \label{fig:Edge_Seifert}
\end{figure}

An important consequence of this construction is that the local surface structure of \emph{any} knotted edge dislocation can be reduced to that of just two Seifert surfaces.  Given a knotted edge dislocation in a real system, the topology of which must necessarily be composed of a parallel pair of $\pm1/2$ disclinations, we may, as explicitly described in the next section, run the above argument in reverse to reduce the topological properties of the smectic layers into that of a pair of Seifert surfaces.  This consideration becomes essential when studying the \emph{global} structure of a knotted edge dislocation -- in other words, how the rest of space around the dislocation is foliated by smectic layers.  Consider first an unknotted edge dislocation loop.  Running our argument in reverse, the local surface structure around the defect is reduced to two Seifert surfaces glued along the unknot.  Since the unknot has a genus zero Seifert surface, these two Seifert surfaces, glued along their common boundary, form a flat sheet that extends to the boundary of the unit cube $X$.  This allows flat layers to foliate above and below the sheet, as in the straight-line edge dislocation.  However, non-trivial knots have Seifert surfaces of necessarily non-zero genus.  The local structure of an edge dislocation for non-trivial knots, then, forces the smectic texture to contain surfaces of non-zero genus.  But, in order to satisfy boundary conditions, the surface structures far from the dislocation must be flat sheets.  The transition from flat sheets -- which, under our embedding of smectic configurations on $S^3$ are spheres -- to surfaces of non-zero genus cannot be accommodated continuously: it necessitates Morse critical points.  It follows that knotted edge dislocations force the presence of a certain number of point defects on the global smectic texture.

As a brief aside, note that, in order for the two original surfaces $\Sigma^1$ and $\Sigma^2$ to be Seifert surfaces, they must be taken to be compact.  However, in the case of edge dislocations, one expects smectic layers that extend to the boundary.  In analogy to the edge dislocation loop defined above in Eqn. \ref{eqn_edge} where the `input' surface enters the unknot from the boundary and `output' surfaces are compact, we may take $\Sigma^1$ to reach the boundary of $X$ in $S^3$.  Both choices are equivalent in our analysis, as we again study the surfaces on $S^3$ precisely so that all surfaces are compact.  In $\mathbb{R}^3$ the Seifert surface extending to the boundary can be considered a punctured Seifert surface, where identifying the circular intersection of $\Sigma^1$ with the boundary as a point results in a Seifert surface.

\subsection{Global Structure: $4g$ Point Defects}

Given that the local surface structure of a knotted edge dislocation is composed of a pair of Seifert surfaces glued along their common knot, we prove the following result about the number of point defects required to form a knotted edge dislocation.

\ 

\begin{theorem}\label{thmedge}
    Smectic textures containing a knotted edge dislocation, under the following assumptions, require a minimum of $4g$ point defects, where $g$ is the minimum genus of a Seifert surface for the given knot.
\end{theorem}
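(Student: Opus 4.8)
The plan is to reduce the statement to a count of critical points of a real-valued Morse function on $S^3$, after which it follows from a standard genus-tracking argument in classical Morse theory.

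First I would use the construction of the previous subsection to pass from the given texture to a more tractable model. By the local surgery that un-does the $\pm1/2$ disclination core (Figure~\ref{fig:Edge_Seifert}), run in reverse, any smectic texture containing a knotted edge dislocation along a knot $K$ can be modified, within a neighbourhood of the dislocation core and \emph{without changing the number of point defects}, into a foliation whose leaves near the core are built from two Seifert surfaces $\Sigma^1,\Sigma^2$ of $K$. Writing $g_i:=\mathrm{genus}(\Sigma^i)$, each $g_i\ge g$ since $g$ is the minimal genus of a Seifert surface of $K$. The union $S:=\Sigma^1\cup_K\Sigma^2$ is a closed orientable surface occurring among the leaves, and the Euler-characteristic identity $\chi(S)=\chi(\Sigma^1)+\chi(\Sigma^2)=(1-2g_1)+(1-2g_2)$ gives $\mathrm{genus}(S)=g_1+g_2\ge 2g$, while the leaves near $\partial X$ are flat sheets. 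I would then apply the embedding of Section~\ref{sub:MorseAndProjection}: since the modified foliation is singular only at isolated points, it extends to a circle-valued Morse map $F:S^3\to S^1$ and lifts to a real-valued Morse function $\tilde F:S^3\to\mathbb{R}$ whose critical set consists of the point defects of the texture together with exactly two extra critical points, a global maximum $p_1$ (index $3$) and a global minimum $p_2$ (index $0$), these being the only critical points of $\tilde F$ outside $X$. By the count established there, the number of point defects equals the number of critical points of $\tilde F$ minus $2$.

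The heart of the argument is then to show $\tilde F$ has at least $4g+2$ critical points. For a regular value $t$ set $\gamma(t):=\sum_C\mathrm{genus}(C)$, the sum over connected components $C$ of $\tilde F^{-1}(t)$. Near the extreme values of $\tilde F$ the level set is a small sphere about $p_1$ or $p_2$, so $\gamma=0$ there; at some intermediate regular value (after the generic assumption that distinct critical points take distinct values) $\tilde F^{-1}(t)$ contains the surface $S$, so $\gamma\ge 2g$ there. Now $\gamma$ is constant on intervals of regular values, and at a critical crossing: an index-$0$ or index-$3$ point only creates or destroys a spherical component, leaving $\gamma$ fixed; an index-$1$ point attaches a $1$-handle, which either merges two components (their genera adding, $\gamma$ fixed) or raises the genus of one component by $1$; dually an index-$2$ point either splits a component ($\gamma$ fixed) or lowers $\gamma$ by $1$. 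Hence $\gamma$ can only increase, by at most $1$ each time, at index-$1$ points, and can only decrease at index-$2$ points. Since $\gamma$ runs $0\to(\ge 2g)\to 0$, there are at least $2g$ index-$1$ and at least $2g$ index-$2$ critical points; together with $p_1$ and $p_2$ these are $4g+2$ distinct critical points of $\tilde F$, and all of the index-$1$ and index-$2$ ones lie in $X$, so the texture has at least $4g$ point defects.

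The step I expect to require the most care is the topological bookkeeping underlying the third paragraph, together with item~(i) below: precisely, (i)~verifying that the reverse surgery on the disclination core neither creates nor annihilates point defects, so the bound proved for the model transfers to the original texture; and (ii)~pinning down the effect of each Morse critical point on the multiset of level-surface genera, in particular handling the merging and splitting of components so that the ``$+1$ at index $1$, $-1$ at index $2$'' accounting is exact rather than a one-sided inequality. Everything else --- the embedding of the cube in $S^3$, the flat boundary behaviour, the constancy of $\gamma$ between critical values, and the fact that the genus-$\ge 2g$ leaf survives the lift $F=\sigma\circ\tilde F$ and sits at a regular value --- is routine.
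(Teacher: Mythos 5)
Your proposal is correct and follows essentially the same route as the paper: the same reduction of the dislocation core to two Seifert surfaces glued along $K$ (with genus $g_1+g_2\ge 2g$), the same lift to a real-valued Morse function on $S^3$ with two extra extrema, and the same genus-tracking argument --- your function $\gamma(t)$ is exactly the ``sum of genera of components'' bookkeeping that the paper isolates as Proposition~\ref{propMorse}. No substantive differences to report.
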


\ 

A priori this is a surprising result.  Theorem \ref{thmedge} states that any knottedness of the edge dislocaiton forces point defects in the system.  This should be contrasted with knotted disclinations in the nematic phase, which, beyond the knot itself, require no extra singularities (and, at most, can accommodate a point charge $q=1$\cite{Kamien_RevModPhys_Defects}).  Further, this result differs from screw dislocations, whose global structure may remain defect-free on $S^3$ for fibred knots.  Our assumptions in the proof are as follows:

\begin{itemize}
    \item \underline{No other line defects}: We assume that the smectic phase field $\Phi$ has an edge dislocation in the shape of a knot $K$, \emph{i.e.}, both of the $\pm1/2$ disclinations form isotopic knots $\tilde{K}$ and $K$ that lie in parallel. Our considerations apply to links as well, so that in general $K$ could also be multi-component link. However, there should be no additional screws or disclination lines in the system. This assumption is necessary, since we are specifically interested in a relationship between the topology of $K$ and topological properties of the liquid crystal configuration.
    \item \underline{Bounded domain}: We only consider smectic configurations on a finite (\emph{i.e.}, bounded) domain. The boundary condition thus implies that the level sets of $\Phi$ on the boundary $\partial X$ are horizontal (\emph{i.e.} $\nabla\Phi=\hat{z}$ on the entire boundary). This excludes some well-known configurations such as the standard edge or the standard screw dislocation, which were discussed in Section~\ref{sec:background}, as in both cases the defect lines are non-compact. Since we are concerned with (compact) knotted line defects and, by the first assumption, there are no other (non-compact) defects near the knot, this assumption should be somewhat generic.  Furthermore, this assumption allows the knotted edge dislocation to be inserted into the bulk of a larger smectic texture.  Since, far from any defects, smectic configurations must locally be in its ground state, we may build a larger smectic texture with a knotted dislocation by removing a cube of ground state and replacing it with $X$ as constructed in the following sections.  
    \item \underline{Orientability of $\Sigma$}: Let $\Sigma$ be the unique layer whose boundary is the $+1/2$ disclination $\tilde{K}$. We assume that $\Sigma$ contains no point defects and that it is orientable. Therefore, it is a Seifert surface of $\tilde{K}$ (if $\Sigma$ does not extend to the boundary $\partial X$) or a punctured Seifert surface of $\tilde{K}$ (if it extends to the boundary $\partial X$). In principle, the non-orientable character of the director field means that non-orientable layers, such as Moebius bands, are possible and are only ruled out by our assumption. As edge dislocations have no free half-integer disclination charge, non-orientability of surfaces and non-co-orientability of the foliation of surfaces need not be accommodated.  However, we will give an example of potential non-orientable surfaces when discussing isolated knotted $+1/2$ disclinations in Section~\ref{sec:+12}.
    \item \underline{The deformed $\Phi$ is Morse}: We assume that the phase field describing the smectic layers (away from the defects) is a Morse function on $X\backslash(K\cup\tilde{K})$.  With the chosen boundary condition on $\partial X$, fixed profiles in tubular neighbourhoods of $K$ and $\tilde{K}$, and assumption that there are no other line defects in the system, this is a generic condition in the sense that any non-Morse function -- \emph{i.e.}, a configuration with degenerate smectic point defects -- can be perturbed by an arbitrary small amount to make any degenerate critical points non-degenerate. 
\end{itemize}

A general Morse function on $X\backslash(K\cup\tilde{K})$ will have isolated critical points.  In constructing edge dislocations above, we started from a pair of Seifert surfaces and arrived at the surface structure of a parallel pair of $\pm1/2$ or, by the local transformation of Figure \ref{fig:Edge_Open_Book}, $\pm1$ disclinations.  By design the surfaces `involved' in the disclinations -- namely $\Sigma^1$ and those between $\Sigma^2_0$ and $\Sigma^2_{\pi}$ -- contained no critical points.  We assume only that $\Phi$ is a Morse function, meaning it may be the case that a foliation of $X\backslash(K\cup\tilde{K})$ contains a parallel pair of knotted disclinations whose associated surfaces themselves have critical points.  However, if such isolated points exist, we can move them through the layers until there are no critical points between $\Sigma^2_0$ and $\Sigma^2_{2\pi}$; in other words, the connected component of $X\backslash(\Sigma^2_0\cup\Sigma^2_{2\pi})$ that contains $\Tilde{K}$ contains no critical points.  The motion of the critical points does not alter the behavior of $\Phi$ on $\partial X$, the number of critical points, nor the knot types of $K$ and $\Tilde{K}$.  It does, however, change the topology of the surfaces through which the critical points move.

Since $\Phi$ is Morse, both $\Sigma^2_{0}$ and $\Sigma^2_{2\pi}$ are orientable and therefore Seifert surfaces of $K$ (or punctured Seifert surfaces if they meet $\partial X$).  The fact that there are no critical points between them shows that they are equivalent surfaces, \emph{i.e.}, that they have the same genus, say $g_0=g_{2\pi}$.  Likewise, every layer in the same connected component of $X\backslash(\Sigma^2_0\cup\Sigma^2_{2\pi})$ must share the same genus.  This allows us to run the construction of edge dislocations in reverse.  Starting from a knotted pair of parallel $\pm1/2$ disclinations, we may perform an operation within a neighborhood of the knotted pair that replaces the $\pm1/2$ disclinations with a $\pm1$ disclination pair.  We may then rid the connected component of $X\backslash(\Sigma^2_0\cup\Sigma^2_{2\pi})$ containing $\Tilde{K}$ of any critical points, making every surface therein a Seifert surface of $\Tilde{K}$ sharing the same genus.  The knot $\Tilde{K}$ can then be contracted onto $K$ by shrinking the surface $\Sigma_{\pi}$ connecting the two knots, leaving $\Sigma^2\times[0,2\pi]$ as the entire connected component, with boundary points of $\Sigma^2\times\{t\}$ identified on $K$.  We may thus homotope $\Phi$ to shrink the interval $[0,2\pi]$ to a single point, which deforms $\Sigma^2\times[0,2\pi]$ to simply $\Sigma^2$.  The resulting foliation no longer contains any line defects; the knot $\Tilde{K}$ is gone and the knot $K$ is no longer a defect.  Viewed in this way, the `dislocation' is now simply an embedded knot on the surface $\Sigma^1\cup\Sigma^2$, which can be made smooth via small deformation.  Note that this procedure does not change the number of critical points of $\Phi$.  Thus every density wave that satisfies our assumptions above and has a knotted edge dislocation in the shape of $K$ has the same number of critical points as a circle-valued function $f$ on $X$ that has as a connected component of a level set in the shape of two (possible punctured) Seifert surfaces $\Sigma^1\cup\Sigma^2$ of $K$ glued along $K$ and that satisfies the boundary condition on $\partial X$. Note that the collapse of the interval does not affect the topology of the layers outside of that interval of surfaces, so that $f$ can still be taken to be a circle-valued Morse function, which is now well-defined on all of $X$.

As a result, to count the number of point defects forced upon the smectic texture by the presence of a knotted edge dislocation, one need only calculate the number of critical points of a foliation of smectic layers on $X$ (in $S^3$) containing a level set that is the union of two Seifert surfaces.  To do this, we will need the following.

\ 

\begin{prop}\label{propMorse}
    Let $\tilde{F}:S^3\rightarrow \mathbb{R}$ be a smooth, Morse function such that there is a $y\in \mathbb{R}$ such that there is a connected component of $\tilde{F}^{-1}(y)$ that is a closed surface of genus $\Tilde{g}$.  Then $f$ has at least $\Tilde{g}$ critical points of index 1 and at least $\Tilde{g}$ critical points of index $2$.
\end{prop}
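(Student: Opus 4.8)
The plan is to track how the topology of the regular level sets $\tilde F^{-1}(t)$ changes as $t$ sweeps from $-\infty$ to $+\infty$, monitoring the single nonnegative integer $G(t):=\sum_i g_i(t)$, the total genus of $\tilde F^{-1}(t)$ summed over its connected components (with $G(t)=0$ when the level set is empty). Since $S^3$ is closed and orientable, every regular level set is a closed orientable surface, so each $g_i(t)$, and hence $G(t)$, is well defined, and $G$ is constant on each interval of regular values. We may assume $y$ is a regular value, since a smooth closed component of a level set persists under small perturbations of the level, and after a $C^\infty$-small perturbation of $\tilde F$ (which alters neither the index-wise count of critical points nor the hypothesis) we may also assume all critical values are distinct; then $G$ changes only when $t$ crosses one critical value, the changes at distinct levels being independent since they are supported in disjoint Morse neighbourhoods.

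The key local computation is to determine $\Delta G$ across a critical point of each index from the standard surgery picture of passing a non-degenerate critical point. Crossing an index-$0$ point inserts a small $S^2$ and crossing an index-$3$ point deletes one, so $\Delta G=0$ in both cases. Crossing an index-$1$ point replaces two disks $D^2\sqcup D^2$ in the level surface by a tube $S^1\times I$: if the two disks lie on different components this is a connected sum, which adds the genera and leaves $G$ unchanged; if they lie on the same component it attaches a handle, giving $\Delta G=+1$; in either case $\Delta G\in\{0,+1\}$. Crossing an index-$2$ point is the reverse, replacing a tube by two disks: this either splits one component into two (whose genera sum to the old genus, so $\Delta G=0$) or, when the compressed core circle is non-separating, reduces a handle, giving $\Delta G=-1$; in either case $\Delta G\in\{-1,0\}$. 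The one point worth spelling out is that all surfaces involved remain orientable, so "genus" behaves additively under connected sum and splits exactly as claimed — a short Euler-characteristic bookkeeping confirms $g'+g''=g$ when a separating curve is compressed.

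Granting these facts, the proposition follows immediately. Running $t$ upward from below the global minimum, where $G=0$, to $t=y$, where $G(y)\geq\tilde g$ because the genus-$\tilde g$ component alone contributes $\tilde g$ to the sum, we see that $G$ can rise only at index-$1$ critical points and then by at most $1$ each; hence $\tilde F$ has at least $\tilde g$ critical points of index $1$ with critical value less than $y$. Applying the same argument to $-\tilde F$ — whose level set $(-\tilde F)^{-1}(-y)=\tilde F^{-1}(y)$ still contains the genus-$\tilde g$ component, and whose index-$1$ critical points are precisely the index-$2$ critical points of $\tilde F$ — yields at least $\tilde g$ critical points of index $2$ for $\tilde F$, completing the proof.

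I expect the only place demanding genuine care is the second paragraph: pinning down that passing an index-$1$ (resp. index-$2$) critical point changes the total genus by exactly $0$ or $+1$ (resp. $0$ or $-1$), with the orientability bookkeeping and the two case distinctions (both feet on the same component versus different components; separating versus non-separating core curve). Everything else — the reductions on $y$ and on critical values, the locality of the surgeries, and the final monotonicity count — is routine. As an alternative one could bypass the genus-tracking by invoking the "half lives, half dies" principle for the sublevel set $W=\tilde F^{-1}((-\infty,y])$, namely $\operatorname{rank}\!\big(H_1(\tilde F^{-1}(y))\to H_1(W)\big)=\tfrac12 b_1(\tilde F^{-1}(y))\geq\tilde g$, and bounding the right-hand rank above by $b_1(W)$, which is at most the number of index-$1$ critical points below $y$; but the genus-tracking argument is more elementary and self-contained.
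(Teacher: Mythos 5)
Your proof is correct and follows essentially the same route as the paper's: both track the total genus of the level sets as the value sweeps past critical points, use the standard surgery picture to show that only index-1 points can raise the total genus (by at most one) and only index-2 points can lower it, and conclude by counting the rise from the genus-zero level set near the minimum up to the genus-$\tilde g$ level and the fall back down. Your treatment is somewhat more careful about the case analysis (same component versus different components, separating versus non-separating core) and packages the index-2 count via $-\tilde F$, but these are refinements of the identical argument.
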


\begin{proof} 
Recall how the topology of a level set changes as we pass through a critical level set of a real-valued Morse function.  The level set $\Tilde{F}^{-1}(a)$ is diffeomorphic to $\Tilde{F}^{-1}(b)$ with $a<b$ if the interval $[a,b]$ contains no critical values of $\Tilde{F}$.  The Morse Lemma implies that if $[a,b]$ contains exactly one critical value then the Euler characteristic of $\Tilde{F}^{-1}(b)$ differs from $\Tilde{F}^{-1}(a)$ by $\pm2$.  The precise change of topology depends on the index of the corresponding critical point according to the following table.

\begin{table}[h!]
\centering
    \centering
    \begin{tabular}{| c | c | c | c | c |}
         \hline Index & 0 & 1 & 2 & 3   \\ \hline
         Change in Euler characteristic & +2 & -2 & +2 & -2 \\ \hline Change in number of components & +1 & 0 or -1 & 0 or +1 & -1  \\ \hline Change in sum genera of components & 0 & +1 or 0 & -1 or 0 & 0  \\ \hline
    \end{tabular}
\caption{The change of topology of a level set of a real-valued Morse function as we pass through a critical point of given index.}
\label{table:1}
\end{table}

The Morse Lemma also tells us, up to diffeomorphism, the form of the level sets in a neighborhood of any critical point.  The level sets in the neighborhood of index $0$ and $3$ critical points -- for example the global maximum and minimum $p_1$ and $p_2$ of our foliation of $S^3$ -- are concentric spheres.  The neighborhoods of index $1$ and $2$ critical points are the saddles shown in Figure \ref{fig:Index12}.  Armed with this information, we can prove Proposition \ref{propMorse}.  Start near the global minimum of $\Tilde{F}$, $p_2$, where, for sufficiently small $\epsilon>0$, $\Tilde{F}^{-1}(\Tilde{F}(p_2)+\epsilon)$ is a two-sphere.  By assumption there is a $y\in \mathbb{R}$ such that a connected component of $\tilde{F}^{-1}(y)$ is a closed surface of genus $\Tilde{g}$.  Thus there is a $\Tilde{y}\in\mathbb{R}$ such that the sum of the genera of the connected components of $\Tilde{F}^{-1}(\Tilde{y})$ is at least $\Tilde{g}$.  But, by the Morse Lemma outlined in Table \ref{table:1}, there exist at least $\Tilde{g}$ critical points of index 1 in the interval $[F(p_2)+\epsilon,\Tilde{y}-\epsilon]$; each critical point increases the genus by at most one, and hence a minimum of $\Tilde{g}$ such points are required to transition from the two-sphere $\Tilde{F}^{-1}(F(p_2)+\epsilon)$ to $\Tilde{F}^{-1}(\Tilde{y})$, a surface of genus $\Tilde{g}$.  By the same argument there are at least $\Tilde{g}$ critical points of index $2$ in $[\Tilde{y}+\epsilon,F(p_1)-\epsilon]$ to accommodate the transition from a surface of genus $\Tilde{g}$ to a two-sphere.  Hence $\tilde{F}$ has at least $2\Tilde{g}$ critical points of index 1 or 2 -- a minimum of $\Tilde{g}$ of index 1 and $\Tilde{g}$ of index 2 -- proving the Proposition.
\end{proof} 

\begin{figure}
    \centering
    \includegraphics[scale=.35]{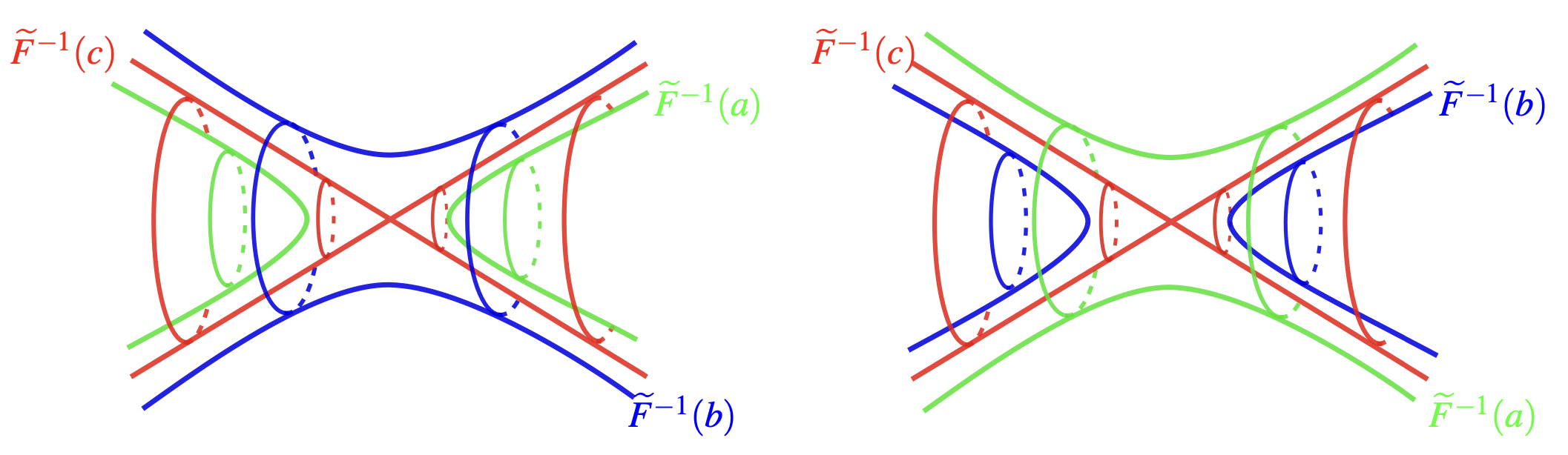}
    \caption{\label{fig:Index12}{Neighborhood of index 1 (left) and index 2 critical points.  Here $c$ is the unique critical value of $\Tilde{F}$ in $[a,b]$ for $a<b<c$.  As defects in smectic layers, these singularities are identical to those of the Dupin cyclides of Figure \ref{fig:cyclides}.}}
    \label{fig:critical}
\end{figure}

We now have enough information to prove Theorem \ref{thmedge}.

\begin{proof}  This follows immediately from Proposition \ref{propMorse} and the definition of smectic edge dislocations. We have already discussed in Section \ref{sub:MorseAndProjection} that the number of critical points of a smectic configuration containing only isolated critical points is equal to the number of critical points of a corresponding real-valued Morse function $\tilde{F}$ on $S^3$ minus 2. By construction of the local surface structure around an edge dislocation in the shape of a knot $K$, a smectic configuration with a knotted edge dislocation contains a level set that is the union of two (possibly punctured) Seifert surfaces.  On the Morse function lifted to $S^3$, there is a $y\in\mathbb{R}$ and a connected component of $\tilde{F}^{-1}(y)$ that consists of two Seifert surfaces of $K$, which are glued along $K$. Earlier we denoted the (punctured) Seifert surfaces by $\Sigma^1$ and $\Sigma^2$. Our embedding of $X$ in $S^3$ fills any possible punctures with the inserted disks $D_t$, so that the resulting layers really are two Seifert surfaces of $K$ in $S^3$, glued along $K$. Suppose that they have genera $g_1$ and $g_2$. Then the surface that is obtained by gluing them along $K$ has genus $\tilde{g}:=g_1+g_2$ and Proposition~\ref{propMorse} shows that $\tilde{F}$ has at least $g_1+g_2$ critical points of index 1 and at least $g_1+g_2$ critical points of index 2. It has at least one local minimum $p_2$ and one local maximum $p_1$ and thus the number of critical points of the corresponding liquid crystal in $X$ is at least $2(g_1+g_2)$. By definition $g$ is the smallest genus of any Seifert surface of $K$ and thus at most $g_1$ and at most $g_2$. Therefore, the number of critical points is at least $4g$.
\end{proof}

Theorem \ref{thmedge} shows that edge dislocations cannot form knots without forcing other defects in the system.  In particular, a knotted edge dislocation must have at least $4g$ point defects, where half are Morse-index 1 and half are Morse-index 2.  A schematic of a trefoil-knotted edge dislocation and its associated $4$ ($g=1$ for the trefoil knot) point defects is shown in Figure \ref{fig:EdgeSchematic}.  The topological origin of these points is clear: because a knotted edge dislocation forces smectic layers of non-zero genus, the smectic configuration must contain point defects that increase and decrease the genus.  This allows us to draw a direct analogy between the topological structure of knotted edge dislocations and focal conic domains.  Recall that, as discussed in Section \ref{sub:FCD}, Dupin cyclides require two point defects -- also of the form of index 1 and 2 Morse critical points -- to transform smectic layers from topological spheres to tori and back to spheres.  In the same way, a knotted edge dislocation requires an equal number of index 1 and 2 critical points.  For an edge dislocation, the layer transformation is not just form spheres to tori but rather from spheres (flat layers in $\mathbb{R}^3$) to the union of two Seifert surfaces.  Knotted edge dislocations are, viewed through this lens, a topological generalization of focal conic domains.  Knotted dislocations certainly do not share the \emph{geometric} properties of focal conic domains; by their nature edge dislocations cause compression of the smectic layers.  But, the requirement of pairs of point defects alongside a knotted edge dislocation can be interpreted as a focal conic-like topological structure.  The analogy between knotted dislocations and focal conic domains is furthered in Section \ref{sec:+12}, where we study the topology of knotted $+1/2$ disclinations.

As point defects in the smectic and nematic order, the $4g$ point defects may arrange themselves in a way compatible with smectic defects\cite{aspects_topology_smectics} but not Morse theory.  For example, the $2g$ index 1 and $2g$ index 2 point defect may combine into degenerate critical points.  The requirement of $4g$ point defects can then be interpreted as a constraint on the the defect topology -- the possibly-degenerate point defects in the system must share the surface topology of $2g$ index 1 and $2g$ index 2 Morse critical points.  Also, as smectic point defects, the critical points required for a knotted edge dislocation may degenerate into line defects, forming structures more akin to focal conic domains.  One may also notice that, in Morse theory, critical points with neighboring index can annihilate.  It might be tempting, then, to say that the smectic configuration could lower its energy by annihilating its $2g$ index 1 and $2g$ index 2 point singularities.  However, as a consequence of Theorem \ref{thmedge}, this annihilation sequence would change the genus of the smectic layers, and hence must undo the knottedness of the edge dislocation.  If the dislocation is to remain knotted, all $4g$ point defects must be present, even if the point defects could annihilate each other in the absence of the line defect.  Here we note a difference between focal conic domains and knotted edge dislocations.  The presence of the index 1 and 2 point defects in focal conic domains comes about through energetics; it is the requirement of perfect spacing that calls for the geometry of the Dupin cyclides in the first place.  On the other hand, the point defects required for knotted edge dislocations are stabilized not by energetics but rather by the knotted topology of the defect line.  Furthermore, the index 1 and index 2 point defects associated to the knotted edge dislocation must lie on different level sets: in $\mathbb{R}^2$ they sit `above' and `below' the knot, as shown schematically in Figure \ref{fig:EdgeSchematic}, in order to increase and decrease the genus in sequence.  Just as dislocation glide requires crossing a topological barrier\cite{Hocking_Kamien_Peierls_Nabarro}, these point defects cannot be moved across level sets without passing through singular configurations in the layered structure\cite{aspects_topology_smectics}.  The knottedness of an edge dislocation is thus topologically protected, as the unknotting process requires not just passing dislocations through each other but also moving point defects across layers in order to annihilate them, both of which cannot be smoothly accommodated in the smectic phase.

\begin{figure}
    \centering
    \includegraphics[scale=.55]{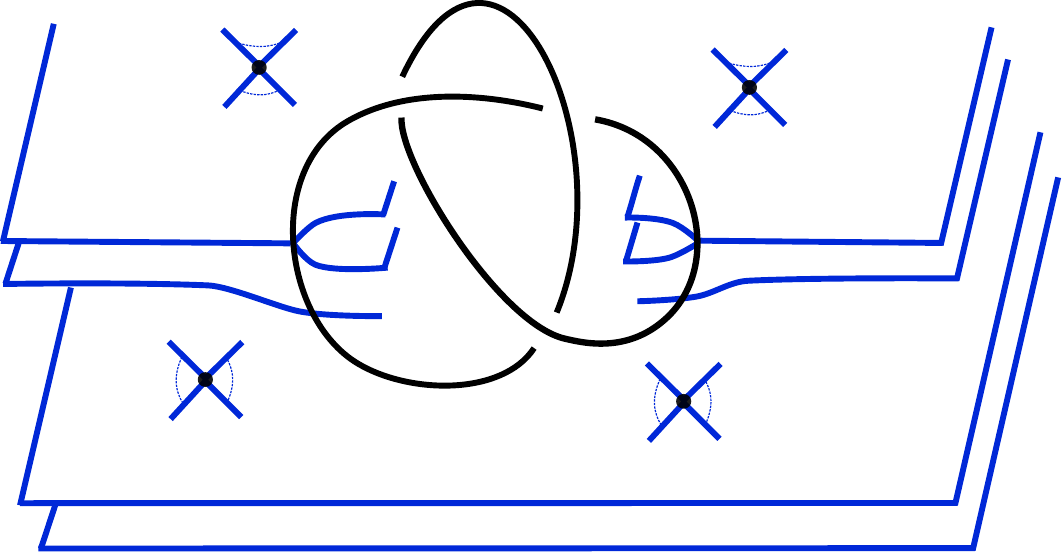}
    \caption{\label{fig:EdgeSchematic}{Schematic of a trefoil-knotted edge dislocation with four point defects.  The bottom two point defects increase the genus of flat layers to match that of the Seifert surfaces of the dislocation, and the top two point defects decrease the genus to restore the flat layers required from boundary conditions. }}
    \label{fig:EdgeSchematic}
\end{figure}

As a final remark on Theorem \ref{thmedge}, note that the minimal number, $4g$, of point defects required for knotted edge dislocations applies to knots but not multi-component links.  This is because the construction for edge dislocations relies on gluing two Seifert surfaces along their common boundary.  When there is only one boundary component, the genus is additive under this operation.  Generally, though, the Euler characteristic is additive under this gluing.  Consider two Seifert surfaces for a multi-component link.  Their individual Euler characteristics are $\chi_1=2-2g_1-b$ and $\chi_2=2-2g_2-b$, where $g_1,g_2$ are the genera of the Seifert surfaces of $\Sigma_1,\Sigma_2$ respectively and $b$ is the number of boundary components, \emph{i.e.}, the number of components of the link.  The Euler characteristic of the union of the two Seifert surfaces glued along their boundary is then $\chi=\chi_1+\chi_2=4-2(g_1+g_2)-2b$.  At the same time, the surface $\Sigma_1\cup\Sigma_2$ has no boundary and genus $\tilde{g}$, so $\chi=2-2\tilde{g}$.  Combining yields a genus $\tilde{g}=g_1+g_2+b-1$ for $\Sigma_1\cup\Sigma_2$.  When the number of boundary components is one -- \emph{i.e.}, when we have a knot -- the genus of the union of two Seifert surfaces is simply the sum of the individual genera.  The genus for multi-component links, on the other hand, depends on the number of boundary components.  Since $g_1$ and $g_2$ must be at least $g$, the minimum genus Seifert surface for the link, Theorem \ref{thmedge} generalizes simply to links: linked edge dislocations require a minimum of $2(2g+b-1)$ point defects.

\subsection{Non-Compact Dislocations}

By assuming trivial boundary conditions, the above analysis does not allow defects to reach the boundary of the system in $\mathbb{R}^3$.  However, it is possible to modify our approach so that the results of Theorem \ref{thmedge} apply to non-compact edge dislocations with knotted topology.  Consider first how a straight-line edge dislocation may be constructed from the union of two Seifert surfaces in $S^3$.  Start with the ground state foliation of the unit cube $X$ embedded in $S^3$ (Figure \ref{fig:Knot_Open_Book}) comprised of concentric spheres and the maxima/minima $p_1$ and $p_2$.  By taking an unknotted curve on one sphere, we can split the sphere into the union of two disks, $\Sigma^1$ and $\Sigma^2$, glued along the unknot.  Note that, in $X$, the unknot is nothing but a straight line extending from one side of $\partial X$ to the other along a layer.  We can take one of the surfaces, $\Sigma^2$, in $S^3$ and thicken it into a $2\pi$ family of surfaces.  By following the same procedure as before, we can transform the unknot into a pair of unknotted $\pm1/2$ disclinations, \emph{i.e.}, an unknotted edge dislocation on $S^3$.  What we have done has not changed the number of critical points in $S^3$; we still have a configuration with only two critical points, the maximum and minimum $p_1$ and $p_2$, only now with an unknotted edge dislocation.  The only difference is that this unknotted edge dislocation on $S^3$ reaches $\partial X$: in $X$, we have a configuration with a straight-line edge dislocation and no critical points, as expected.

Generalizing this process for any line dislocation $\alpha$ extending to the boundary of the sample, we can use the exact same Morse theory analysis to set a lower bound for the number of point defects required to have a given edge dislocation line.  Consider a foliation of $S^3$ containing the union of two Seifert surfaces, $\Sigma^1\cup\Sigma^2$, glued along a common knot $K$.  Unlike the previous sections wherein the knot $K$ was embedded entirely in $X$, we now allow $K$ to reach $\partial X$, piercing the boundary exactly twice at the same height.  We once again require trivial boundary conditions on $X$, embedded on $S^3$ in the same manner as Figure \ref{fig:Knot_Open_Book}.  Since the knot only intersects $\partial X$ twice, this foliation of $S^3\backslash X$ is always possible: the component of $K$ in $S^3\backslash X$ simply splits one of the punctured disks, \emph{i.e.}, the knot `connects back to itself' on the disk above $X$ in $S^3$.  Explicitly, the foliation of $S^3\backslash X$ is exactly the same as the straight-line dislocation (an unknot on $S^3$) described in the previous paragraph.  Once projected to $\mathbb{R}^3$ the to-be dislocation line forms a long knot -- a curve in $\mathbb{R}^3$ that `closes up' by passing through the point at infinity\cite{BudneyLongKnots,VassilievCohomologyKnotSpaces}.  Note that, because the topology of edge dislocations requires only a knot embedded on a closed surface in $S^3$, we can use the \emph{exact same} foliation yielding a compact edge dislocation knot to build one that reaches the boundary.  In other words, the only difference between a smectic configuration in which the knot is compact and in which the knot reaches the boundary is where the knot $K$ is embedded on the surface $\Sigma^1\cup\Sigma^2$ in $S^3$.  Hence, Theorem \ref{thmedge} also applies to edge dislocations that reach the boundary: we still need at least $4g$ point defects in $X$ to accommodate the transition from spheres to the union of two Seifert surfaces and back to spheres.  Given a foliation with $4g$ singular points and the surface $\Sigma^1\cup\Sigma^2$, we can thicken $\Sigma^2$ in $X$ and create an edge dislocation.  This edge dislocation follows a path $\alpha$ in $X$ that starts and ends at the boundary and whose shape, if the two ends were connected, forms a knot $K$.  The shape of this dislocation line is fixed outside of $X$ in $\mathbb{R}^3$ to be a straight line to infinity, as per the foliation on $S^3\backslash X$.  The dislocation is pinned not just at the boundary of $X$ but at infinity in $\mathbb{R}^3$; in other words, the ends of the dislocation cannot find each other and form additional crossings outside of $X$, which would require a different foliation on $S^3$ and different number of point defects.  Under this construction, the same considerations for compact dislocations apply to non-compact dislocation lines: just as a compact edge dislocation in the shape of a knot requires $4g$ point defects, so too does an edge dislocation that forms a knot as it traverses through the sample.

\section{Fibredness and Dislocations}\label{sec:fibrednessANDdislocations}

\subsection{Edge Dislocations}\label{sub:notfibredEDGE}

Upon first viewing, the requirement of at least $4g$ point defects alongside knotted edge dislocations seems to have no relation to whether or not the knot is fibred.  Since the construction of knotted edge dislocations relies only on the topology of two Seifert surfaces, which exist for any knot, Theorem \ref{thm:fib_edge} applies to fibred and non-fibred knots alike.  However, fibredness still plays an important role in the topological structure of edge dislocations.  The minimum number of point defects induced by a knotted edge dislocation, $4g$, is only attainable for knots that are fibred.

\ 

\begin{theorem}\label{thm:fib_edge}
    Let $K$ be a fibred knot. Then there is a singular foliation of $X$ satisfying the assumptions from Section~\ref{sec:edge_dislocations} that has an edge dislocation in the shape of $K$ and \underline{exactly} $4g$ critical points, where $g$ is the minimal genus among Seifert surfaces of $K$.
\end{theorem}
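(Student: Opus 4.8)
The plan is to meet the lower bound of Theorem~\ref{thmedge} head-on. Since $K$ is fibred, fix an open book decomposition of $S^3$ with binding $K$, that is, a fibration $\Phi_{\mathrm{fib}}:S^3\setminus K\to S^1$ whose pages $F_\theta$ are Seifert surfaces of $K$; because the fibre surface of a fibred knot realises the Seifert genus, every $F_\theta$ has genus exactly $g$. I would then: (i) build a Morse function $\tilde F:S^3\to\mathbb{R}$ with exactly $4g+2$ critical points, one of whose level sets is the union of two genus-$g$ Seifert surfaces of $K$ glued along $K$; (ii) descend $\tilde F$ to a smectic phase field on a cube $X$ with exactly $4g$ point defects, running the embedding of Section~\ref{sub:MorseAndProjection} in reverse; and (iii) install an edge dislocation along $K$ by the purely local surgery of Section~\ref{sec:edge_dislocations}, which adds no point defects. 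Combined with Theorem~\ref{thmedge}, this yields exactness.

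Step (i) is the crux, and the step I expect to be the main obstacle. Choose two pages, $\Sigma^1:=F_0$ and $\Sigma^2:=F_\pi$. They are disjoint apart from their shared boundary $K$, so $S:=\Sigma^1\cup_K\Sigma^2$ is a closed orientable surface; by additivity of genus under gluing along a single boundary circle (the same count as in the remark following Theorem~\ref{thmedge}, with $b=1$), its genus is $\tilde g=g+g=2g$. The two complementary regions of $S$ in $S^3$ are $H_+$ and $H_-$, the closures of $\bigcup_{\theta\in(0,\pi)}\mathrm{int}(F_\theta)$ and $\bigcup_{\theta\in(\pi,2\pi)}\mathrm{int}(F_\theta)$. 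Over the open arc $(0,\pi)$ the fibration is a product $\mathrm{int}(\Sigma)\times(0,\pi)$, so passing to closures presents $H_+$ as $\Sigma\times[0,1]$ with the vertical annulus $\partial\Sigma\times[0,1]$ collapsed onto $K$; this is the standard description of a handlebody side of an open book, and since $\Sigma$ (genus $g$, one boundary circle) has free fundamental group of rank $2g$, each of $H_\pm$ is a genus-$2g$ handlebody. Hence $S$ is a genus-$2g$ Heegaard surface of $S^3$. This is precisely where fibredness is indispensable: for a non-fibred $K$ the union of two minimum-genus Seifert surfaces need not cut $S^3$ into handlebodies, which is exactly the obstruction to achieving $4g$ in general.

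Given the Heegaard splitting $S^3=H_+\cup_S H_-$ of genus $2g$, step (i) follows from the standard dictionary with Morse theory: such a splitting is realised by a self-indexing Morse function $\tilde F:S^3\to[0,3]$ with one critical point of index $0$, $2g$ of index $1$, $2g$ of index $2$, one of index $3$, and $\tilde F^{-1}(3/2)=S=\Sigma^1\cup_K\Sigma^2$; this is the genus bookkeeping of Proposition~\ref{propMorse}, now realised with equality. One can arrange the global minimum $p_2$ and maximum $p_1$ to have round-sphere level sets and to lie off $S$. For step (ii), excise small round balls $B_1\ni p_1$, $B_2\ni p_2$ and set $X:=S^3\setminus(B_1\cup B_2)$; reversing the construction of Section~\ref{sub:MorseAndProjection} identifies $X$ with a unit cube on which the descent of $\tilde F$ is a smectic phase field satisfying $\nabla\Phi=\hat{z}$ on $\partial X$, with exactly $4g$ critical points ($2g$ of index $1$ and $2g$ of index $2$), and with $K=\partial\Sigma^1$ realised as an embedded separating curve on the level set $S$, along which $\Sigma^1$ and $\Sigma^2$ meet.

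For step (iii), apply the local construction of Section~\ref{sec:edge_dislocations} verbatim inside a tubular neighbourhood $\nu(K)$ of $K$: thicken $\Sigma^2$ into a $2\pi$-family of parallel Seifert surfaces and split $K$ into a parallel pair of $\pm1/2$ disclinations, yielding an edge dislocation in the shape of $K$. As this surgery is supported in $\nu(K)$, which contains none of the $4g$ critical points, it creates no new point defects, and the assumptions of Section~\ref{sec:edge_dislocations} hold (no other line defects; bounded domain $X$; the distinguished layer $\Sigma^1$ is a page of the open book, hence an orientable Seifert surface; and $\Phi$ is Morse on $X\setminus(K\cup\tilde K)$). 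The resulting configuration has an edge dislocation in the shape of $K$ and exactly $4g$ point defects, which by Theorem~\ref{thmedge} is the minimum, establishing Theorem~\ref{thm:fib_edge}.
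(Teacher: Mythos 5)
Your argument is correct and follows essentially the same route as the paper: two pages of the open book glued along $K$ form a genus-$2g$ Heegaard surface, each handlebody side is foliated with one extremum and $2g$ saddles (equivalently, a self-indexing Morse function realising the splitting), the two extrema are consumed by the projection to the cube, and the local surgery installing the dislocation adds no critical points. One small imprecision in step (ii): $S^3$ minus two disjoint round balls is $S^2\times[0,1]$, not a ball, so what must be removed is a tubular neighbourhood of an arc joining $p_1$ to $p_2$ transverse to every level set (the ``American football'' that is the complement of the embedded cube of Section~\ref{sub:MorseAndProjection}), which is what the paper does and what your appeal to reversing that section's construction actually amounts to.
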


\begin{proof}
    The minimal genus $g$ of any fibred knot is realised by its fibre surface $\Sigma$. Gluing two fibres, $\Sigma^1 \equiv f^{-1}(0)$ and $\Sigma^1 \equiv f^{-1}(\pi)$, of the same fibration along their boundary $K$ results in an embedded closed surface of genus $2g$ in $S^3$. The union of the fibres $\Sigma^1\cup\Sigma^2$ splits $S^3$ into two components, $A\equiv S^3\backslash(\Sigma^1\cup \Sigma^2)^+$ and $B\equiv S^3\backslash(\Sigma^1\cup \Sigma^2)^-$, where, using a common orientation of $\Sigma^1\cup\Sigma^2$, $S^3\backslash(\Sigma^1\cup \Sigma^2)^+$ is the side `above' $\Sigma^1\cup\Sigma^2$ and $S^3\backslash(\Sigma^1\cup \Sigma^2)^-$ is the side `below'.  The topology of $A$ and $B$ are fixed by the knot fibration: since $S^3\cong f^{-1}(\theta)\times[0,2\pi]$ with the surfaces $f^{-1}(0)$ and $f^{-1}(2\pi)$ identified, the two components split by a pair of fibres are $A\cong f^{-1}(\theta)\times[0,\pi]$ and $B\cong f^{-1}(\theta)\times[\pi,2\pi]$.  Thus $A$ and $B$ are solid handlebodies of genus $2g$ with shared boundary $\Sigma^1\cup\Sigma^2$.  This is known as a Heegaard splitting: by using two minimal genus Seifert surfaces from a knot fibration, we have decomposed $S^3$ into two handlebodies, $A$ and $B$.  We may thus describe a singular foliation of $S^3$ that contains the surface $\Sigma^1\cup \Sigma^2$ by describing the singular foliations of $A$ and $B$ individually as solid handlebodies of genus $2g$.

    We start with a single point, which will play the role of a global minimum of the corresponding Morse function. As a critical point of index 0 it is surrounded by concentric spheres. Figure~\ref{fig:critical} shows the neighbourhood of a critical point of Morse index 1 or 2. It also shows how a one-handle can be foliated: by adding an index 1 critical point to the concentric spheres around the global minimum, one achieves a solid handlebody of genus 1 and a foliation of its `inside' containing an index 0 and index 1 critical point.  An example of how a sphere grows to a torus by passing through one critical point is shown in Figure~\ref{fig:sing_fol}.  Since a surface of genus $2g$ is obtained from a sphere by attaching $2g$ one-handles, we obtain a singular foliation of the solid handlebody of genus $2g$ that has one critical point surrounded by concentric spheres (the index 0 point) and $2g$ critical points that look as in Figure~\ref{fig:critical} (the index 1 points). So far we have treated the point in the center of the family of concentric spheres as a minimum, whose surrounding spheres grow into a surface of genus $2g$ by passing through $2g$ critical points. Likewise, a surface of genus $2g$ can shrink to concentric spheres, the center of which is a global maximum, by passing through $2g$ critical points.  By starting at an index $0$ critical point and using $2g$ index $1$ points to increase the genus to $2g$, we can grow spherical surfaces into $\Sigma^1\cup\Sigma^2$ and foliate the inside of $B$.  By subsequently decreasing the genus with $2g$ index 2 critical points, we can grow the surfaces above $\Sigma^1\cup\Sigma^2$ into spheres, foliating the inside of $A$.  This provides a singular foliation of $S^3$ that contains a closed surface $\Sigma^1\cup\Sigma^2$ that is a union of Seifert surfaces alongside one global minimum, one global maximum, and exactly $4g$ critical points of index 1 or 2.

    \begin{figure}
        \centering
        \includegraphics[height=5.5cm]{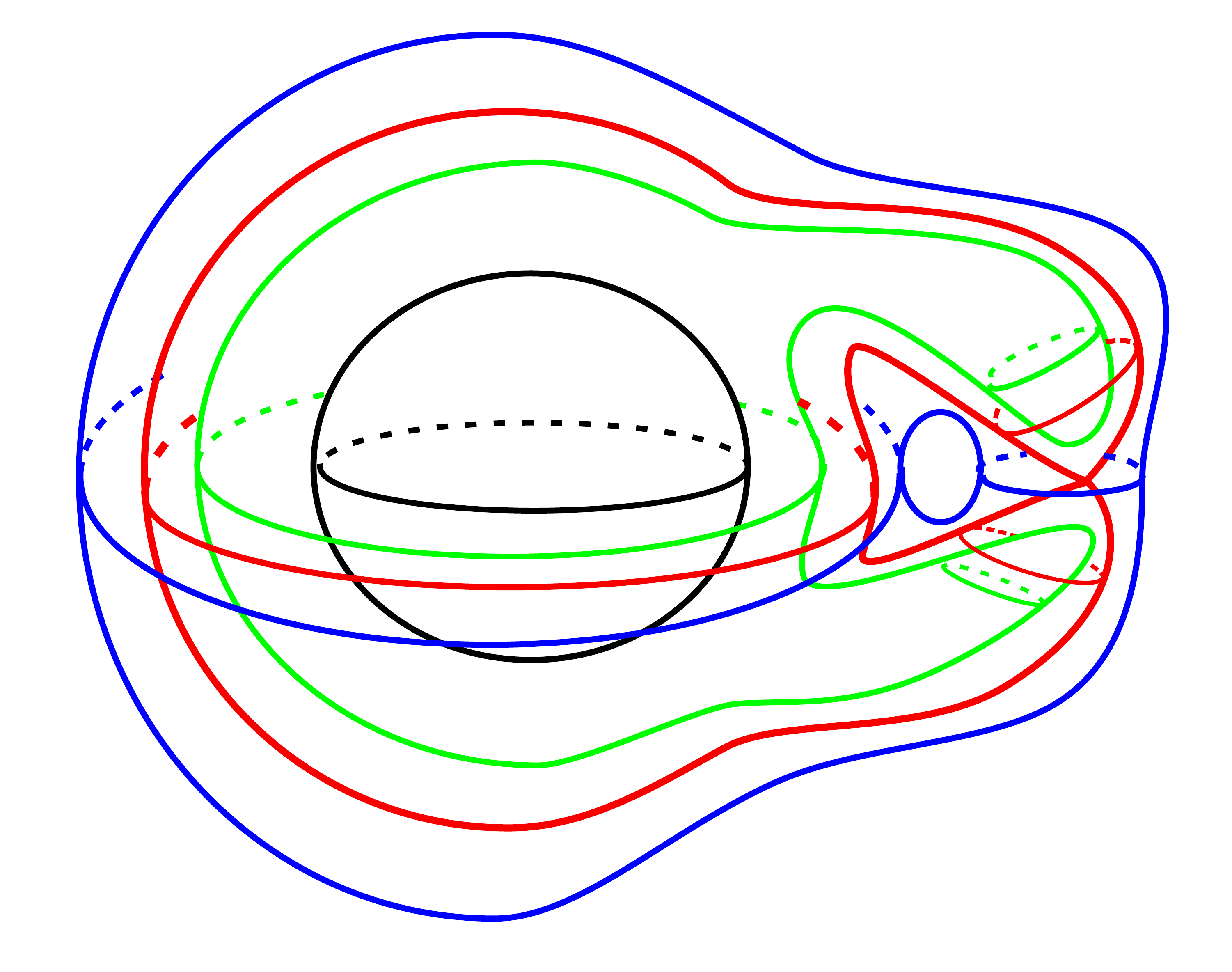}
        \caption{A sphere that grows into a torus by passing through one critical point. Together with a global minimum at the center of the concentric spheres this gives a singular foliation of the solid handlebody of genus 1.  Note the identical topology to that of the Dupin cyclides.}
        \label{fig:sing_fol}
    \end{figure}

    In order to finish this proof we need to turn this singular foliation on $S^3$ into a singular foliation on $X$ that satisfies smectic boundary conditions. There is a path $\gamma$ from the global maximum $p_1$ to the global minimum $p_2$ that intersects every shell exactly once with all of these intersections transverse. A tubular neighbourhood of $\gamma$ is diffeomorphic to $[0,1]\times D^2$ (where $D^2$ is a two-ball). Take such a tubular neighbourhood and shrink $\{0\}\times D^2$ and $\{1\}\times D^2$ to a point each. We call the result $N(\gamma)$. It is an embedded ``American Football", or prolate spheroid. Removing the interior of $N(\gamma)$ from $S^3$ results in a topological 3-ball, whose boundary contains $p_1$ and $p_2$ and is transverse to the leaves of the singular foliation. We can deform the 3-ball to a cube $X$ and grow the points $p_1$ and $p_2$ so that they become the lower and upper face of the cube, $[0,1]^2\times\{0\}$ and $[0,1]^2\times\{1\}$, respectively. Since the leaves are everywhere transverse to the non-horizontal boundary of the cube, we can deform them until they satisfy our boundary condition of being horizontal on $\partial X$.  In other words, we have made a cut in $S^3$ from the global maximum to the global minimum and `laid out' the surfaces punctured by this cut so that they satisfy smectic boundary conditions in $\mathbb{R}^3$.  This projection uses the global maximum and minimum on $S^3$, leaving behind only the $4g$ index 1 and 2 critical points.

    This results in the desired foliation. Note that, other than removing the global maximum and minimum, we have not changed the number of critical points. Since the cut $\gamma$ intersects $\Sigma^1\cup\Sigma^2$, this projection results in a surface of genus $2g$ in $X$ with one puncture.  We thus have a singular foliation of $X$ containing $4g$ critical points and a pair of Seifert surfaces -- one that reaches the boundary (punctured) and one compact -- that are glued along their knot $K$. Using the procedure discussed in Section \ref{sec:edge_dislocations}, we can thicken $\Sigma^2$ (the non-punctured Seifert surface) and modify the foliation locally so that it obtains an edge dislocation in the shape of $K$. This local procedure does not affect the number of critical points, so that there are still exactly $4g$ many.  Thus, for any fibred knot $K$, we have constructed a smectic texture with an edge dislocation in the shape of $K$ and exactly $4g$ critical points.
\end{proof}

Theorem~\ref{thm:fib_edge} only applies to fibred knots, as fibredness allowed the two components of $S^3\backslash\Sigma^1\cup\Sigma^2$ above and below $\Sigma^1\cup\Sigma^2$ to be solid handlebodies -- \emph{i.e.}, yielding a Heegaard splitting from the Seifert surfaces.  This allows the two handlebody components of $S^3$ to be foliated with Morse critical points in a straightforward manner.  Fibredness is required, since two Seifert surfaces of minimal genus glued along their boundary, $\Sigma^1\cup\Sigma^2$, provide a Heegaard splitting of $S^3$ if and only if $\Sigma^1$ and $\Sigma^2$ are equivalent fibre surfaces of a fibred knot -- see for example \cite{godapajitnov}.  Consider attempting a similar process as Theorem \ref{thm:fib_edge} for a non-fibred knot.  Taking again a pair of minimal-genus Seifert surfaces $\Sigma^1\cup\Sigma^2$ glued along their boundary, we have a splitting of $S^3$ into two manifolds $A$ and $B$ above and below $\Sigma^1\cup\Sigma^2$.  One of these manifolds, say $B$, can be made a solid handlebody: one can always take a minimal-genus Seifert surface in $S^3$ and thicken it to an interval of surfaces $\Sigma\times [0,1]$ such that the `inside' of $(\Sigma\times\{0\})\cup (\Sigma\times\{1\})$ is a solid handlebody of genus $2g$.  The `outside' component $A$, however, cannot be completed as an interval of Seifert surfaces $\Sigma\times[1,2\pi]$ since the knot is not fibred.  Instead, for a non-fibred knot with minimal-genus Seifert surfaces $\Sigma^1$ and $\Sigma^2$ of genus $g$, the closed surface $\Sigma^1\cup\Sigma^2$ bounds an embedded solid handlebody $H$ of genus $2g$ on one side, but the complement of $H$ in $S^3$ is not diffeomorphic to a solid handlebody.  In this case, Theorem \ref{thmedge} still applies -- the knotted dislocation must still be accompanied by at least $4g$ critical points, as the dislocation still forces a surface in the foliation of genus at least $2g$.  But, the absolute lower bound of $4g$ point defects cannot be saturated when the knot is not fibred.  Edge dislocations have a clear, topological dependence on fibredness: edge dislocations for knots without fibrations force \emph{extra} point defects beyond the $4g$ minimum achievable by fibred knots, heuristically required to turn the non-handlebody component $S^3\backslash H$ into a solid handlebody.

As was the case of the screw dislocation, we once again ask ourselves what the defect structure must be around an edge dislocation in the shape of a fibred knot.  For edge dislocations, the source of the difference between fibred and non-fibred knots is the failure of a pair of minimal genus Seifert surfaces to provide a Heegaard splitting of $S^3$.  Using this as the starting point, it is possible to characterize the number of additional point defects required for edge dislocations without fibrations.  One approach is to utilize the free genus of a knot.  The free genus $g_F$ of a knot is defined as the minimal genus of a Seifert surface $\Sigma_F$ such that $S^3\backslash \Sigma_F$ is a solid handlebody.  A free genus Seifert surface always provides the requisite Heegaard splitting for an edge dislocation, as thickening $\Sigma_F$ to an interval of surfaces bounded by $\Sigma_F^1\cup\Sigma_F^2$ provides a decomposition of $S^3$ into two handlebodies -- the `inside' and `outside' of $\Sigma_F^1\cup\Sigma_F^2$ -- glued along their boundary.  For fibred knots the free genus equals the minimal genus, $g_F=g$, as the knot fibration provides the handlebody structure.  For non-fibred knots the free genus cannot equal the minimal genus, but a free genus Seifert surface $\Sigma_F$ still exists for some $g_F$.  This does not mean that the knot has a fibration with Seifert surfaces of that genus; even though $S^3\backslash \Sigma_F$ is a handlebody, it is still the case that $S^3\backslash \Sigma_F$ cannot be decomposed into $\Sigma_F\times[0,2\pi]$, which is a stronger requirement on the topology of $S^3\backslash \Sigma_F$.  But, free genus Seifert surfaces allow us to build knotted edge dislocations for fibred and non-fibred knots alike.  By starting with Seifert surfaces of free genus, we may repeat a foliation akin to that of Theorem \ref{thm:fib_edge} with $4g_F$ point defects, as the `inside' and `outside' of $\Sigma_F^1\cup\Sigma_F^2$ are both solid handlebodies of genus $g_F$.  The number of extra point defects required when the edge dislocation is not fibred, then, is $4(g_F-g)$.

Alternatively, we could stick with a pair of minimal-genus Seifert surfaces $\Sigma^1\cup\Sigma^2$ and manually `fix' the topology of the non-handlebody component of $S^3\backslash(\Sigma^1\cup\Sigma^2)$.  Call the side of $S^3\backslash(\Sigma^1\cup\Sigma^2)$ that is a solid handlebody $H$.  For every embedded solid handlebody there is a finite collection of disjoint, simple paths $\gamma_j$, called ``tunnels". These start and end on distinct points on $\partial H$, but otherwise lie in $S^3\backslash H$, and have the property that attaching the one-handles whose cores are the tunnels $\gamma_j$ turns $H$ into an embedded solid handlebody (of greater genus) whose complement is now also a solid handlebody \cite{godapajitnov, murao}.  In our case, where $\partial H:=\Sigma^1\cup\Sigma^2$ for a non-fibred knot, we can deform the tunnels $\gamma_j$ so that they all start and end on the same part of $\partial H$, say, they all start and end on $\Sigma^2$. Attaching one-handles to $\Sigma^2$ results in a new Seifert surface $\tilde{\Sigma}^2$ of genus $g+T$, where $g$ is the initial genus of $\Sigma^1$ and $\Sigma^2$ and $T$ is the number of attached one-handles, \emph{i.e.}, the number of tunnels. Then $\Sigma^1\cup\tilde{\Sigma}^2$ bounds solid handlebodies of genus $2g+T$ on either side.  The total space, \emph{i.e.}, the union of the two sides of $\Sigma^1\cup\tilde{\Sigma}^2$, is still $S^3$.  This does not contradict the non-fibredness of the knot, as the Seifert surfaces $\Sigma^1$ and $\Tilde{\Sigma}^2$ are no longer equivalent -- they have different genus.  This differs from the free genus construction, where both Seifert surfaces `involved' in the dislocation had the same genus, $g_F$.  Now, starting from the surface $\Sigma^1\cup\tilde{\Sigma}^2$, we can foliate the solid handlebodies on either side using $4g+2T$ critical points of index 1 or 2 in the same manner as the proof of Theorem~\ref{thm:fib_edge}.  Since the procedure of turning such foliations into liquid crystalline configurations containing edge dislocations works for any pair of Seifert surfaces -- not just two equivalent surfaces -- we obtain an edge dislocation for any type of knot with exactly $4g+2T$ point defects.  

While these two methods allow one to construct smectic configurations with knotted edge dislocations for knots that are not fibred, it is, in practice, quite difficult to determine the absolute minimum number of extra critical points required.  While the free genus of a knot is a knot invariant, it is difficult to find the appropriate set of tunnels required for the latter construction.  For example, it could also be the case that a pair of Seifert surfaces of non-minimal genus requires fewer tunnels than a pair of minimal genus Seifert surfaces of a given knot.  In other words, the minimum number of tunnels, $T_{\text{min}}$, to turn the complement of the union of Seifert surfaces into a handlebody may require two Seifert surfaces of non-minimal genus $g_{T_{\text{min}}}>g$.  This causes the smectic configuration to contain $4g_{T_{\text{min}}}+2T_{\text{min}}$ point defects.  Since the addition of the tunnels creates a Seifert surface $\Sigma_{T_{\text{max}}}$ of non-minimal genus $g_{T_{\text{max}}}=g_{T_{\text{min}}}+T$ that does provide a Heegaard splitting (\emph{i.e.}, $S^3\backslash \Sigma_{T_{\text{max}}}$ is a solid handlebody), we also have that $g_{T_{\text{max}}}\geq g_F$, since the free genus is, by definition, the minimal genus such that $S^3\backslash \Sigma_F$ is a solid handlebody.  These considerations make the exact minimal number of additional point defects required for non-fibred edge dislocation knots difficult to compute in general.  The construction involving free genus does, however, provide a bridge to screw dislocations and the Morse-Novikov number.


\subsection{Screw Revisited}\label{sub:notfibredSCREW}

We now return, at long last, to screw dislocations.  The connection between screw dislocations and knot fibredness, outlined in Section \ref{sec:screw}, is quite clear: the purely-radial local defect structure of screw dislocations calls for knot fibrations in order to ensure a smooth global structure.  On the other hand, what goes wrong for a screw dislocation in the shape of a knot without fibrations is not obvious and requires highly technical machinery.  Results from Morse-Novikov theory\cite{weber} stipulate that a singular open book decomposition -- \emph{i.e.}, a screw dislocation with additional Morse-type point defects in the system -- can be made possible for any knot by allowing critical points, the minimum number of which is $MN(K)$.  While the topology behind these results is quite intricate, the study of edge dislocations helps us interpret these Morse-Novikov points as defects in smectic liquid crystals.  Knotted edge dislocations require critical points even for fibred knots, for the simple reason that the flat layers of the smectic ground state must change their genus to match the topology of Seifert surfaces.  Furthermore, additional point defects are required for non-fibred knots in order to ensure the edge dislocation provides a Heegaard splitting of $S^3$.  Importantly, these extra point defects are of precisely the same type as the original $4g$.  They also come in pairs of index 1 and index 2 critical points, having the same focal conic-like structure discussed for the original $4g$, and, from the perspective of the foliation of surfaces on $S^3$, are interchangeable with the original point defects.  We can use these basic principles to interpret the role of Morse-Novikov points in the topology of screw dislocations and their structure as smectic defects.  

Consider building a screw dislocation Seifert surface-by-Seifert surface.  Namely, begin with a single Seifert surface, $\Sigma$, of minimal genus and thicken it to an interval of Seifert surfaces $\Sigma\times[0,1]$ of the same knot.  The question of fibredness becomes whether it is possible to continue `growing' Seifert surfaces until the surface $\Sigma\times\{2\pi\}$ returns to $\Sigma\times\{0\}$ and $S^3=\Sigma\times[0,2\pi]$ with $\Sigma\times \{0\}$ and $\Sigma\times \{2\pi\}$ identified by a gluing map (the monodromy) that fixes the boundary.  Taking a step back, the intermediary stage -- having just an interval of Seifert surfaces -- is quite similar to the construction of edge dislocations.  For an edge dislocation, the bounding surface $(\Sigma\times\{0\})\cup(\Sigma\times\{1\})$ splits $S^3$ into two components in need of foliation by surfaces which do not have the knot as their boundary.  But, when the knot is not fibred, the outer component is not a solid handlebody.  At a bare minimum, if a screw dislocation is to be built by continuing to grow an interval of Seifert surface $\Sigma\times[0,1]$, the component $S^3\backslash(\Sigma\times[0,1])$ must itself be a solid handlebody of genus $g$ in order to `fit' the surfaces $\Sigma\times[1,2\pi]$, which necessarily form a solid handlebody.  As such, edge and screw dislocations face a similar topological `frustration' for knots that are not fibred: the failure of Seifert surfaces to split space into pieces with the requisite topology causes the expected foliation of surfaces to be impossible.

    \begin{figure}
        \centering
        \includegraphics[height=10.5cm]{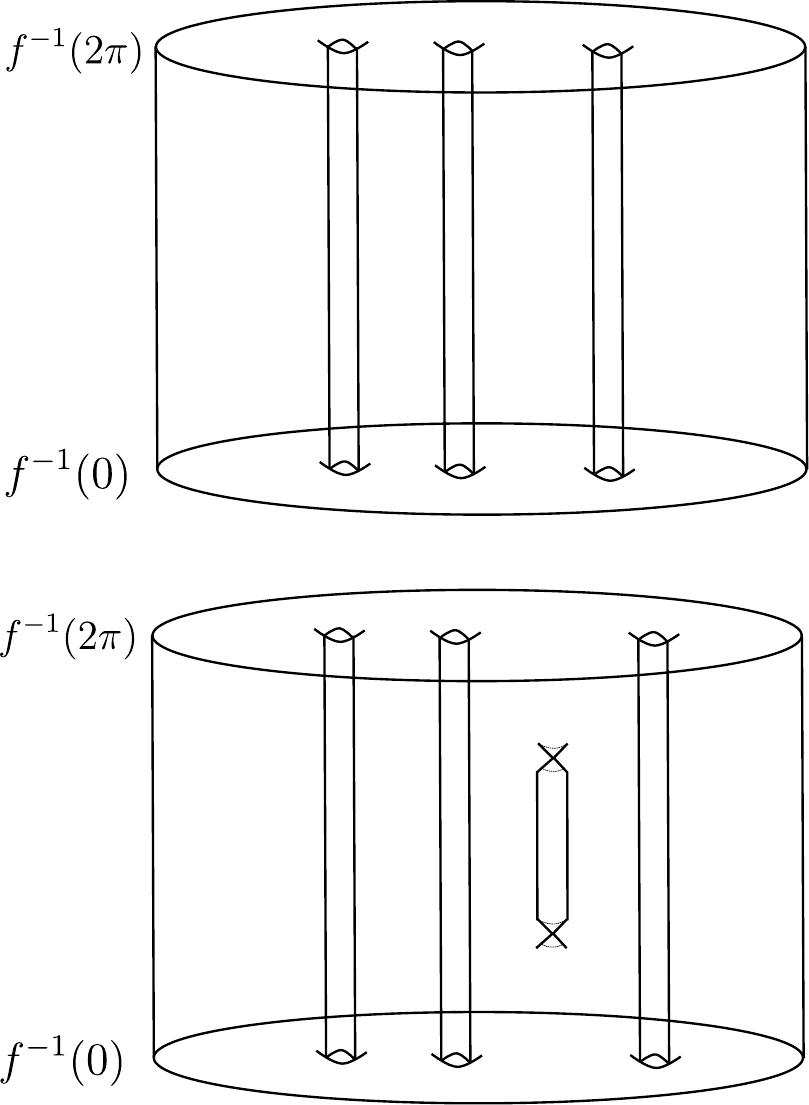}
        \caption{By laying out the smectic layers of a singular open book decomposition on top of each other, we get a representation of screw dislocations for a fibred and non-fibred knots.  The top shows, schematically, a trivial cobordism between Seifert surfaces $f^{-1}(0)$ and $f^{-1}(2\pi)$ representing the knot fibration.  The bottom shows a cobordism between two Seifert surfaces in which the level sets change genus at critical points, representing a singular open book decomposition with $MN(K)=2$ critical points.}
        \label{fig:cobordism}
    \end{figure}

Inspired by knotted edge dislocations, we gain insight into the topological origin of Morse-Novikov points by rephrasing knot fibrations in terms of $S^3$ split by a minimal-genus Seifert surface $\Sigma$ and the corresponding cobordism defined by singular fibrations.  Consider first the topology of the space $S^3\backslash\Sigma$ when the knot is fibred.  The fibration tells us that that $S^3\backslash\Sigma$ is simply $\Sigma\times [0,2\pi]$, where each $\Sigma\times\{t\}$ is an equivalent Seifert surface, \emph{i.e.}, they share the same genus.  We can thus visualize $S^3\backslash\Sigma$ schematically by laying out the $2\pi$ family of Seifert surfaces `on top of each other,' shown in Figure \ref{fig:cobordism}.  From this perspective, the space $S^3\backslash\Sigma$ can be viewed as a cobordism: a three-manifold whose boundaries are two equivalent Seifert surfaces $\Sigma\times\{0\}$ and $\Sigma\times\{2\pi\}$.  For a fibred knot, the cobordism is a trivial one; the `vertical' level sets are all equivalent Seifert surfaces.  Now consider a map $f:S^3\backslash K \rightarrow S^1$ for a non-fibred knot and its associated singular points.  Taking $f^{-1}(0)$ to be a level set containing no critical points and having minimal genus of all level sets of the singular fibration, we can similarly visualize the topology of $S^3\backslash f^{-1}(0)$ by laying out the level sets defined by $f^{-1}(\theta)$.  The space $S^3\backslash f^{-1}(0)$ is once again a cobordism with boundaries $f^{-1}(0)$ and $f^{-1}(2\pi)$ that are equivalent Seifert surfaces.  However, because the knot is not fibred, the cobordism cannot be a trivial one.  Instead, there must be places in the cobordism where the topology of the level sets change.  This change in topology happens precisely at index 1 and index 2 critical points, shown in Figure \ref{fig:cobordism}, where the genus of the Seifert surfaces changes.  These are the so-called Morse-Novikov points, and, from this picture, it is clear that they have exactly the same form as the point defects required for edge dislocations.

The language of cobordisms allows us to explicitly connect back to extra point defects required for non-fibred edge dislocations.  In a singular foliation with minimal (exactly $MN(K)$) singular points, there will be a surface of smallest genus $g_{\text{min}}^{MN(K)}$ and greatest genus $g_{\text{max}}^{MN(K)}$.  The difference in their genera is the number of handles attached to the surface by the critical points, $g_{\text{max}}^{MN(K)}-g_{\text{min}}^{MN(K)}=MN(K)/2$.  The surface of greatest genus $\Sigma_{\text{max}}$ is a free surface, since, by construction, the addition of Morse-Novikov points turns $S^3\backslash\Sigma_{\text{max}}$ into a handlebody.  The free genus of a knot is the minimum possible genus of a Seifert surface such that the complement of that Seifert surface is a solid handlebody, so we have the inequality $g_F\leq g_{\text{max}}^{MN(K)}$\cite{pajitnov}.  As such, we see that the extra point defects required for non-fibred edge dislocations, at least in the free-genus picture, can be related to Morse-Novikov points both in their function -- providing Heegaard splitting -- and their number, $4g_F\leq 4MN(K)$.  Note that the cobordism picture also tells us that the Morse-Novikov number must be even, any change in the topology of level sets must be undone so that the Seifert surfaces $f^{-1}(0)$ and $f^{-1}(2\pi)$ are equivalent.  And so, just like for edge dislocations, the point defect structure of a singular fibration is a focal conic-like structure, containing an equal number of index 1 and 2 critical points that cannot annihilate without changing the knot type.

With this interpretation of Morse-Novikov points, we can finally pin down the topological properties of knotted screw dislocations for all knots, fibred or not.  On $S^3$, any knotted screw dislocation, the topology of which is provided by a (potentially singular) knot fibration, can be achieved with $MN(K)$ point defects.   When the knot is fibred, $MN(K)=0$ and all of $S^3\backslash K$ is smooth and defect free, recovering the construction of Section \ref{sec:screw}.  For non-fibred knots, exactly $MN(K)/2$ of these point defects are index 1 and $MN(K)/2$ are index 2 Morse-critical points.  Recall that, for the edge dislocation, $4g$ point defects are required to change the genus of the Seifert surfaces to match boundary conditions, and $4(g_F-g)$ are required so that $S^3\backslash\Sigma^1\cup\Sigma^2$ provides a Heegaard splitting, where $\Sigma^1\cup\Sigma^2$ are a pair of Seifert surfaces glued along their knot.  The $MN(K)$ point defects required of screw dislocations perform the same function, changing the genus of Seifert surfaces.  These Morse-Novikov points not only provide a Heeggaard splitting of $S^3$ -- coming from the fact that $S^3\backslash\Sigma_{\text{max}}$ is a handlebody -- but they further fix the topology of the corresponding handlebody to allow for a fibration structure.  And so, just as all non-trivially knotted edge dislocation require focal conic-like singular points, screw dislocations for non-fibred knots require $MN(K)$ point defects.  Furthermore, the $MN(K)/2$ index 1 and $MN(K)/2$ index 2 critical points associated to a non-fibred screw dislocation knot cannot annihilate each other without changing the knot type of the defect.  We see here that notions of fibredness underpin the topological properties of smectic dislocations.  Edge and screw defects, understood as two different types of singular foliations on $S^3\backslash K$, can be viewed as two sides of the same coin.  While the different local defect structures and desired foliations lead to a different global structure -- \emph{i.e.}, number of point defects required -- for edge and screw, the two are both classified by knot and Morse theory and sensitive to knot fibration.  

Before moving on from dislocations and applying similar techniques to study the topological properties of knotted smectic disclinations, we briefly return to the method of projection utilized for screw dislocations.  To project a knot fibration on $S^3$ and achieve a smectic configuration in $\mathbb{R}^3$ with smectic boundary conditions, in Section \ref{sec:screw} we used a `poke and unwrap' projection method that resulted in a smectic texture with a screw dislocation inside of a three-ball and a hyperbolic disclination of charge $-1$ on the bounding two-sphere.  For knots without fibration, we can simply use the same projection on a singular fibration with $MN(K)$ singular points in $S^3$ so long as the Morse-Novikov points do not sit at the projection point in $S^3$.  This results in a smectic texture in $\mathbb{R}^3$ with a knotted screw dislocation (fibred or non-fibred), $MN(K)$ point defects, and a hyperbolic disclination loop.  However, note that, as discussed in Section \ref{sec:screw}, this extra disclination line can be shrunk to a point defect, shown in Figure \ref{Fig:Screw_Projection_LoopPoint}.  At the same time, we have now developed the machinery of Morse-critical points as they relate to smectic defects, and have introduced a seemingly different projection method for edge dislocations.  The projection for edge dislocations makes a cut between the global maxima and minima on $S^3$, making use of the index 0 and 3 critical points that were already required for the edge dislocation's foliation on $S^3$.  Although the foliation of surfaces for screw dislocations on $S^3$ require no such critical points (only index 1 and 2 points when the knot is not fibred) on $S^3$, we can always add an extraneous pair of neighboring Morse-index critical points to make this projection method work.

    \begin{figure}
        \centering
        \includegraphics[height=8.5cm]{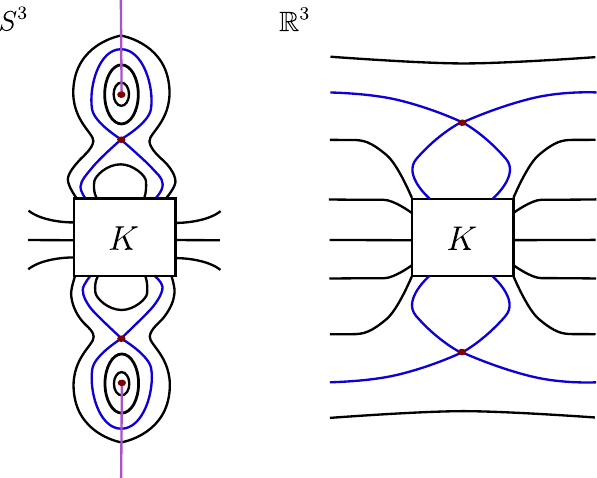}
        \caption{Projecting a screw dislocation on $S^3$ to a smectic configuration on $\mathbb{R}^3$ using point defects.  On $S^3$ (left), two pairs of critical points are added to a single Seifert surface (blue) of a knot fibration.  By making a cut between the resulting maxima and minima (purple) passing transversely through each surface once, the punctured surfaces can be `unwrapped' to match smectic boundary conditions in $\mathbb{R}^3$.  Note that the surfaces of the fibration are only shown in cross section far from the knot; locally the surfaces have the expected open book structure along the knotted defect.  In $\mathbb{R}^3$, this means that each surface comes in from the boundary and `enters' the knot once -- for example, the blue lines represent a single surface with the same knot as its boundary as the other (punctured) Seifert surfaces.}
        \label{fig:ScrewPtProjection}
    \end{figure}

While they are not forced by the knottedness of the defect, consider adding a pair of index 0 and 1 critical points and a pair of index 2 and 3 critical points to a knot fibration.  Acting as the maximum and minimum on $S^3$, we can use the index 0 and 3 critical points to project to the unit cube.  As in the projection of the edge dislocation in Theorem \ref{thm:fib_edge}, make a cut $\gamma$ between the maximum and minimum on $S^3$ that is transverse to all of the surfaces it intersects.  We can always deform the cut and surfaces of the knot fibration such that each level set is intersected by $\gamma$ once.  After removing the cut and deforming the surfaces to sit on the unit cube $X$, we are left with a configuration with smectic boundary conditions on $\partial X$, a knotted screw dislocation in the shape of $K$, and two point defects (one of Morse index 1 and one of index 2).  Under this projection, which is shown in Figure \ref{fig:ScrewPtProjection}, each Seifert surface of the knot fibration is punctured, meaning each smectic layer extends to the boundary of $X$.  The resulting configuration is more akin to the straight-line screw dislocation: a $2\pi$ family of surfaces that are flat at infinity share the knot as their common boundary.  Unlike the straight-line screw dislocation's family of helicoids, here we require two point defects to fill smectic layers `above' and `below' the $2\pi$ Seifert surfaces in $\mathbb{R}^3$.  Finally, note that, classified according to Morse theory, this projection and the `poke and unwrap' projection of Section \ref{sec:screw} are equivalent.  Shrinking the hyperbolic disclination to a point results in a point defect that is not Morse, since it is degenerate.  However, it can be split into a pair of Morse points by separating the hyperbolic regions along a layer.  These are the same index 1 and index 2 critical points present in this projection; by moving one of the critical points through the Seifert surfaces within the three-ball, one recovers this second projection wherein all smectic layers reach the boundary of the sample\footnote{The critical points from this projection change the number of connected components of the level sets rather than the genus, allowing the Seifert surfaces to `connect' with the flat layers outside the three-ball.}.  By using a unified projection method for both edge and screw, we have arrived at a final Morse-theoretic classification of knotted screw dislocations in $\mathbb{R}^3$: they require at least $2+MN(K)$ point defects, another focal conic-like structure of half index 1 and half index 2 critical points.  As mentioned for edge dislocations, the required Morse points may rearrange themselves in a manner compatible with the smectic liquid crystal, for example by forming a disclination loop as per the original projection method of Section \ref{sec:screw}.

\section{Knotted Disclinations}\label{sec:+12}

\subsection{$+1/2$ Disclinations}

The previous sections have outlined how tying screw dislocations into non-fibred knots and edge dislocations into any non-trivial knot leads to the presence of focal conic-like point defects in the system.  This analogy spawned from the point defect structure of the Dupin cyclides, which similarly contained an even number of index 1 and index 2 Morse critical points.  In experiments, the smectic layers of a focal conic domain tend to form not as the entirety of the Dupin cyclides but only as part of the cyclides, often the negative Gaussian curvature part of the surfaces.  Taking the negative Gaussian curvature portion of the symmetric cyclides, shown in Figure \ref{fig:cyclides}, results in a smectic configuration with a $+1/2$ disclination loop and a single point defect.  While the second point defect of the cyclides is not apparent in this picture, one can think of the disclination loop itself as having point charge and forcing the central Morse point into the system.  By similar means of the construction of edge dislocations, we can both formalize this statement using Morse theory and analyze the topological properties of knotted $+1/2$ disclinations.  This acts as a direct topological generalization of focal conic domains.  While knotted $+1/2$ disclinations cannot share the geometric properties of the Dupin cyclides\footnote{When the disclination is knotted, both lines of curvature cannot be circles.}, a characterization of their topological properties serves to illuminate what would happen if the characteristic focal conic domain ellipses observed in experiment were, instead, knots.  This furthers the analogy between focal conic domains and knotted dislocations: not only are additional point defects required due to the knottedness of $+1/2$ disclinations, but they are explicitly on par with the original focal conic domain point defects.

At first glance it seems as though the methods of Section \ref{sec:edge_dislocations} will not apply to disclinations.  The analysis of edge dislocations was aided by the necessity of exactly two Seifert surfaces: gluing the surfaces along their boundary resulted in a smooth, boundary-free surface.  On the other hand, $+1/2$ disclinations introduce both a surface with boundary and non-orientability in the director field (as in the straight-line disclination of Figure \ref{fig:Smectic_Examples}).  Fortuitously, we can still apply simple Morse theory arguments to knotted $+1/2$ disclinations with the following construction, shown in in Figure \ref{Ex_Disc_Knot}. Assume that the surface bounded by the disclination is orientable and hence a Seifert surface of the knot, say of genus $g$. Then the surrounding surfaces are closed surfaces of genus $2g$, as indicated in the bottom left of Figure~\ref{Ex_Disc_Knot}. Such a local picture can be obtained from any given Seifert surface $\Sigma_0$ for a knot $K$ by first thickening it to an interval $\Sigma_{t}$ with $t\in[-\pi,\pi]$.  Then take the surfaces $\Sigma_{\pi},\Sigma_{-\pi}$ and push them perpendicular to the knot opposite $\Sigma_0$.  Gluing the push-offs $\Tilde{\Sigma}_\pi$ and $\Tilde{\Sigma}_{-\pi}$ results in a closed surface $\Tilde{\Sigma}_{\pi}\cup\Tilde{\Sigma}_{-\pi}$ of genus $2g$, and the remaining surfaces $\Sigma_t$ sit inside.  Repeating this gluing process for all $\Sigma_t$ results in a foliation of the inside of $\Tilde{\Sigma}_{\pi}\cup\Tilde{\Sigma}_{-\pi}$ containing a knotted $+1/2$ disclination.  Importantly, this construction avoids both the non-co-orientability and surface boundaries of the $+1/2$ disclination: the disclination itself foliates the inside of a boundary-free surface of genus, and we need only worry about foliating the outside.

\begin{figure}
    \centering
    \includegraphics[scale=2.5]{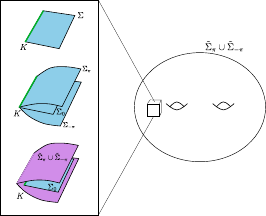}
    \caption{\label{Ex_Disc_Knot}{Construction of a knotted $+1/2$ disclination from the thickening of a Seifert surface $\Sigma$.  Note that, once the Seifert surfaces $\Sigma_\pi,\Sigma_{-\pi}$ are pushed off the knot and glued to form $\Tilde{\Sigma}_\pi\cup\Tilde{\Sigma}_{-\pi}$, the $+1/2$ disclination and its nearby surfaces fill the inside of $\Tilde{\Sigma}_\pi\cup\Tilde{\Sigma}_{-\pi}$, a surface of genus at least $2g$.  This differs from a knotted edge dislocation: here the smectic defect fills a surface of $2g$ whereas the edge dislocation acts only as the surface itself. 
 Thus we need only foliate the outside of $\Tilde{\Sigma}_\pi\cup\Tilde{\Sigma}_{-\pi}$, shown on the right, in $S^3$.}}
    \label{Ex_Disc_Knot}
\end{figure}

As in the Section \ref{sec:edge_dislocations} we only consider foliations that satisfy certain assumptions that allow us to apply Morse theory.  Again, we will only consider foliations with bounded domain, \emph{i.e.}, in the cube $X$ with the usual boundary condition. Furthermore, we want a configuration in which the $+1/2$ disclination is the only line defect and that the surface that is bounded by the knotted disclination is orientable. Recall that $\Tilde{\Sigma}_{\pi}\cup\Tilde{\Sigma}_{-\pi}$ is the closed surface of genus $2g$ that encloses the $+1/2$ disclination. We have described the foliation inside of the solid handlebody $H$ that is bounded by $\Tilde{\Sigma}_{\pi}\cup\Tilde{\Sigma}_{-\pi}$ and that contains the disclination.  Foliations containing a $+1/2$ disclination, such as that of $H$ itself, are not level sets of a circle-valued Morse function, due to the presence of a singular line $K$.  However, we assume that outside of $\Tilde{\Sigma}_{\pi}\cup\Tilde{\Sigma}_{-\pi}$ the leaves of the foliation are given by the level sets of a circle-valued Morse function. By the same arguments as in the previous section we can work with a real-valued Morse function on $S^3\backslash H$.

\

\begin{theorem}\label{thmdisc}
    Under the above assumptions, knotted smectic $+1/2$ disclinations require at least $2g+1$ point singularities, where $g$ is the minimum genus for a Seifert surface of the given knot.
\end{theorem}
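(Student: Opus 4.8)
The plan is to mirror the strategy of Theorem~\ref{thmedge}: realise the configuration as a real-valued Morse function on a piece of $S^3$, locate a level set of prescribed genus, and count critical points. The new feature is that the disclination now \emph{fills} a handlebody instead of lying on a closed surface, so the bookkeeping changes; as we will see, it is the Morse identity $\chi(S^3)=0$ that supplies the extra ``$+1$''.

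First I would pin down the local model. By the construction of Figure~\ref{Ex_Disc_Knot} and the stated hypotheses, the $+1/2$ disclination $K$ is spanned by an orientable layer $\Sigma$, hence a Seifert surface of some genus $g_0\ge g$, and together with its neighbouring layers it fills a region $H$ bounded by the closed leaf $\tilde\Sigma_\pi\cup\tilde\Sigma_{-\pi}$ of genus $2g_0$. Since $H$ is a slight enlargement of a regular neighbourhood of $\Sigma$, and a thickening of a connected surface with one boundary component is a handlebody of genus equal to its first Betti number, $H$ is a solid handlebody of genus $2g_0$. Next I would embed the cube $X$ in $S^3$ exactly as in Section~\ref{sub:MorseAndProjection}, adding the balls $B_1,B_2$ with their centres $p_1$ (global maximum) and $p_2$ (global minimum); both $B_i$ lie in $M:=S^3\setminus\mathrm{int}(H)$. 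Because the only line defect is the disclination and it carries no dislocation charge, the phase field has trivial monodromy on $S^3\setminus K\supset M$, so it lifts to a real-valued Morse function $\tilde F\colon M\to\mathbb R$ whose critical points are $p_1$, $p_2$ and the texture's point defects, and for which $\partial H$ is a regular level set. As in Section~\ref{sub:MorseAndProjection}, the number $N$ of point defects equals $(\#\mathrm{crit}\,\tilde F)-2$.

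Then I would extend $\tilde F$ over $H$ to a Morse function $\hat F\colon S^3\to\mathbb R$. Since $H$ is a genus-$2g_0$ handlebody and $\tilde F|_{\partial H}$ is a constant regular value, $\tilde F$ extends over $H$ by a standard handlebody Morse function with exactly $2g_0+1$ critical points: either one of index $0$ and $2g_0$ of index $1$, or one of index $3$ and $2g_0$ of index $2$, according to the direction in which $\tilde F$ crosses $\partial H$. Write $c_i$ for the index-$i$ critical-point counts of $\hat F$ and $c_i^{\mathrm{pd}}$ for those among the $N$ point defects. I would then combine: (i) Proposition~\ref{propMorse}, using that $\partial H$ is a genus-$2g_0$ component of a level set of $\hat F$, which forces $c_1\ge 2g_0$ and $c_2\ge 2g_0$; (ii) the Morse identity $c_0-c_1+c_2-c_3=\chi(S^3)=0$; and (iii) the explicit contributions of $p_1,p_2$ and of the handlebody extension. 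In the case where the extension contributes $\{1\text{ index }0,\ 2g_0\text{ index }1\}$ one gets $c_2^{\mathrm{pd}}\ge 2g_0$ and $c_0^{\mathrm{pd}}-c_1^{\mathrm{pd}}+c_2^{\mathrm{pd}}-c_3^{\mathrm{pd}}=2g_0-1$, hence $N=2g_0-1+2(c_1^{\mathrm{pd}}+c_3^{\mathrm{pd}})$ with $c_1^{\mathrm{pd}}+c_3^{\mathrm{pd}}\ge1$, so $N\ge 2g_0+1\ge 2g+1$; the other case is symmetric. Heuristically, $2g_0$ index-$2$ (or index-$1$) critical points are needed to collapse the genus-$2g_0$ leaf back to spheres, and the extra ``$+1$'' is forced because the handlebody extension contributes one more extremum of the same type as $p_1$ or $p_2$, and $S^3$ cannot carry two maxima (or two minima) with only those critical points without one more.

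I expect the main obstacle to be the structural reductions rather than the counting: justifying that one may work with a genuine \emph{real-valued} Morse function on $M$ — i.e.\ that the phase field's monodromy around the disclination, and around every loop of $M$, vanishes under the hypotheses, and that the non-co-orientability usually associated with a $+1/2$ line is entirely confined to $H$ by the construction — together with the verification that $H$ really is a genus-$2g_0$ handlebody whose boundary is a regular leaf. Once these are in place, the bound $N\ge 2g+1$ follows routinely from Proposition~\ref{propMorse} and the Morse identity, and it is sharp whenever the spanning layer $\Sigma$ realises the minimal Seifert genus, recovering in the case $g=0$ the single point defect of a toric focal conic domain.
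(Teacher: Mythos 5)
Your proposal is correct, and it reaches the bound by a genuinely different mechanism than the paper. The paper never extends the Morse function over the handlebody $H$: it works with $\tilde F$ on $S^3\backslash H$ only, splits into two cases according to whether $\Sigma_0$ meets $\partial X$ in an even or odd number of components (equivalently, whether $p_1,p_2$ are a max--min pair or two extrema of the same type), homotopes $\tilde F$ so that $\tilde\Sigma_\pi\cup\tilde\Sigma_{-\pi}$ sits at the extreme value alongside the pole(s), and then reads off directly from the level-set evolution that one index-$1$ point is needed to merge connected components and $2g$ points of index $2$ (or $1$) are needed to change the genus. You instead cap $H$ off with a minimal handlebody Morse function ($1+2g_0$ critical points), obtain a genuine Morse function $\hat F$ on all of $S^3$, and then deduce the bound from Proposition~\ref{propMorse} applied to the genus-$2g_0$ leaf $\partial H$ together with the identity $c_0-c_1+c_2-c_3=\chi(S^3)=0$; I checked your bookkeeping in both crossing directions and it closes (in particular $c_1^{\mathrm{pd}}+c_3^{\mathrm{pd}}\ge 1$ does follow, for $g_0\ge 1$ from the parity of $N$ against $N\ge c_2^{\mathrm{pd}}\ge 2g_0$, and for $g_0=0$ from $N\ge 0$). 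Your route is more algebraic and reuses Proposition~\ref{propMorse} verbatim, which makes the ``$+1$'' appear automatically from the Euler characteristic rather than from inspecting how components merge; the paper's route is more geometric and yields sharper information about \emph{which} indices the forced defects have (one component-merging saddle plus $2g$ genus-changing saddles of a single index), which is what feeds the focal-conic interpretation. The structural caveats you flag — that the phase field on $S^3\setminus\mathrm{int}(H)$ has trivial monodromy and so lifts to a real-valued function, and that $H$ is a genus-$2g_0$ handlebody with $\partial H$ a regular leaf — are treated at essentially the same level of rigor in the paper (it appeals to ``the same arguments as in the previous section''), so they are not gaps relative to the published argument.
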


\begin{proof} 
Once again we apply Morse theory to the surfaces around a knotted $+1/2$ disclination. Consider the embedding of the cube $X$ in $S^3$ of Figure \ref{Cube_S3}, filling the complement of $X$ with attached disks $D_t$, $t\in[0,1]$ and 3-balls $B_1$ and $B_2$ that are foliated by concentric spheres whose center points are $p_1\in B_1$ and $p_2\in B_2$. As in Section \ref{sub:MorseAndProjection}, we can describe the foliation of $S^3\backslash H$ via a real-valued Morse function $\tilde{F}$.

While the foliations on $S^3$ are identical, there are two cases to consider in $X$.  Suppose that the surface $\Sigma_0$ that is bounded by the $+1/2$ disclination intersects the boundary $\partial X$ in an even number of connected components. Then there is a path from $p_1$ to $p_2$ that is always transverse to the leaves of the foliation and intersects $\Sigma_0$ an even number of times. It follows that one of the points, $p_1$ or $p_2$, is a local minimum and the other, $p_2$ or $p_1$, is a local maximum. After a homotopy that does not change the level sets we can take the local maximum, say $p_1$, to be a global maximum, and $p_2$ to be a global minimum. However, $\tilde{F}$ also has a connected component of a level set that is $\Tilde{\Sigma}_{\pi}\cup\Tilde{\Sigma}_{-\pi}$. We can homotope $\tilde{F}$ without changing the foliation until $y:=\tilde{F}(p_2)=\tilde{F}(\Tilde{\Sigma}_{\pi}\cup\Tilde{\Sigma}_{-\pi})$, that is, both $p_2$ and $\Tilde{\Sigma}_{\pi}\cup\Tilde{\Sigma}_{-\pi}$ attain the same minimal value. The level set $\tilde{F}^{-1}(y+\varepsilon)$ with $\varepsilon>0$ small then consists of the disjoint union of a sphere and a closed surface of genus $2g$. The rest of the argument is again a simple consequence of the Morse lemma. In order to go from a level set of this topology to a sphere that lies around the global maximum $p_1$, we need at least $2g+1$ critical points of index 1 or 2. An index 1 critical point is necessary to connect the two connected components of $\tilde{F}^{-1}(y)$ and the remaining $2g$ critical points of index 2 reduce the genus from $2g$ to 0.

The second case is when $\Sigma_0$ intersects $\partial X$ in an odd number of connected components. Then there is a path from $p_1$ to $p_2$ that is transverse to the leaves of the foliation and that intersects $\Sigma_0$ an odd number of times. It follows that $p_1$ and $p_2$ are both local minima or both local maxima of $\tilde{F}$. Again we can homotope $\tilde{F}$ until $p_1$ and $p_2$ have the same image and are global minima (or maxima) and $\tilde{F}(\Tilde{\Sigma}_{\pi}\cup\Tilde{\Sigma}_{-\pi})$ is the global maximum value (or minimum) of $\tilde{F}$. Level sets close to the global minimum are the disjoint union of two two-spheres, so that we need again $2g+1$ critical points of index 1 or 2 in order to turn them into closed surface of genus $2g$. One of them is necessary to reduce the number of connected components and the remaining $2g$ critical points increase the genus from $0$ to $2g$.
\end{proof}

\begin{figure}
    \centering
    \includegraphics[scale=.6]{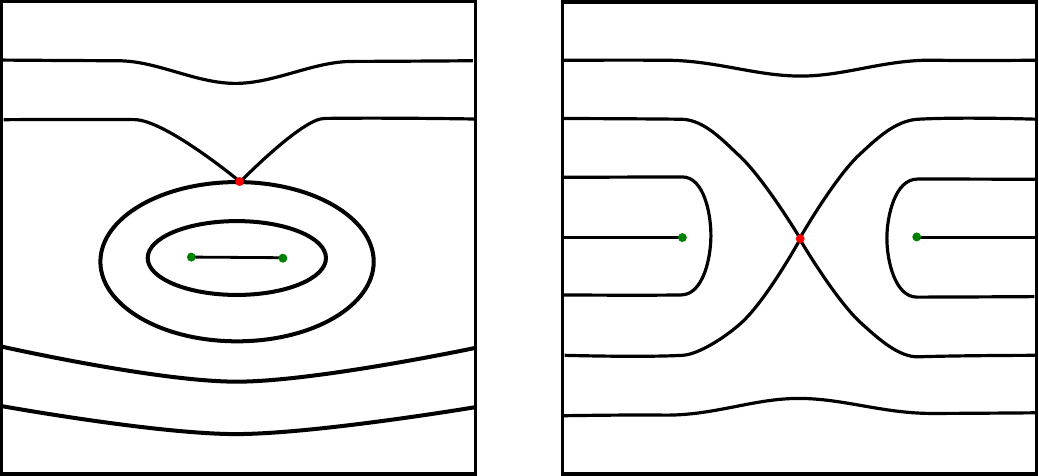}
    \caption{A cross-sectional view of the smectic foliation for unknotted $+1/2$ disclination loops.  The case in which the surface $\Sigma_0$ with the disclination (green) as its boundary does not intersect the boundary of $X$ is shown on the left, and the case in which $\Sigma_0$ reaches the boundary is shown on the right.  The first case is a point defect extended to a loop, and the latter case is the geometry of a focal conic domain.  In both cases, a critical point (red) is required to cancel the disclination's point charge in order to satisfy boundary conditions.  Additional point defects are required to change genus when the disclination is knotted.}
    \label{Fig_Disc_Proof}
\end{figure}

The two cases presented in the proof are distinguished by the parity of the number of connected components in the intersection of $\Sigma_0$ and $\partial X$. Since it was one of our assumptions that $K$ is the only line defect and that outside of $\Tilde{\Sigma}_{\pi}\cup\Tilde{\Sigma}_{-\pi}$ the foliation is given by a Morse function, it follows that this number of connected components is at most 1. Therefore, the two cases simply correspond to whether $\Sigma_0$ intersects $\partial X$ or not.  As an illustrative example, consider again the unknot.  In the case that $\Sigma_0$ does not intersect the boundary, the surfaces $\Tilde{\Sigma}_{\pi}\cup\Tilde{\Sigma}_{-\pi}$ and $\Tilde{\Sigma}_{\delta}\cup\Tilde{\Sigma}_{-\delta}$ are spheres inside of $X$ surrounding $\Sigma_0$ and the unknot.  Noting that defect loops in nematic and smectic liquid crystals can act as point charge\cite{Kamien_RevModPhys_Defects}, one recognizes that this $+1/2$ disclination is nothing but a point defect (at the center of concentric spheres) extended to a loop.  To satisfy boundary conditions, this point charge must be canceled by another point defect in the system.  This is achieved by the index 1 critical point, which, on $S^3$, originates from the need to reduce the number of connected components of the level set $\tilde{F}^{-1}(y+\varepsilon)$ where $y:=\tilde{F}(p_2)=\tilde{F}(\Tilde{\Sigma}_{\pi}\cup\Tilde{\Sigma}_{-\pi})$.  In $\mathbb{R}^3$ the critical point connects two disjoint flat layers so that they form a single sphere around the disclination, shown in the left of Figure \ref{Fig_Disc_Proof}.  The other case is where $\Sigma_0$, and thus $\Tilde{\Sigma}_{\pi}\cup\Tilde{\Sigma}_{-\pi}$, reach the boundary.  Since the boundary condition forces the surfaces above and below the unknot and $\Tilde{\Sigma}_{\pi}\cup\Tilde{\Sigma}_{-\pi}$ in $\mathbb{R}^3$ to be flat sheets, there is a `hole' left inside of the unknot that must be foliated with surfaces.  The only way to achieve this is to allow one sheet above to attach to one sheet below at a pinch, filling the center of the unknot.  This index 1 critical point connects two flat sheets into one connected component that wraps around the disclination line; on $S^3$ this corresponds to reducing the connected component of the level sets containing the spheres surrounding $p_1,p_2$.  Foliations of $X$ for these configurations are sketched on in Figure \ref{Fig_Disc_Proof}.  This latter case corresponds to the $+1/2$ disclination loop geometry of a focal conic domain.  Note that, on $S^3$, there is no difference between the focal conic domain disclination and the disclination loop acting as the core of a concentric-spheres point defect.  Both disclination loops carry point charge that must be canceled by a point defect, and the only difference between the two is the projection to $\mathbb{R}^3$.

Not only is this result interesting for knotted $+1/2$ disclinations themselves, but it also brings further context to the focal conic domain analogy used to described the point defects associated with knotted edge and screw dislocations.  In smectic texture containing a knotted $+1/2$ disclination, there must be at least $2g$ point defects associated to the knottedness and the original `focal conic domain' defect.  But, from the perspective of the Morse points, these point defects are not only identical but \emph{interchangeable}!  The requirement of extra point defects to knot edge dislocations, screw dislocations, and $+1/2$ disclinations is, truly, a topological extension of the well-studied focal conic domain.

Knotted $+1/2$ disclinations are also sensitive to knot fibrations, in an analogous manner to edge dislocations.  Once again, achieving the lower bound of $2g+1$ point defects requires a Heegaard splitting of $S^3$, and a minimal-genus Seifert surface only provides a Heegaard splitting if the knot is fibred.

\

\begin{corollary}\label{thm:fib_12}
    Let $K$ be a fibred knot. Then there is a configuration of a smectic liquid crystal that satisfies the assumptions above, that has a $+1/2$ disclination in the shape of $K$ and exactly $2g+1$ point defects, where $g$ is the minimal genus of the knot.
\end{corollary}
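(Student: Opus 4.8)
The plan is to run the argument for Theorem~\ref{thm:fib_edge} essentially verbatim, with the union of two Seifert surfaces replaced by the closed surface $W:=\Tilde{\Sigma}_{\pi}\cup\Tilde{\Sigma}_{-\pi}$ that encloses the disclination in the construction of Figure~\ref{Ex_Disc_Knot}. First I would take $\Sigma_0$ to be a fibre surface of the fibred knot $K$; since $K$ is fibred, $\Sigma_0$ realises the minimal genus $g$. Carrying out the thickening-and-push-off construction of Figure~\ref{Ex_Disc_Knot} with this $\Sigma_0$ produces a solid handlebody $H$ of genus $2g$, foliated by the $+1/2$ disclination together with its pages $\Sigma_t$, whose boundary is $W$. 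Taking $\Sigma_0$ compact and in the interior, this is the case (``case~1'' in the proof of Theorem~\ref{thmdisc}) in which $\Sigma_0$ does not meet $\partial X$, which is the branch yielding exactly $2g+1$ critical points.

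The key topological input is that, because $\Tilde{\Sigma}_{\pi}$ and $\Tilde{\Sigma}_{-\pi}$ are parallel push-offs of fibre surfaces of $K$, glued along the push-off knot $\Tilde{K}$ -- which is isotopic to $K$ and hence fibred, with $\Tilde{\Sigma}_{\pm\pi}$ among its fibre surfaces -- the surface $W$ is a genus-$2g$ Heegaard surface of $S^3$. This is exactly the fact (from \cite{godapajitnov}) invoked in the discussion following Theorem~\ref{thm:fib_edge}: two equivalent fibre surfaces of a fibred knot, glued along the knot, split $S^3$ into two solid handlebodies. Consequently $V:=S^3\backslash\mathrm{int}(H)$ is also a solid handlebody of genus $2g$, with $\partial V=W$ and $\chi(V)=1-2g$. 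Next I would foliate $V$ by a real-valued Morse function $\Tilde{F}$ with $W$ as its minimum level set and critical points: one interior minimum $p_2$ of index $0$, one critical point of index $1$, $2g$ critical points of index $2$, and one maximum $p_1$ of index $3$. Such a function exists -- for instance, invert a standard handlebody Morse function on $V$ (boundary at the top, one index-$0$ and $2g$ index-$1$ points) to put $W$ at the bottom with $2g$ index-$2$ and one index-$3$ point, then insert one cancelling pair of index $0$ and $1$ -- and the count is consistent with $\chi(V)=1-2g$. Geometrically, as in case~1 of the proof of Theorem~\ref{thmdisc}: the leaves just above $W$ are closed surfaces of genus $2g$; a small sphere is born at $p_2$; the index-$1$ point tubes it onto the genus-$2g$ leaf; the $2g$ index-$2$ points (each locally as in Figure~\ref{fig:critical}) successively reduce the genus to $0$; the resulting sphere collapses to $p_1$. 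Gluing this foliation of $V$ to the disclination foliation of $H$ along $W$ yields a singular foliation of $S^3$ whose only line defect is the $+1/2$ disclination in the shape of $K$ and whose only isolated singularities are $p_1$, $p_2$ and the $2g+1$ index-$1$ and index-$2$ points.

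Finally I would project to a cube $X$ with smectic boundary conditions exactly as in the proof of Theorem~\ref{thm:fib_edge}: choose a path $\gamma$ from $p_1$ to $p_2$ transverse to every leaf it meets and disjoint from $K$ (possible since $K\subset\mathrm{int}(H)$ while $\gamma$ may be taken to cross $W$ once and otherwise stay in $V$), delete the interior of a ``football'' neighbourhood $N(\gamma)$, identify the resulting $3$-ball with $X$, grow $p_1$ and $p_2$ into the top and bottom faces, and straighten the leaves until they are horizontal on $\partial X$. This removes precisely $p_1$ and $p_2$ and does not change the number of remaining critical points, so the smectic configuration in $X$ satisfies all the stated assumptions, has a $+1/2$ disclination in the shape of $K$, and has exactly $1+2g=2g+1$ point defects -- one of Morse index $1$ and $2g$ of Morse index $2$ -- which by Theorem~\ref{thmdisc} is the minimum. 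The main obstacle is the topological step: checking that fibredness of $K$ makes $W$ a Heegaard surface (and in particular that the push-off used to form $W$ does not spoil the fibre structure), so that $V$ is a handlebody; once that is secured, the Morse-theoretic bookkeeping and the cube projection are routine adaptations of the proof of Theorem~\ref{thm:fib_edge}.
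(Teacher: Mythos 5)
Your construction is essentially the paper's own first construction for this corollary (the case where $\Sigma_0$ stays in the interior of $X$): both use fibredness to make $V=S^3\setminus H$ a genus-$2g$ handlebody, foliate it with one index-0, one index-1, $2g$ index-2 and one index-3 critical point, and then remove a football neighbourhood of a transverse arc from $p_1$ to $p_2$ -- the only difference is bookkeeping, since the paper absorbs the ball around $p_2$ and the joining one-handle into an enlarged handlebody before foliating the complement. One small slip: because $p_1$ and $p_2$ both lie in $V$, the arc $\gamma$ must cross $W$ an even number of times, so it should be taken to cross $W$ zero times (staying entirely outside $H$), not once; this is exactly what keeps $\Sigma_0$ compact in $X$ and does not affect the count of $2g+1$ point defects.
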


\begin{proof}
    We will present two constructions. One of them results in a $+1/2$ disclination that bounds a surface that extends to the boundary $\partial X$ and the other construction gives a surface that does not extend to $\partial X$.

    We start with the disjoint union of a sphere and a closed surface $\Tilde{\Sigma}_{\pi}\cup\Tilde{\Sigma}_{-\pi}$ of genus $2g$, the result of gluing two minimal Seifert surfaces of $K$ along their boundary, all of which lie in $S^3$. The inside of the sphere (the side which does not contain $\Tilde{\Sigma}_{\pi}\cup\Tilde{\Sigma}_{-\pi}$) is foliated by concentric spheres and a global minimum $p_2$. The inside of the surface $\Tilde{\Sigma}_{\pi}\cup\Tilde{\Sigma}_{-\pi}$ (the side which does not contain the sphere), is foliated by the $+1/2$ disclination and the surrounding surfaces of genus $2g$. We can join the sphere and the surface with an attaching one-handle, which can be foliated as in Theorem \ref{thm:fib_edge} using exactly one critical point. This gives a singular foliation of an embedded solid handlebody $H$ with global minimum, one critical point and one knotted $+1/2$ disclination. Note that, since the disclination loop and surrounding surfaces fill $\Tilde{\Sigma}_{\pi}\cup\Tilde{\Sigma}_{-\pi}$, it is not necessary to have more critical points on the inside. The rest of the proof is identical to the one of Theorem~\ref{thm:fib_edge}. Since $K$ is fibred, the complement of $H$ in $S^3$ is a solid handlebody, which can be foliated using the same foliation as in Theorem~\ref{thm:fib_edge}. It thus contains $2g$ critical points of index 2 and one global maximum $p_1$. There is again a path $\gamma$ from $p_1$ to $p_2$ that is transverse to the leaves of the foliation and using the same ``American Football" method as in Theorem~\ref{thm:fib_edge} to cut and project from $S^3$ to $X$ with smectic boundary conditions.  We thus obtain the desired foliation of $X$ with $2g+1$ critical points.  Note that $\gamma$ does not intersect $H$ and thus does not intersect $\Sigma_0$. Therefore, $\Sigma_0$ does not extend to $\partial X$.

    The second construction is similar and goes as follows. Take two disjoint two-spheres, whose interiors are foliated by concentric spheres and a local extremum each. We can join the two spheres via one attached one-handle, so that we have a foliation of a 3-ball $B$ that involves exactly one critical point of index 1 or 2. The complement of the solid handlebody $H$ bounded by $\Tilde{\Sigma}_{\pi}\cup\Tilde{\Sigma}_{-\pi}$ is again a solid handlbody, because $K$ is fibred. We can foliate this complementary solid handlebody minus $B$ as in the previous section. Since $H$ has genus $2g$ this requires $2g$ critical points of index 1 or 2. There is now a path $\gamma$ that goes from $p_1$ to $\Sigma_0$ and then from $\Sigma_0$ to $p_2$, always transverse to the leaves of the foliation. This is possible because $p_1$ and $p_2$ are global minima (or maxima) and the inside of $\Tilde{\Sigma}_{\pi}\cup\Tilde{\Sigma}_{-\pi}$ plays the role of the global maximum (or minimum) -- the cut passes through $\Tilde{\Sigma}_{\pi}\cup\Tilde{\Sigma}_{-\pi}$, then $\Sigma_0$, then back through $\Tilde{\Sigma}_{\pi}\cup\Tilde{\Sigma}_{-\pi}$ on its way from $p_1$ to $p_2$. By the same method as in the proof of Theorem~\ref{thm:fib_edge} we obtain the desired foliation on $X$ by removing an appropriate neighbourhood of $\gamma$ and applying a homotopy. Since $\gamma$ intersects $\Sigma_0$, the surface that is bounded by the disclination line, the surface $\Sigma_0$ becomes a punctured Seifert surface when $\gamma$ is removed. It thus extends to $\partial X$.

    In particular, for both cases -- corresponding to whether the surface $\Sigma_0$ that is bounded by the $+1/2$ disclination extends to $\partial X$ or not --  we find a foliation with exactly $2g+1$ point defects, the minimal possible number of such points by Theorem~\ref{thmdisc}.  As with edge dislocations, Theorem \ref{thmdisc} and Corollary \ref{thm:fib_12} only apply to knots.  Once again, the only modification for links is that the Euler character, not genus, it additive for $\Tilde{\Sigma}_{\pi}\cup\Tilde{\Sigma}_{-\pi}$, leading to $2g+(b-1)+1$ point defects instead.
\end{proof}

In a similar fashion as edge dislocations, we can characterize the extra point defects required for non-fibred knots in the case of $+1/2$ disclinations.  As in Section \ref{sub:notfibredEDGE}, we need both the `inside' and `outside' of $\Tilde{\Sigma}_{\pi}\cup\Tilde{\Sigma}_{-\pi}$ to be solid handlebodies.  One way to achieve this is to use, from the start, Seifert surfaces of free genus $g_F$ rather than minimal genus $g$.  Since the inside of $\Tilde{\Sigma}_{\pi}\cup\Tilde{\Sigma}_{-\pi}$ is already filled with the knotted $+1/2$ disclination -- whose Seifert surface is now simply of non-minimal genus -- this method only increases the number of point defects for non-fibred knots to $2g_F+1$.  On the other hand, we could start with a minimal-genus Seifert surface and transform $\Tilde{\Sigma}_{\pi}\cup\Tilde{\Sigma}_{-\pi}$ by attaching one-handles so that it bounds a solid handlebody on either side.  Attaching a one-handle adds an additional two Morse points: one critical point to increase the genus of surfaces inside of $\Tilde{\Sigma}_{\pi}\cup\Tilde{\Sigma}_{-\pi}$ and a second for the increased genus of the now-handlebodied complement $S^3\backslash H$.  This results in $2g+2T+1$ point defects.   As in the case of edge dislocations, it is not straightforward to determine which method will result in a smaller number of point defects.

We briefly note that one of our assumptions in the previous was that the surface that is bounded by the $+1/2$ disclination is orientable. While this is an essential assumption in the results of Theorems \ref{thm:fib_12} and \ref{thmdisc}, the assumption of orientability is not necessary to make the local picture of knotted $+1/2$ disclinations, i.e, that of Figure~\ref{Ex_Disc_Knot}, work. The following example shows that, by allowing non-orientable surfaces, we can beat the lower bound on the number of critical points established in Theorem~\ref{thmdisc}.

\

\begin{example}\label{example1}
    Consider the trefoil knot that bounds an annulus with 3 half-twists. This surface is not orientable and therefore Theorem~\ref{thmdisc} does not apply. Surfaces around this annulus are tori, surfaces of genus 1, shown in Figure~\ref{fig:unoriented}. We thus need only two point defects to realise this structure as a $+1/2$ disclination: one originating from the attached one-handle that joins the torus to a sphere surrounding the global minimum $p_1$ in $S^3$, and one to decrease the genus of the level sets, so that the boundary condition on $\partial X$ is satisfied. Therefore, two critical points suffice, while for a $+1/2$ disclination in the shape of a trefoil with an orientable surface we would have needed $2g+1=3$ critical points.
\end{example}

\begin{figure}
    \centering
    \includegraphics[height=5cm]{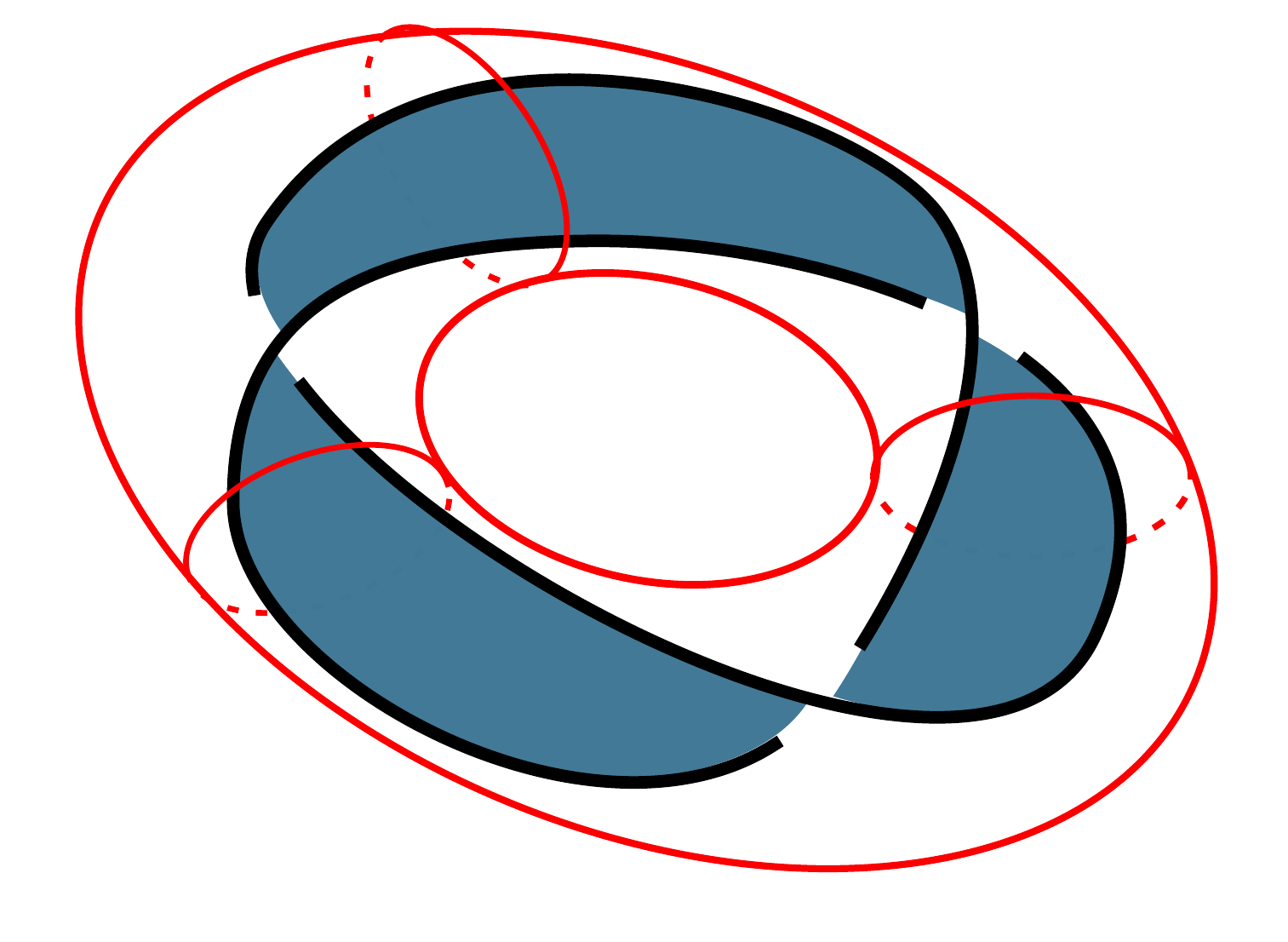}
    \caption{A non-orientable surface bounded by a trefoil knot and a torus that surrounds it, forming a $+1/2$ disclination in the shape of a trefoil knot inside of a handlebody of genus one.}
    \label{fig:unoriented}
\end{figure}

Via a slight modification, we can account for non-orientable surfaces bounding the $+1/2$ disclination (see \cite{kindred} for a discussion of non-orientable surface bounded by knots). Earlier, for oriented surfaces, we created a closed surface, which is the union of two Seifert surfaces $\Tilde{\Sigma}_\pi\cup\Tilde{\Sigma}_{-\pi}$ glued along the knot, by pushing the surface in the positive and negative normal direction, respectively. If the surface $\Sigma_0$ is non-orientable, there is by definition no well-defined global notion of positive and negative normal direction of the surface. However, locally a positive and negative normal direction can be defined. Non-orientability only means that these local choices of sign cannot be all consistent with each other.

It follows that we can push off the surface at every point in both normal directions, positive and negative. For the construction it does not matter which direction is considered positive and which negative. The results is again a closed, oriented surface $\Sigma$ that contains the knot. The only difference to the non-orientable case, where $\Sigma$ was $\Tilde{\Sigma}_\pi\cup\Tilde{\Sigma}_{-\pi}$, is that $\Sigma\backslash K$ is a connected surface, while $(\Tilde{\Sigma}_\pi\cup\Tilde{\Sigma}_{-\pi})\backslash K$ was not connected. It consisted of the disjoint interiors of two Seifert surfaces. We can then push $\Sigma$ away from the knot in the opposite direction of $\Sigma_0$ (as in the oriented case). We thus have again a closed, oriented surface, whose inside is foliated by a knotted $+1/2$ disclination bounding an unoriented surface $\Sigma_0$ and concentric surfaces around $\Sigma_0$. In order to fill the outside of the surface, we again require $g(\Sigma)+1$ critical points of index 1 or 2.

As in the orientable case, we can describe the topology of $\Sigma$, \emph{i.e.}, its genus $g(\Sigma)$, in terms of the original surface $\Sigma_0$. However, since $\Sigma_0$ is not orientable, we should use the Euler characteristic instead of the genus. Note that $\Sigma\backslash K$ is a double cover of $\Sigma_0\backslash K$. Therefore, its Euler characteristic $\chi(\Sigma\backslash K)$ is twice the Euler characteristic of $\Sigma_0\backslash K$. Since $K$ is topologically a circle, which has Euler characteristic 0, we have $\chi(\Sigma)=2\chi(\Sigma_0)$. Since $\Sigma$ is closed and oriented, its genus is then $g(\Sigma)=1-\chi(\Sigma)/2=1-\chi(\Sigma_0)$. It follows that any $+1/2$ disclination bounding an unoriented surface $\Sigma_0$ requires $g(\Sigma)+1=2-\chi(\Sigma_0)$ point defects. Note that this statement generalizes the result on the oriented case, where $g(\Sigma)=2g(\Sigma_0)$ and $\chi(\Sigma_0)=1-2g(\Sigma_0)$.

In particular, for the $(2,k)$-torus knots with $k$ odd, we can take $\Sigma_0$ to be an (non-orientable) annulus with $k$ half-twists. We have $\chi(\Sigma_0)=0$ and thus $g(\Sigma)=1$ as shown in Example \ref{example1} and Figure \ref{fig:unoriented}.  We note that this construction in terms of the double cover of the unoriented surface bears resemblance to the topological classification of nematic textures around knotted disclinations, which are classified according to the homology of the double cover of the knot complement\cite{Machon_Alexander_Knots_1}.

\subsection{Negative-Charge Disclinations}

\begin{figure}
    \centering
    \includegraphics[height=7cm]{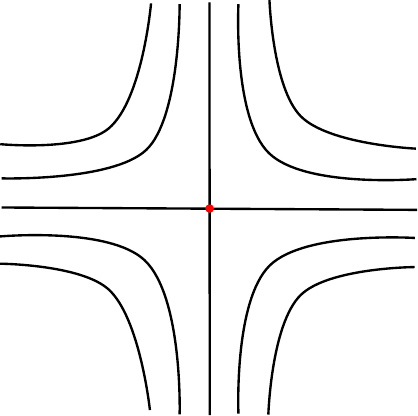}
    \caption{Example of a charge $-1$ disclination.  For such a configuration to occur along a knot, the four separatrix layers must be Seifert surfaces splitting $S^3$ into four components.}
    \label{fig:NEGATIVEdisclination}
\end{figure}

Characterizing the topology of knotted dislocations and $+1/2$ disclinations required studying how the smectic layers associated with the defects split space into handlebodies in need of foliation by smectic layers.  Using this approach, we can directly analyze all negative-charge disclinations.  Consider the local structure of a hyperbolic disclination, shown in Figure \ref{fig:NEGATIVEdisclination}.  The smectic texture around a disclination of charge $-|q|$ contains $2|q|+2$ regions of hyperbolic layers separated by $2|q|+2$ `separatrix' layers whose common boundary is the disclination.  Imagine that the surface structure of a charge $-|q|$ disclination occurs along a knot.  We once again assume that all the smectic layers are orientable, and we also stipulate that there is only integer twisting of the separatrix layers and hyperbolic regions.  In other words, each smectic layer returns to itself after traversing the entire knot rather than `mixing' with other, symmetric regions of the disclination.  This implies that the separatrices are $2|q|+2$ Seifert surfaces for the knot splitting $S^3$ into $2|q|+2$ components.  Filling in each component with smectic layers requires, once again, point defects.  

\ 

\begin{theorem}\label{thmnegdisc}
    Under these assumptions and the assumptions of Theorem \ref{thmdisc}, knotted smectic disclinations of charge $-|q|$ require at least $2g(2|q|+2)$ index 1 or 2 point singularities and $2|q|$ index 0 or 3 point singularities, where $g$ is the minimum genus of a Seifert surface for the given knot.  
\end{theorem}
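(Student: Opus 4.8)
The plan is to run the Morse-theoretic argument of Theorems~\ref{thmedge} and~\ref{thmdisc} one region at a time, the regions now being the $m:=2|q|+2$ ``wedges'' into which the separatrices of the disclination cut $S^3$. First I would record the local picture: under the stated hypotheses --- $K$ is the only line defect, all layers are orientable, and the $2|q|+2$ separatrix and hyperbolic regions twist integrally (do not mix) along $K$ --- the separatrix layers $\Sigma^1,\dots,\Sigma^m$ are honest Seifert surfaces of $K$, each of genus $\geq g$; they meet only along $K$, so their union $L_0=\Sigma^1\cup\cdots\cup\Sigma^m$ is a single singular leaf of the foliation; and $S^3\setminus L_0$ has exactly $m$ connected components $R_1,\dots,R_m$, cyclically arranged with $\partial R_i=\Sigma^i\cup_K\Sigma^{i+1}$ (indices mod $m$). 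As in Section~\ref{sub:MorseAndProjection} I would embed the cube $X$ in $S^3$ and lift the phase field to a real-valued function $\tilde F$ that is Morse off $K$, arranging $\tilde F\equiv 0$ on $L_0$ and creating a global maximum $p_1$ and global minimum $p_2$ (the two auxiliary critical points not counted as liquid-crystal point defects); the case in which some separatrix meets $\partial X$ is absorbed into the embedding exactly as in the two-case analysis in the proof of Theorem~\ref{thmdisc}.

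The crucial structural input is that $\tilde F$ has a single sign on each $R_i$. From the local normal form $\Phi=\mathrm{Re}\big((x+\mathrm{i}y)^{|q|+1}\big)$ of a charge $-|q|$ disclination one sees that the hyperbolic layers in consecutive sectors alternate between $\tilde F>0$ and $\tilde F<0$; since $m$ is even this partitions the regions into $|q|+1$ \emph{positive} regions $R^+_1,\dots,R^+_{|q|+1}$ and $|q|+1$ \emph{negative} regions $R^-_1,\dots,R^-_{|q|+1}$. On the closure of each region $\tilde F$ equals $0$ only on the separatrix part of its boundary, so its restriction attains an \emph{interior} extremum: an index-$3$ critical point (with small surrounding $2$-spheres) on each positive region and an index-$0$ critical point on each negative region. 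Moreover, for small $\varepsilon>0$ the level set $\tilde F^{-1}(\pm\varepsilon)\cap R_i$ is obtained from $\partial R_i=\Sigma^i\cup_K\Sigma^{i+1}$ by smoothing the corner along $K$, so by the Euler-characteristic computation used after Theorem~\ref{thmedge} it is a connected closed surface of genus $g(\Sigma^i)+g(\Sigma^{i+1})\geq 2g$.

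Given all this, the count is immediate from the Morse lemma (Table~\ref{table:1}), applied inside each region. On a positive region $R^+_j$, as $\tilde F$ rises from $\varepsilon$ to the value of its interior maximum, the level set passes from a genus-$\geq 2g$ surface to a $2$-sphere; since only index-$2$ points decrease the genus (by at most $1$ each), $R^+_j$ contains at least $2g$ critical points of index $2$, whence at least $2g(|q|+1)$ in all. Symmetrically each negative region forces at least $2g$ index-$1$ points, for $2g(|q|+1)$ more; together this is at least $2g\big(2(|q|+1)\big)=2g(2|q|+2)$ point defects of index $1$ or $2$. For the extrema, each of the $m=2(|q|+1)$ regions contributes at least one interior index-$0$ or index-$3$ critical point; exactly one of these is $p_2$ and one is $p_1$, so at least $2(|q|+1)-2=2|q|$ of them are genuine point defects. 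This gives the two claimed bounds. As with the earlier theorems, the genus additivity is what pins us to knots: for a $b$-component link one replaces $2g$ by $2g+b-1$ throughout, giving $2(2g+b-1)(|q|+1)$ index-$1$-or-$2$ points and again $2|q|$ index-$0$-or-$3$ points.

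I expect the main obstacle to be the reduction rather than the count: namely, justifying --- from the ``no other line defects / orientable layers / integral twisting'' hypotheses --- that the separatrices really do assemble into the single singular leaf $L_0$ with complement exactly $R_1\sqcup\cdots\sqcup R_m$ and boundaries $\Sigma^i\cup_K\Sigma^{i+1}$, and that the foliation near $K$ may be taken in the standard hyperbolic form so that $\tilde F$ is Morse on each open region with level sets there of the asserted topology. This is the analogue of the delicate ``run the edge-dislocation construction in reverse'' step in the proof of Theorem~\ref{thmedge}; here the same homotopies work sector by sector around $K$, but the bookkeeping with $2|q|+2$ sectors --- and the attendant boundary-versus-interior extremum argument that produces the $2|q|$ index-$0$/index-$3$ defects --- is where care is needed.
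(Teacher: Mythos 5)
Your proposal is correct and follows essentially the same route as the paper: decompose $S^3$ into the $2|q|+2$ regions cut out by the separatrices, apply the Morse-lemma genus count of Theorem~\ref{thmedge} to each region (each bounded by a closed surface of genus at least $2g$, hence needing $2g$ saddles and one interior extremum), sum over regions, and subtract the two extrema consumed by the projection from $S^3$ to $X$. The only difference is that you spell out details the paper leaves implicit (the sign alternation across sectors and the per-region index bookkeeping), which is a faithful elaboration rather than a different argument.
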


\begin{proof}
    This follows immediately from the foliation of each of the $2|q|+2$ components in $S^3$.  The results from Theorem \ref{thmedge} on edge dislocations show we require at least $2g$ genus-changing index 1 or 2 critical points and one maximum or minimum to foliate a component of $S^3$ bounded by two Seifert surfaces.  Adding up the $2|q|+2$ components leads to a singular foliation of $S^3$ with a knotted charge $-|q|$ disclination alongside at least $2g(2|q|+2)$ saddles and $2|q|+2$ maxima or minima.  Recall that projection of such a singular foliation on $S^3$ to a configuration on $X$ (a cube in $\mathbb{R}^3$) with smectic boundary conditions requires isolating a maximum and minimum as the north and south poles on $S^3$ and cutting between them.  This reduces the number of index 0 or 3 critical points in $X$ by $2$, leaving behind $2g(2|q|+2)$ index 1 or 2 critical points and $2|q|$ index 0 or 3 critical points.  Thus, a smectic configuration in $\mathbb{R}^3$ with a knotted disclination of charge $-|q|$ requires at least $2g(2|q|+2)$ saddle point defects and $2|q|$ point defects of the form of concentric spheres. 
\end{proof}

Once again, we find that knotted disclinations of negative charge also have focal conic-like point defects, \emph{i.e.}, an even number of index 1 and index 2 Morse points.  Although, unlike $+1/2$ disclinations and dislocations, knotted negative-charge disclinatinos also require point defects in the form of concentric spheres (index 0 and 3 critical points).  These are required because each hyperbolic region requires one such point, and the projection from $S^3$ to $\mathbb{R}^3$ only rids the system of two.  The same considerations of fibredness for edge dislocations and $+1/2$ disclinations apply to negative-charge disclinations as well: each of the $2|q|+2$ components is a handlebody only when the knot is fibred, leading to saturation of the lower bound.  When the knot is not fibred, all but one component can be made a handlebody (again, by thickening a Seifert surface and using the resulting handlebody for $2|q|+1$ components).  This can be fixed in an analogous way to edge dislocations, either by adding tunnels or by starting with free genus Seifert surfaces.

For the special case that $|q|=0$, there is no disclination in the system.  Instead, two Seifert surfaces glued along their boundary act as separatrices for two `hyperbolic' regions, leading to $4g$ (index 1 or 2) point defects.  This is precisely the topology of an edge dislocation: thickening one of the Seifert surfaces into interval replicates the construction from Section \ref{sec:edge_dislocations}.  Note that one can take any knotted $-|q|$ disclination and thicken the separatrices into an interval of surfaces, creating regions of both hyperbolic and radial surface structure along the defect.  Since this does not change the number of point defects singularities for the foliation, the results of Theorem \ref{thmnegdisc} are unaffected by the thickened, radial sections.  A knotted singular line with both radial and hyperbolic surface structure is known as an broken book decomposition\cite{ColinDehornoyRechtman}, a generalization of open book decompositions where the surfaces are purely radial.  We note that, consequently, our results apply to singular broken book decompositions of $S^3$: given a knot type and fixed local (radial and hyperbolic) profile, Theorem \ref{thmnegdisc} and the corresponding considerations for fibredness classify the required number of singular points, with charge $-|q|$ disclinations and dislocations as a special cases.  To our knowledge, this is the first explicit result of this type.  

\section{Discussion}

In this work, we have analyzed the topological properties of knotted defects in the smectic phase.  We started by describing how the radial defect structure of a screw dislocation can be accommodated on $S^3$ by knot fibrations.  To understand what happens for knots without fibration, we turned to edge dislocations, where the non-zero genus of smectic layers causes at least $4g$ point defects to be present alongside any knot, where $g$ is the minimum genus of a Seifert surface for the given knot.  These defects, which increase or decrease the genus of smectic layers, are the point defects of focal conic domains.  By showing that edge dislocations require further defects beyond the initial $4g$ when the knot does not have a fibration, we were able to reinterpret the Morse-Novikov points required to form screw dislocations in the shape of non-fibred knots as the same focal conic-like point defects.  By analyzing knotted $+1/2$ and negative-charge disclinations, we further classified the point defect structure induced by knotted smectic defects.  In particular, the $2g+1$ point defects required to knot a $+1/2$ disclination acts as a topological generalization to focal conic domains, furthering the connection between knotted smectic defects and focal conics.

These results have implications for the theory of smectic defects writ large.  The intricate topology brought about by layered structure of the smectic marks a sharp contrast to knotted defects in other ordered media, namely the cousin nematic phase.  For example, the requirement of point defects for knotted disclinations in the smectic phase can be compared directly to knotted disclinations in nematic liquid crystals, where the global structure of the director field may be topologically interesting\cite{Machon_Knots_Review} but smooth.  Further, we note that, while nematic $+1/2$ and $-1/2$ disclination lines are topologically equivalent (same element of $\pi_1(\mathbb{R}P^2)\cong\mathbb{Z}_2$), they are distinct in the smectic phase.  This leads to a different global topology when knotting smectic disclinations of charge $+1/2$ compared to $-1/2$, \emph{i.e.}, a different number (and type) of point defects.  The contrast in topological structure between knotted screw and edge dislocations serves as an striking implication of the topological distinction between the two defects\cite{Severino_Kamien}, but the mathematics of fibred knots underpins the study of both.  Further, the requirement of at least $4g$ point defects for knotted smectic edge dislocations implies that point defects must be created in order to manually rewire and knot an unknotted edge dislocation.  At large, the behaviour of knotted smectic defects, as outlined by this paper, demonstrates the topological complexities of the smectic phase not captured by the traditional homotopy theory of defects. 

By connecting the topology of knotted smectic defects to knot and Morse-Novikov theory, this paper motivates questions for mathematicians.  The following questions are, to our knowledge, open questions that have implications for the topology of knotted smectic defects.

\begin{itemize}
    \item \underline{Properties of Morse-Novikov Points}: We have outlined how screw dislocations are sensitive to the Morse-Novikov number of a knot, as the minimum number of point defects required (in $\mathbb{R}^3$) to knot a screw defect is $MN(K)+2$.  While the Morse-Novikov number is difficult to compute in general, what can be said about the topology of singular open book decompositions of $S^3$ with exactly $MN(K)$ point singularities?  How many topologically distinct configurations with $MN(K)$ singularities exist for a given knot?  Does a singular open book decomposition for a non-fibred knot exist with $MN(K)$ singular points and level sets of arbitrary genus (at least the genus of $K$)? 
    \item \underline{Genus and (Non-)Fibredness}: While screw dislocations are sensitive to the Morse-Novikov number, edge dislocations and disclinations are sensitive to both fibredness and the genus of the Seifert surfaces.  In order to minimize the total number of point defects, knotted edge dislocations and disclinations thus require not just the requisite Heegaard splitting but one utilizing the smallest possible genus.  Since the maximal genus of a singular fibration provides a Heegaard splitting of $S^3$, $g_{\text{max}}^{MN}\geq g_F$.  Is it always the case that a singular fibration with minimal $MN(K)$ point singularities exists such that $g_{\text{max}}^{MN}=g_F$?  Is there always a singular fibration with $g_{\text{min}}^{MN}=g$?  This is asked, for instance, in \cite{BakerHandleNumber} but is an open question -- Goda showed that there are minimal genus Seifert surfaces that are not minimal genus level sets of any singular fibration \cite{GodaHandleNumber}. However, there could be other Seifert surfaces of the same genus that are level sets of a singular fibration.  The answers to these questions have direct implications for the minimal number of point defects required for non-fibred edge dislocation and disclination knots.  For example, a non-fibred $+1/2$ disclination knot may minimize its number of point defects by using a free genus Seifert surface ($2g_F+1$) rather than using the topology of a singular fibration ($2g_{\text{max}}^{MN}+1$).

    \item \underline{Broken Book Decompositions}: While there are many examples of knotted structures in physical media in which open book decompositions (knot fibrations) are relevant, our work on knotted edge dislocations and charge $-|q|$ disclinations provides the first example of broken book decompositions in a physical context.  The results of Theorem \ref{thmnegdisc} on the minimal number of point singularities required to knot these smectic defects constitute special cases of singular broken book decompositions of $S^3$.  How do our results fit into the broader classification of (potentially singular) broken book decompositions of three-manifolds?
    
    \item \underline{Non-Orientability}: In order to apply basic knot and Morse theory, one needs to assume that all smectic layers are orientable.  Example \ref{example1} demonstrates how non-orientable surfaces may be used to beat the lower bound of point defects required for knotted $+1/2$ disclinations.  In this instance, we were still able to place a lower bound on the number of point defects required, even when the surface bounded by the $+1/2$ disclination is non-orientable.  Can the inclusion of non-orientable surfaces with knotted boundaries be generalized for foliations with both orientable and non-orientable surfaces?  Can notions from singular open and broken book decompositions be generalized to include generically non-orientable surfaces?  We have showed that the minimal number of point defects required to knot smectic defects depends on whether the knot is fibred.  How might the inclusion of non-orientable surfaces change these considerations regarding non-fibred knots?

    \item \underline{Changing Defect Type}: We give a classification of the topology of smectic defects that retain the same defect type everywhere along the knot -- \emph{i.e.}, either edge or screw, but not a combination of the two.  However, it is possible, for example, for a straight-line dislocation to change from edge to screw at an additional singular point\cite{Disclination_Pairs}.  How our results generalize to include defect structures that change along the knot and how these additional singular points interact with those required for defect knottedness is a subject of future work.

\end{itemize}

We conclude by remarking about the connections between knotted smectic defects and knot theory.  Not only do we demonstrate a deep connection between ideas in modern knot theory and the topological properties of smectic defects, but our results show that the defect structure of knotted smectic defects explicitly depends on knot fibrations and, for example, the Morse-Novikov number.  Liquid crystals, due to their optical properties, are well known for displaying nuanced ideas in topology \emph{visually} -- \emph{i.e.}, through images taken under the microscope.  Since the focal conic-like point defect structure associated with knotted smectic defects would be, in principle, observable, knotting smectic defects provides a physical playground for knot and Morse-Novikov theory.  In particular, by counting the number of point defects associated with a knotted smectic defect and comparing to the prediction from the known minimal genus, smectic liquid crystals potentially provide an experimental test for knot fibredness.  Note that, while our results provide lower bounds on the number of point defects present in the system, liquid crystalline configurations often admit additional defects beyond those required from topological considerations in order to provide elastic relief.  However, it is not unreasonable to assume that a minimally distorted texture will energetically prefer to minimize the number of defects.  In this light, we aspire to experimentally realize the Morse-Novikov number of a knot by knotting a smectic defect and observing the number of point defects that form.

\subsection*{Acknowledgements and Funding}
P.G.S. was supported by a National Science Foundation Graduate Research Fellowship.  R.D.K and P.G.S. were supported by a Simons Investigator grant to R.D.K. from the Simons Foundation.  B.B. was supported by the European Union’s Horizon 2020 research and innovation programme through the Marie Sklodowska-Curie grant agreement 101023017.


\bibliography{KnotRef}

\end{document}